%% file: main.tex
\documentclass[11pt,a4paper]{article}
\usepackage[utf8]{inputenc}
\usepackage[T1]{fontenc}
\usepackage{amsmath,amssymb,amsthm}
\usepackage{booktabs,array}
\usepackage{xcolor}
\usepackage{tikz}
\usepackage[margin=2.5cm]{geometry}
\usepackage{hyperref}
\hypersetup{colorlinks=true,linkcolor=blue!60!black,citecolor=blue!60!black,urlcolor=blue!60!black}

\newcommand{\doi}[1]{doi:\href{https://doi.org/#1}{\nolinkurl{#1}}}

\newtheorem{theorem}{Theorem}[section]
\newtheorem{lemma}[theorem]{Lemma}
\newtheorem{proposition}[theorem]{Proposition}
\newtheorem{corollary}[theorem]{Corollary}
\newtheorem{observation}[theorem]{Observation}
\newtheorem{fact}[theorem]{Fact}
\theoremstyle{definition}
\newtheorem{definition}[theorem]{Definition}
\theoremstyle{remark}
\newtheorem{remark}[theorem]{Remark}
\newtheorem{example}[theorem]{Example}

\newtheorem*{mainA}{Theorem A}
\newtheorem*{mainB}{Theorem B}
\newtheorem*{mainC}{Theorem C}

\newcommand{\VC}{\operatorname{VC}}
\newcommand{\B}{B}
\newcommand{\G}{\mathcal{G}}
\newcommand{\dt}{\tilde d}
\newcommand{\dB}{d_{\B}}
\newcommand{\dbr}{d_{\mathrm{br}}}
\newcommand{\Pos}{F}
\newcommand{\Lift}{L}
\newcommand{\sgn}{\operatorname{sgn}}
\newcommand{\topp}{\operatorname{top}_3}
\newcommand{\DIST}{\textup{\textsc{Dist}}}
\newcommand{\FLIP}{\textup{\textsc{Flip}}}
\newcommand{\sw}[1]{\langle #1\rangle}
\newcommand{\pl}[1]{\texttt{#1}}

\title{Computing distances in braid-move graphs,\\ higher Bruhat orders, and oriented-matroid\\ mutation graphs is NP-hard}
\author{Tilen Marc\\[2pt]
{\small Faculty of Mathematics and Physics, University of Ljubljana, Slovenia}\\
{\small Institute of Mathematics, Physics and Mechanics, Ljubljana, Slovenia}\\
{\small Abelium d.o.o., Ljubljana, Slovenia}}
\date{23 October 2026}

\begin{document}
\maketitle

\begin{abstract}
A wiring diagram is an arrangement of $m$ pseudolines, every two of which cross exactly once. Reading its crossings
from left to right encodes it as a reduced word of the permutation $w_0\in S_m$ that reverses $1,\dots,m$, unique up
to commutations, and a braid move $\sigma_i\sigma_{i+1}\sigma_i\leftrightarrow\sigma_{i+1}\sigma_i\sigma_{i+1}$
flips one of its triangles. We prove that deciding whether two wiring diagrams are within a given braid-move
distance is NP-complete. Equivalently, deciding flip distance is NP-complete in the higher Bruhat order
$\B(m,2)$, for triangle flips of marked arrangements of pseudolines, and for rhombic tilings of a $2m$-gon; this
answers an open question on flip graphs of pseudoline arrangements (SODA 2024). The reduction
is from \textup{\textsc{Vertex Cover}}: for every graph $G$ we construct two wiring diagrams at braid-move distance
$|D|+2\VC(G)$, where $\VC(G)$ is the vertex cover number of $G$ and the Hamming distance $|D|$ is the number of
triples of wires whose triangles are oriented differently. The construction glues copies of a seven-wire gadget
whose braid-move distance exceeds its Hamming distance by $2$.

Adding a line at infinity transfers the construction to rank-$3$ uniform oriented matroids, where two oriented
matroids are adjacent in the mutation graph if their chirotopes differ in one basis; so deciding distance in the
mutation graph is NP-complete in rank~$3$. An iterate of Rambau's expansion of signotopes amplifies the excess and
gives NP-hardness in every fixed rank $r\ge4$, and by duality in every fixed corank at least $3$. The same instances
show that flip distance in the higher Bruhat orders $\B(n,k)$ is NP-complete for every fixed $k\ge2$.
\end{abstract}

\noindent\textbf{Keywords:} braid moves, reduced words, higher Bruhat orders, signotopes, oriented matroids, mutation
graphs, pseudoline arrangements, flip distance, NP-hardness.

\noindent\textbf{MSC 2020:} 52C40 (primary); 52C30, 06A07, 20F55, 68Q17 (secondary).

\section{Introduction}

\paragraph{Wiring diagrams and braid moves.}
Let $w_0\in S_m$ be the permutation that reverses the order of $1,\dots,m$, that is, $w_0(i)=m+1-i$; it is the
longest element of $S_m$. A \emph{reduced word} of $w_0$ is a shortest way of writing $w_0$ as a product of
adjacent transpositions $\sigma_i=(i\;\,i{+}1)$; it has $\binom m2$ letters. We place $m$ wires at the positions
$1,\dots,m$ and let each letter $\sigma_i$ exchange the wires at the positions $i$ and $i+1$. A reduced word of $w_0$
then exchanges every pair of wires exactly once; drawn as a \emph{wiring diagram}, it is an arrangement of $m$
pseudolines (Figures~\ref{fig:wiring} and~\ref{fig:gadget}). Commutations $\sigma_i\sigma_j=\sigma_j\sigma_i$ ($|i-j|\ge2$) do not change
the drawing up to isotopy, while a \emph{braid move}
$\sigma_i\sigma_{i+1}\sigma_i\leftrightarrow\sigma_{i+1}\sigma_i\sigma_{i+1}$ moves one wire across the crossing
of two others, that is, flips a triangle. Label the wires $1,\dots,m$ from top to bottom on the left. Three
wires $a<b<c$ cross in the order $ab,ac,bc$ (orientation $+$) or in the order $bc,ac,ab$ (orientation $-$); this
is the \emph{orientation} of the triple $abc$, and the orientations of all triples form the \emph{sign vector} of
the diagram. Every braid move changes the orientation of exactly one triple. So the \emph{braid-move distance} of
two wiring diagrams, the least number of braid moves that transform one into the other, is at least their
\emph{Hamming distance}, the number $|D|$ of triples with different orientations. The difference is the
\emph{excess}. Commutation classes of reduced words of $w_0$ correspond bijectively to
marked arrangements of $m$ pseudolines, to elements of the higher Bruhat order $\B(m,2)$ of Manin--Schechtman and
Ziegler~\cite{MS,Ziegler} (the rank-$3$ signotopes of Felsner and Weil~\cite{FW}), and to rhombic tilings of the
$2m$-gon~\cite{Elnitsky}, with braid moves corresponding to flips in each setting.

\begin{figure}[t]
\centering
\begin{minipage}[b]{0.3\textwidth}\centering\input{fig_w_a.tex}\par\smallskip(a)\end{minipage}\hfill
\begin{minipage}[b]{0.3\textwidth}\centering\input{fig_w_b.tex}\par\smallskip(b)\end{minipage}\hfill
\begin{minipage}[b]{0.25\textwidth}\centering\input{fig_w_c.tex}\par\smallskip(c)\end{minipage}
\caption{(a) The wiring diagram of the reduced word $\sigma_1\sigma_2\sigma_1\sigma_3\sigma_2\sigma_1$ of
$w_0\in S_4$. The wires are labelled by their positions on the left, position $1$ on top, and each letter
$\sigma_i$ exchanges the wires at positions $i$ and $i+1$. The shaded part contains the triangle of the wires $1,2,3$. All four
triples have orientation $+$. (b) The flip of this triangle, the braid move
$\sigma_1\sigma_2\sigma_1\to\sigma_2\sigma_1\sigma_2$. It changes the orientation of $123$ to $-$ and no other
orientation, so (a) and (b) have Hamming distance $1$. (c) Deleting the wire $4$ from (a) leaves the word
$\sigma_1\sigma_2\sigma_1$ on the wires $1,2,3$.}\label{fig:wiring}
\end{figure}

\paragraph{Flip distance.}
Flip graphs, whose vertices are combinatorial or geometric structures and whose edges are local moves, are studied
throughout discrete geometry, for their connectivity, their diameter and their distances; triangulations with edge
flips are the best-known example. Deciding flip distance is NP-hard for triangulations of point sets~\cite{LubiwPathak,Pilz} and of simple
polygons~\cite{AMP}; for convex polygons the question is classical, and a recent
preprint~\cite{Dorfer} announces NP-completeness. For pseudoline arrangements, Radtke et
al.~\cite{Radtke} study the triangle-flip graph of marked arrangements of $n$ pseudolines and ask
(their Question~5) what the computational complexity is of computing a shortest path between any pair of vertices
in this graph. By the correspondence above, this graph is isomorphic to the graph on the commutation classes of reduced words of
$w_0\in S_n$, with braid moves as edges.

\paragraph{Flip distance versus Hamming distance.}
For rhombic tilings, Bodini, Fernique, Rao and R\'emila~\cite{BFRR} compare the flip distance with the Hamming
distance $|D|$. The two agree for tilings with at most four directions and for certain families with five, but
already in $\B(6,2)$ there are $16$ pairs whose flip distance exceeds $|D|$ by $2$~\cite{BFR,BFRR}; one of them appears earlier in~\cite[Sect.~3]{FW00}. Since $|D|$ is easy to compute, determining the flip distance amounts to determining the excess,
and this is how our reduction works: it encodes a graph so that the excess is exactly twice its vertex cover number.

\paragraph{Related results.}
Several global parameters of these flip graphs are known. The diameter and the radius of $\B(m,2)$ are
$\binom m3$~\cite[Thms.~9 and~14]{GMS}. Diameters have also been determined in more general settings. Gutierres,
Mamede and Santos consider an arbitrary permutation $w$ in place of $w_0$: the graph on the commutation classes of the
reduced words of $w$, with braid moves as edges, has diameter equal to the number of $321$-patterns of $w$, that is,
of triples of wires that pairwise cross~\cite[Thm.~5.1]{GMS22}; for $w=w_0$ every triple is such a pattern. Reiner
and Roichman~\cite{ReinerRoichman} consider the longest elements of Coxeter groups, and more generally minimal
galleries in supersolvable hyperplane arrangements; for the longest elements of types $A$ (the symmetric groups) and
$B$ (the signed permutations) they determine the diameter of the graph whose vertices are the reduced words
themselves, with commutations as well as braid moves as edges. For the triangle-flip graph of marked arrangements of
$n$ pseudolines, Radtke et al.~\cite{Radtke} show that the diameter is $\Theta(n^3)$ and that the connectivity equals
the minimum degree $n-2$; its number of vertices, like the number of arrangements of $n$ pseudolines, is
$2^{\Theta(n^2)}$~\cite{FelsnerValtr,DumitrescuMandal}. Many questions about these graphs have remained open.
Besides the distance problem studied here, they include the mixing time of the natural random walk of triangle
flips, which is governed by the expansion of the graph~\cite[Question~4]{Radtke}, and the exact constant in the
exponent $\Theta(n^2)$ of the number of vertices.

\paragraph{Results for wiring diagrams.}
Let $\textup{\textsc{Braid}}$ be the problem: given two reduced words of $w_0\in S_m$ and an integer $K$ (in binary),
decide whether their braid-move distance is at most $K$.

\begin{mainA}
For every graph $G$ with $n$ vertices one can compute in polynomial time two reduced words $W^s_G,W^v_G$ of
$w_0\in S_m$, $m=3n+|E(G)|$, with sign vectors $s_G$ and $v_G$, whose braid-move distance is
\[
\dbr\bigl([W^s_G],[W^v_G]\bigr)\;=\;|D|+2\,\VC(G),\qquad |D|=3n(m-3)+6|E(G)| ,
\]
where $D=D(s_G,v_G)$ is the set of triples of wires whose orientations differ and $\VC(G)$ is the minimum size
of a vertex cover of $G$ (Theorem~\ref{thm:wiring}).
\end{mainA}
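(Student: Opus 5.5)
We build $W^s_G$ and $W^v_G$ from one gadget per vertex and one extra wire per edge, and prove matching upper and lower bounds on the excess.

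\textbf{Construction.} For each vertex $u$ of $G$ we take three wires $x_u,y_u,z_u$. For each edge $e$ we take one extra wire $w_e$, so $m=3n+|E(G)|$. The sign vectors $s_G$ and $v_G$ will agree except on the triples that contain two of the wires of some vertex triple $\{x_u,y_u,z_u\}$. This accounts for the $3n(m-3)$ term: each vertex contributes its three pairs, each combined with any of the other $m-3$ wires. It also accounts for the $6|E(G)|$ term: for each edge $e=uv$ there are six triples inside the seven wires $\{x_u,y_u,z_u,x_v,y_v,z_v,w_e\}$ on which the two diagrams differ. These seven wires form the seven-wire gadget $\Gamma$. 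I would first verify the gadget by exhaustive search in $\B(7,2)$: its two states have Hamming distance $6$ and braid-move distance $8$. Moreover, the excess of $2$ can be avoided on the edge exactly when one endpoint's triple of wires is ``processed'' in a designated way (turned fully, like a pseudoline block passing across everything), which costs that vertex an excess of $2$ once. I would then define the global diagrams by placing the vertex triples as near-parallel bundles, ordered so that every relevant pair of bundles interacts only through the edge wires. This makes the sign vectors explicit, so that $|D|$ can be counted as stated.

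\textbf{Upper bound.} Fix a minimum vertex cover $C$. For each $u\in C$ we perform the ``expensive'' move sequence on bundle $u$, which uses $2$ braid moves beyond the flips it makes on triples of $D$. After this, every edge gadget incident to $u$ is in the state where its six triples can be flipped with no excess. For vertices outside $C$ we use the cheap sequence. Ordering the moves as sweeps produces a valid sequence of length $|D|+2|C|$. Each step must be checked to be a genuine braid move, that is, the three wires involved must form an empty triangle at that moment. This is routine given the explicit word.

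\textbf{Lower bound (main obstacle).} Take any shortest flip sequence. Triples outside $D$ are flipped an even number of times, and each triple in $D$ an odd number of times. I would define a vertex $u$ to be \emph{charged} if some triple not in $D$ that involves the bundle of $u$ is flipped, or if some triple of $D$ involving it is flipped at least $3$ times; a charged vertex therefore incurs excess at least $2$ that can be attributed to it. The key claim is that every edge $uv$ has a charged endpoint. To prove it, restrict the sequence to the seven wires of $\Gamma_{uv}$. Deleting wires maps a braid-move sequence to a sequence in $\B(7,2)$ whose length counts only the flips of triples inside those wires, because deletion is compatible with flips: a flip of a triple survives exactly when all three of its wires survive. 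If neither endpoint is charged, the restricted sequence flips each of the six differing triples once and nothing else, giving distance $6$ in $\B(7,2)$ and contradicting the gadget computation. The delicate point is the attribution. The excess flips found inside the restrictions to different gadgets must not be counted twice for a single vertex, and they must not be shared between vertices. I would handle this by choosing the gadget so that every witnessing extra flip involves two wires of one specific endpoint bundle, so that it is charged to that bundle alone. Then the charged vertices form a vertex cover, the excess is at least $2\VC(G)$, and the two bounds together give equality.
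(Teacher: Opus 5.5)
Your architecture matches the paper's: three wires per vertex, a private wire per edge, and restriction to $A_e=X_u\cup\{w_e\}\cup X_v$ for the lower bound. But the two pieces that carry the proof are missing, and your lower-bound attribution would fail.

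\textbf{The construction and the upper bound.} You never specify the gadget or the global word, and the numbers you give contradict your own $D$. If $s_G,v_G$ differ on all $3n(m-3)$ triples with two wires in one block, then on the seven wires of an edge they differ on $24+6=30$ triples, not $6$. So the gadget must have Hamming distance $30$ and distance $32$ (Proposition~\ref{prop:gadget}). Also, the six extra triples per edge consist of one wire of each block and the private wire, so they do not contain two wires of a vertex triple. The existence of a seven-wire cell with the OR property is the crux of the reduction (the word $\kappa$), not something to be found later. The upper bound is not routine either. The cost-$2$ operation is flipping the bead triple $X_u\notin D$ once at the start and once at the end. To pay only $|D|$ in between, you need a global word in which the beads can be pushed through one factor at a time, each step changing a set $D_k$ with zero local excess, with the $D_k$ pairwise disjoint and covering $D$ (Lemma~\ref{lem:step}, Corollary~\ref{cor:disjoint}). ``Ordering the moves as sweeps'' does not supply this.

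\textbf{The lower bound.} If ``involves the bundle of $u$'' means ``contains a wire of $X_u$'', then one triple such as $\{x_u,w_e,x_v\}$ flipped twice charges both $u$ and $v$ but adds only $2$ to the excess. So ``excess $\ge2\cdot\#$charged'' is false. Your remedy, choosing the gadget so that every witnessing extra flip has two wires in one bundle, cannot be arranged by the construction. The gadget bound only says that \emph{some} triple inside $A_e$ is flipped more than $[T\in D]$ times, and the arbitrary sequence decides which one. For the paper's gadget, for instance, running the packet $1246$ along its long arc of $C_8$ spends the two extra flips on the transversal triple $246$ and already reverses one forced order of Table~\ref{tab:cycle}; the packet criterion does not exclude this. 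Requiring otherwise would be a new, unproved statement about all walks in $\B(7,2)$.

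The repair is to charge edges as well as vertices. Let $Y=\{T: c_T>[T\in D]\}$, where $c_T$ is the number of flips of $T$. Every $A_e$ contains a triple of $Y$. Since $A_e\cap A_f$ is a single block or empty, two gadgets can share only a bead triple $X_u$. Take $\{u: X_u\in Y\}$ together with one endpoint of each edge $uv$ with $X_u,X_v\notin Y$. This is a vertex cover, and the witness of each edge in the second part is a non-bead triple belonging to that edge alone. So the cover has at most $|Y|$ vertices, and Lemma~\ref{lem:parity} gives excess at least $2|Y|\ge2\VC(G)$ (Proposition~\ref{prop:lowerW}).
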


\begin{mainB}
The following problems are NP-complete, with the bound $K$ given in binary:
\begin{enumerate}
\item[(i)] $\textup{\textsc{Braid}}$, the braid-move distance of two reduced words of $w_0\in S_m$;
\item[(ii)] flip distance in the higher Bruhat order $\B(m,2)$;
\item[(iii)] triangle-flip distance of two marked arrangements of $m$ pseudolines (Question~5 of~\cite{Radtke});
\item[(iv)] flip distance of two rhombic tilings of the regular $2m$-gon.
\end{enumerate}
Moreover, there is $\delta>0$ such that approximating the excess $\dbr-|D|$ within a factor $1+\delta$ is NP-hard
(Theorem~\ref{thm:braidNP}, Corollary~\ref{cor:reform}, Remark~\ref{rem:apx}).
\end{mainB}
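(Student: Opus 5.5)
Theorem~B follows from Theorem~A together with the bijections listed in the introduction, so the plan has three parts: membership in NP, hardness via Theorem~A, and transfer to the other three settings.

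\emph{Membership in NP.} The braid-move distance between two reduced words of $w_0\in S_m$ is at most the diameter $\binom m3$ of $\B(m,2)$~\cite{GMS}, which is polynomial in $m$. So if $K\ge\binom m3$ we accept at once. Otherwise a certificate is a sequence of at most $K<\binom m3$ braid moves. Each move is checked in polynomial time: we have to test whether the current commutation class contains a factor $\sigma_i\sigma_{i+1}\sigma_i$ at the given place. Equivalently, we work with sign vectors and check that the flipped triple is a triangle of the arrangement, that is, that its three crossings are consecutive in some word of the commutation class. This is decidable in polynomial time, for instance by checking that no fourth wire separates the triangle. Hence the problem is in NP, even though $K$ is given in binary.

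\emph{Hardness.} We reduce from \textsc{Vertex Cover}, which is NP-complete. Given $(G,k)$, Theorem~A computes $W^s_G,W^v_G$ in polynomial time, and we output
\[
\bigl(W^s_G,\,W^v_G,\,K=3n(m-3)+6|E(G)|+2k\bigr).
\]
By Theorem~A, $\dbr\le K$ holds if and only if $\VC(G)\le k$. This proves (i).

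\emph{Transfer to (ii)--(iv).} Items (ii)--(iv) follow because commutation classes of reduced words of $w_0$ correspond bijectively, with braid moves corresponding to flips, to elements of $\B(m,2)$, to marked arrangements of $m$ pseudolines, and to rhombic tilings of the $2m$-gon. We have to check that these bijections and their inverses can be computed in polynomial time. A reduced word gives the signotope directly from the crossing orders of all triples. A wiring diagram is a marked pseudoline arrangement. Elnitsky's tiling can be built letter by letter. Membership in NP transfers through the same bijections.

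\emph{Inapproximability.} For the approximation statement we use APX-hardness of \textsc{Vertex Cover} on bounded-degree graphs. Since the excess equals exactly $2\VC(G)$, a $(1+\delta)$-approximation of the excess gives a $(1+\delta)$-approximation of $\VC(G)$. This reduction is approximation-preserving and needs no additive slack.

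The main obstacle is entirely inside Theorem~A, namely the lower bound $\dbr\ge|D|+2\VC(G)$. Given Theorem~A, the only delicate point here is the polynomial-time verifiability of a single flip in each model. I would handle it uniformly through signotopes: a triple $abc$ is flippable if and only if changing its sign yields another signotope, that is, the sign vector stays monotone on every $4$-subset. This requires only $O(m)$ checks per move.
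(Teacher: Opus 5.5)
Your proposal is correct and follows the paper's proof essentially step by step. The steps match: the reduction from \textsc{Vertex Cover} with $K=|D|+2k$ via Theorem~A; membership via the diameter bound $\binom m3$ of $\B(m,2)$ together with a certificate of successive sign vectors checked packet by packet; transfer to (ii)--(iv) through the isomorphisms of Theorem~\ref{thm:corr}(c), with sign vectors as the common representation; and APX-hardness of \textsc{Vertex Cover} on bounded-degree (cubic) graphs for the excess. Your remark that a flip of $T$ only requires rechecking the $m-3$ packets containing $T$ is a harmless sharpening of the paper's $O(m^4)$-per-term check.
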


\paragraph{Oriented matroids.}
Oriented matroids generalize arrangements of pseudolines. Adding the line at infinity to a wiring diagram gives a
simple arrangement of pseudolines in the projective plane, and uniform oriented matroids of rank $3$ are exactly
such arrangements (with orientations), up to homeomorphism; uniform oriented matroids of rank $r$ are, in the same way,
simple arrangements of pseudospheres in $S^{r-1}$, and the realizable ones come from vectors in general position
in $\mathbb R^r$~\cite{BLSWZ}. A uniform oriented matroid is encoded by its chirotope: a sign $\chi(B)\in\{+,-\}$ for
each $r$-subset $B$ (a \emph{basis}), where $\chi$ and $-\chi$ describe the same oriented matroid. A
\emph{mutation} changes the sign of a single basis so that the result is again a chirotope; geometrically, one
pseudosphere moves across a vertex of a simplicial cell, and in rank $3$ a mutation is a triangle flip. The
\emph{mutation graph} $\G_{n,r}$ has the uniform oriented matroids of rank $r$ on $[n]$ as vertices and mutations
as edges.

Mutation graphs have long been central to the theory of oriented matroids: two of its best-known conjectures are
statements about them. For uniform oriented matroids, Las Vergnas' conjecture (1980) that every simple oriented
matroid has a simplicial tope says that the mutation graph has no isolated vertices, and Cordovil and Las Vergnas
(1988) conjectured that it is connected (see~\cite{KnauerMarc,GuKnauer}); in rank~$3$ it is, by Ringel's
theorem~\cite{Ringel} (see also Proposition~\ref{prop:diam}). Mutations have also been studied for their number and
their relation to realizability and Euclideanness (see e.g.~\cite{BokowskiRohlfs,Wilhelmi,KnauerMarc}). Recently, Gu
and Knauer~\cite{GuKnauer} disproved both conjectures: they found an oriented matroid of rank~$7$ on $24$ elements
without a simplicial tope, and disconnected mutation graphs $\G_{n,r}$ for all $r\ge7$ and $n\ge r+17$. This makes the
structure of mutation graphs all the more interesting. We approach them from the computational side: how hard is it to
compute the distance of two given oriented matroids? To our knowledge, distances in these graphs have not been
studied. The question makes sense in every rank, and it is the subject of Part~\ref{part:om}.

Let $\DIST_r$ be the problem: given two uniform chirotopes $\chi_1,\chi_2$ of rank $r$ on an $n$-element set (as
sign vectors of length $\binom nr$) and an integer $K$, decide whether the oriented matroids they represent have
distance at most $K$ in the mutation graph. We reach oriented matroids through the higher Bruhat orders $\B(n,k)$
of Manin and Schechtman~\cite{MS,Ziegler}, whose elements are the signotopes of rank $k+1$
(Definition~\ref{def:signotope}) and which generalize $\B(m,2)$: adding a line at infinity embeds $\B(n,k)$ into
the mutation graph $\G_{n+1,k+1}$ as an induced subgraph.

\begin{mainC}
If $G$ has at least three vertices, the two uniform oriented matroids of rank $3$ obtained from the wiring
diagrams of Theorem~A by adding a line at infinity have distance $|D|+2\VC(G)$ in the mutation graph
$\G_{m+1,3}$. So $\DIST_3$ is NP-complete, with $K$ in binary. For every fixed $r\ge 4$ the problem $\DIST_r$ is
NP-hard, and NP-complete when $K$ is given in unary. The same holds in every fixed corank $r\ge3$, i.e.\ for uniform oriented matroids of rank
$n-r$ on $n$ elements (Theorems~\ref{thm:rank3}, \ref{thm:np3} and~\ref{thm:allranks}). The same instances show that
deciding flip distance in the higher Bruhat order $\B(n,k)$ is NP-complete for every fixed $k\ge2$, with $K$ in binary
(Corollary~\ref{cor:hbo}).
\end{mainC}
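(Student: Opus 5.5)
We reduce Theorem~C to Theorem~A in three layers, treating rank~$3$ first, then the higher Bruhat orders $\B(n,k)$, then general rank and corank.

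\emph{Rank~$3$.} Add a line at infinity, element $0$, to each diagram of Theorem~A. This gives uniform chirotopes $\chi_s,\chi_v$ on $m+1$ elements, with $\chi(0,a,b)=+$ for all $a<b$ and $\chi(a,b,c)$ equal to the orientation of $abc$. The upper bound is immediate: every braid move on the wiring diagrams is a mutation of the projective arrangement, so $d_{\G}\le|D|+2\VC(G)$.

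For the lower bound we must exclude two kinds of shortcut. The first is that a shortest mutation path may temporarily move the line $0$. The second is that the target could also be reached as $-\chi_v$ or through a relabelling issue; but $\chi$ and $-\chi$ are the same vertex, so only the sign identification matters. The Hamming distance from $\chi_s$ to $-\chi_v$ is $\binom{m+1}{3}-|D|$, which exceeds $|D|+2\VC(G)$ when $|D|$ is small relative to $\binom{m+1}3$. We check this from the explicit formula $|D|=3n(m-3)+6|E|$ against $\binom{m+1}3$, and this is where the hypothesis $n\ge3$ enters. So any path must reach $\chi_v$ itself, and every basis containing $0$ that is mutated must be mutated an even number of times.

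Next, take any mutation sequence from $\chi_s$ to $\chi_v$ and project it to a sequence of braid moves. The plan is to use a "sweeping" or contraction argument. Deleting element $0$ (equivalently, choosing a cell of the arrangement and rewiring relative to it) maps each oriented matroid on the path to the affine arrangement seen from a fixed tope. Mutations not involving $0$ then become braid moves, and mutations involving $0$ become nothing. The excess must therefore survive, or be charged to the at least two extra flips on bases containing $0$.

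The cleanest way I would try is to re-prove the key lower-bound lemma of Theorem~A, presumably a local counting on each seven-wire gadget, in the projective setting. That argument shows that within each gadget either some triple of $D$ is flipped at least three times or some triple outside $D$ is flipped twice. If the lemma only uses the restriction of a path to the seven wires of a gadget (deletion of other elements commutes with mutations, mapping them to mutations or to identity steps), then it suffices to show that a seven-wire gadget plus a line at infinity still has mutation distance exceeding Hamming distance by $2$. This is a finite check on $\G_{8,3}$, which I would verify by computer or by hand via the extension structure. I expect this localisation step to be the main obstacle: the gadget counting in Theorem~A is presumably stated for wiring diagrams, so excursions of element $0$ must be shown to cost at least as much as they save. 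Membership in NP for $\DIST_3$ follows from the diameter bound of Proposition~\ref{prop:diam}, which is polynomial, so the path is a certificate.

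\emph{Higher Bruhat orders.} $\B(m,2)$ is handled by Theorem~B. For $k\ge3$, apply the paper's iterated Rambau expansion: $\B(n,k)\to\B(n',k+1)$ maps signotopes so that Hamming distance and flip excess are controlled. I expect the excess to be multiplied by a factor at least $1$ and Hamming distance to be computable, so the gap $2\VC(G)$ is preserved or amplified in a predictable way. Then embed $\B(n,k)$ as an induced subgraph of $\G_{n+1,k+1}$ via the line at infinity, and rerun the same lower-bound argument to handle excursions.

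\emph{Duality.} Duality $\chi\mapsto\chi^*$ is a graph isomorphism $\G_{n,r}\cong\G_{n,n-r}$, since it preserves single-basis changes. This gives the corank statements. The unary-$K$ NP-membership for $r\ge4$ is immediate, since a certificate is a path of length $K$ with each step checkable.
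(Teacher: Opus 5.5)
Your rank-$3$ plan matches the paper: upper bound by braid moves, excluding $-\Pos_3(v_G)$ by a Hamming count that uses $n\ge3$, restricting a walk to $A_e\cup\{\infty\}$, and charging edges to a vertex cover. But the step you single out as the main obstacle is left open, and you misstate it.

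Deleting the line at infinity does not turn the walk into braid moves. Once the walk leaves the chirotopes positive at $\infty$, the deletions $\chi_i\setminus\infty$ are rank-$3$ chirotopes on $[m]$ that need not be signotopes: on four elements all $16$ sign vectors are chirotopes, but only $8$ are signotopes.

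Nor can the finite check be that the gadget plus a line at infinity has \emph{mutation distance} exceeding its Hamming distance by $2$. That instance is exactly $G=K_2$, and its distance in $\G_{8,3}$ is $26<30$, attained through $-\Pos_3(g^v)$. What is needed is $\dt(\Pos_3(g^s),\Pos_3(g^v))>30$, i.e.\ for walks ending at $\Pos_3(g^v)$ itself. This needs no computation. By Lemma~\ref{lem:parity}, a walk of length exactly $30$ flips only bases of $D_\kappa$, none of which contains $\infty$. So it stays positive at $\infty$ and is a walk in $\B(7,2)$, which Lemma~\ref{lem:gadgetlower} rules out; this is Lemma~\ref{lem:sheet}(b). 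Since each gadget only has to exclude excess $0$, excursions of $\infty$ never need to be priced. In the charging step, two gadgets at a common vertex $u$ now share the four bases inside $X_u\cup\{\infty\}$ rather than a single bead triple.

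The genuine gap is in ranks $r\ge4$ and in $\B(n,k)$ for $k\ge3$, where your argument has no usable lower bound. ``The excess is multiplied by a factor at least $1$'' cannot separate $\VC(G)\le k$ from $\VC(G)\ge k+1$, because the factor is not uniform. With $c$ new elements below $S$ and $q=r-3$, an extra flip of a triple $T$ lifts to $\binom{c+a(T)}q$ flips, where $a(T)=|\{x\in S:x<\min T\}|$ ranges over $0,\dots,m-3$. Even with the right mechanisms, the lifted excess is only pinned between $2\lambda\VC(G)$ and $2\mu\VC(G)$, with $\lambda=\binom cq$ and $\mu=\binom{c+m-3}q$. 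These bounds separate the two cases only if $\mu k<\lambda(k+1)$. You therefore need to pad by many new elements before the $q$ expansions; the paper takes $c=qm(k+1)+q$. A bare iterated expansion has $\lambda=1$ while $\mu$ grows polynomially in $m$.

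Nor does the lower bound $2\lambda\VC(G)$ come from rerunning the gadget argument, which lives in rank $3$. It comes from the $\binom cq$ minors obtained by contracting $Y\in\binom Nq$ and deleting $N\setminus Y$. Each is the rank-$3$ instance $\Pos_3s$, resp.\ $\Pos_3t$, so each must pay the full rank-$3$ excess for walks of chirotopes, which is where the chirotope version of the rank-$3$ bound is used. Meanwhile a flip of $B$ survives only in the minor with $Y=B\cap N$ (Lemma~\ref{lem:copies} and Proposition~\ref{prop:excess}). The embedding of $\B(n,k)$ into $\G_{n+1,k+1}$ gives only upper bounds. You also still have to exclude $-\Pos_r\Lift t$ in rank $r$ by a Hamming count. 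For $\FLIP_{\B(n,k)}$ with $K$ in binary, membership in NP needs the diameter bound~\eqref{eq:diamB}.
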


\paragraph{Ideas.}
The reduction is from \textup{\textsc{Vertex Cover}}. Every vertex $u$ of $G$ gets a switch: a block of three wires
whose small triangle, the \emph{bead}, can be \emph{turned over} by one braid move. Every edge $uw$ gets a
\emph{gadget}, a piece of the two wiring diagrams that acts as an OR gate: it costs two braid moves more than the
Hamming distance unless the bead of $u$ or the bead of $w$ is turned over. Turning a bead over at the start and back
at the end costs two braid moves and serves all edges at that vertex. So the braid-move distance is $|D|$ plus twice
the least number of beads that serve all edges, that is, $|D|+2\VC(G)$.

The gadget is analysed with a classical local criterion, the \emph{packet criterion} (Lemma~\ref{lem:packet}):
whether two wiring diagrams have excess $0$, i.e.\ are at braid-move distance exactly $|D|$, can be decided by
looking at four wires at a time. The gadget has seven wires, the blocks $U$ and $V$ of the two endpoints and a
private wire between them (Section~\ref{sec:gadget}). In the construction (Section~\ref{sec:construction}) every edge
gets its own copy of the gadget, and two gadgets share at most one triple, a bead. For the lower bound we delete all
wires outside a single gadget, which isolates the cost of each edge; the upper bound is an explicit sequence of
braid moves.

For oriented matroids, adding the line at infinity keeps the upper bound. The lower bound needs more care, since a
sequence of mutations may move the line at infinity. We do not know whether this can ever help, but a sequence
without extra mutations never moves it, and this suffices for the gadget argument (Section~\ref{sec:rank3}). For
higher rank we lift the rank-$3$ instance by an iterate of Rambau's expansion of signotopes~\cite{FW,Rambau}. The
lift contains many copies of the rank-$3$ instance as minors, one for each set of $r-3$ new elements, and every
mutation of the lift changes at most one of them. So the excess over the Hamming distance is multiplied by roughly
the number of copies, up to a ratio close to $1$, and it still determines the vertex cover number
(Section~\ref{sec:higher}).

The packet criterion, the correspondence between reduced words and signotopes, the line at infinity (the positive fibre) and Rambau's
expansion are classical and are attributed where they are used; the gadget, the construction and the amplification
of the excess under the lift are new.

\paragraph{Formal verification.}
The main results are also verified in Lean~4 (see \hyperref[sec:computations]{\emph{Computations}}); the proofs in
this paper do not depend on this verification.

\paragraph{Organization.}
Part~\ref{part:wiring} treats wiring diagrams: Section~\ref{sec:prelim} gives the formal definitions,
recalls the correspondence with signotopes of rank~$3$ and the graph $\B(m,2)$, and proves the
packet criterion (Section~\ref{sec:packets}); Sections~\ref{sec:gadget}--\ref{sec:braid} contain the gadget, the
construction and the theorems on braid-move distance. Part~\ref{part:om} treats oriented matroids and higher Bruhat orders: Section~\ref{sec:om} introduces chirotopes,
signotopes and the higher Bruhat orders, and the positive fibre that connects them; Section~\ref{sec:rank3} treats
rank $3$, and Section~\ref{sec:higher} higher ranks, for both mutation graphs and higher Bruhat orders.
Section~\ref{sec:remarks} lists open problems.

\part{Wiring diagrams}\label{part:wiring}

\section{Wiring diagrams and braid moves}\label{sec:prelim}

Throughout, $[m]=\{1,\dots,m\}$, and signs are $\pm1$, written $+$ and $-$. For a finite totally ordered
set $S$, $\binom Sr$ is the set of $r$-subsets of $S$; we write a subset $\{a_1<\dots<a_r\}$ as the string
$a_1\cdots a_r$. For maps $u,v\colon\binom Sr\to\{\pm\}$ let $D(u,v)=\{X: u(X)\ne v(X)\}$, and for
$X\in\binom Sr$ let $u^X$ be $u$ with the value at $X$ negated.

\subsection{Reduced words, braid moves and signotopes}\label{sec:words}

The material of this section is standard. We fix our sign convention carefully, since the computations of
Section~\ref{sec:gadget} depend on it.

\begin{definition}\label{def:words}
Fix $m\ge2$. A \emph{word} is a finite sequence of \emph{letters} $\sigma_i$ with $1\le i\le m-1$. Its
\emph{wiring diagram} starts with wires $1,\dots,m$ at positions $1,\dots,m$, position $1$ on top; the letter
$\sigma_i$ exchanges the wires currently at positions $i$ and $i+1$, and we say that it \emph{swaps} them
and write $\sw{xy}$ for the swap of the pair $\{x,y\}$. A word is \emph{reduced} if it swaps every pair of
wires exactly once. (These are exactly the reduced words of the longest permutation $w_0\in S_m$; we use no
other reduced words.) In a reduced word $W$, the three swaps of wires $a<b<c$ occur either in the order
$\sw{ab},\sw{ac},\sw{bc}$ or in the order $\sw{bc},\sw{ac},\sw{ab}$: two wires can be swapped only when they are
adjacent, so the first of the three swaps is $\sw{ab}$ or $\sw{bc}$, and if it is $\sw{ab}$ the wires are then in the
order $b,a,c$, so that $\sw{ac}$ comes before $\sw{bc}$; the other case is symmetric. We put $s_W(abc)=+$ in the
first case and $s_W(abc)=-$ in the second; this is the \emph{orientation} of the triple $abc$ in $W$, and
$s_W\colon\binom{[m]}3\to\{\pm\}$ is the \emph{sign vector} of $W$.
\begin{itemize}
\item A \emph{commutation} exchanges two adjacent letters $\sigma_i\sigma_j$ with $|i-j|\ge2$. Two words are
\emph{equivalent}, $W\equiv W'$, if one arises from the other by commutations; the equivalence classes of
reduced words are their \emph{commutation classes} $[W]$.
\item A \emph{braid move} replaces three consecutive letters $\sigma_i\sigma_{i+1}\sigma_i$ by
$\sigma_{i+1}\sigma_i\sigma_{i+1}$, or conversely.
\item The \emph{braid-move distance} $\dbr([W],[W'])$ is the least number of braid moves in a sequence of
commutations and braid moves that transforms $W$ into $W'$; it depends only on the commutation classes.
\end{itemize}
\end{definition}

Any two reduced words of $w_0$ are related by commutations and braid moves~\cite{Matsumoto,Tits}, so $\dbr$ is
finite. By a \emph{sequence of braid moves} between two reduced words we always mean a sequence of commutations and
braid moves; only its braid moves are counted. As in the introduction, commutation classes are wiring diagrams up to isotopy and braid
moves are triangle flips (Remark~\ref{rem:triangle}). The convention of~\cite{Radtke} for
orientations differs from ours by a global sign.

\paragraph{Packets and signotopes.} For a map $u\colon\binom S3\to\{\pm\}$ and a $4$-set $P=abcd\subseteq S$, the
\emph{packet} of $P$ is the sequence
\[
u[P]=\bigl(u(abc),u(abd),u(acd),u(bcd)\bigr)
\]
of the values on the four triples of $P$ in lexicographic order; for the sign vector of a reduced word it lists the
orientations of the four triples of wires in $P$. A sequence of signs has \emph{at most one sign change} if, read
from left to right, it switches between $+$ and $-$ at most once, that is, if it is constant or consists of a block of
one sign followed by a block of the other. Equivalently, it has no \emph{alternating} subsequence of length three,
that is, no $x_i=x_k\ne x_j$ with $i<j<k$: such a subsequence forces two sign changes, and the first two sign changes,
at positions $i<j$, give the alternating subsequence $(x_i,x_{i+1},x_{j+1})$. Of the sixteen sign sequences of
length $4$ this leaves eight, which we denote by
\[
\begin{array}{llll}
v_0=\pl{++++},& v_1=\pl{+++-{}},& v_2=\pl{++-{}-{}},& v_3=\pl{+-{}-{}-{}},\\
v_4=\pl{-{}-{}-{}-{}},& v_5=\pl{-{}-{}-{}+},& v_6=\pl{-{}-{}++},& v_7=\pl{-{}+++};
\end{array}
\]
they are the sign vectors of the eight wiring diagrams on four wires in Figure~\ref{fig:C8}, which form the cycle
of Observation~\ref{obs:C8}. A map $u\colon\binom S3\to\{\pm\}$ all of whose packets have at most
one sign change is a \emph{signotope of rank $3$} on $S$, in the sense of Felsner and Weil~\cite{FW};
Definition~\ref{def:signotope} gives the definition for all ranks.

\begin{theorem}[\cite{MS,Ziegler,FW,Elnitsky}]\label{thm:corr}
Let $m\ge3$.
\begin{enumerate}
\item[(a)] Equivalent reduced words have the same sign vector, and $[W]\mapsto s_W$ is a bijection from the
commutation classes of reduced words of $w_0\in S_m$ onto the signotopes of rank~$3$ on $[m]$.
\item[(b)] If a braid move acting on the wires of a triple $T$ transforms a reduced word $W$ into $W'$, then $W'$ is
reduced and $s_{W'}=s_W^T$. Conversely, if $s_W^T$ is a signotope, then some word equivalent to $W$ admits a braid
move acting on the wires of $T$. So two commutation classes are related by a single braid move if and only if their
sign vectors differ in exactly one triple.
\item[(c)] The commutation classes correspond to the marked arrangements of $m$ pseudolines, with braid moves
corresponding to triangle flips, and to the rhombic tilings of the regular $2m$-gon, with braid moves
corresponding to flips of hexagons.
\end{enumerate}
\end{theorem}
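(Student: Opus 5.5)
This theorem is classical and is cited rather than reproved in full; the plan is to reduce everything to the structure of the swap orders on triples and on $4$-sets. For (a), we first check that a commutation $\sigma_i\sigma_j\to\sigma_j\sigma_i$ with $|i-j|\ge2$ exchanges two swaps on four distinct wires. Hence it does not change the relative order of the swaps within any triple, and $s_W$ is constant on commutation classes. Next we show that $s_W$ is a signotope. Restricting $W$ to four wires $P$ (deleting all other wires, as in Figure~\ref{fig:wiring}(c)) gives a reduced word on $4$ wires with the same packet. So it suffices to list the commutation classes on $4$ wires: there are eight, with the packets $v_0,\dots,v_7$ of Figure~\ref{fig:C8}, and all of them have at most one sign change.

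For injectivity we argue that the sign vector determines, for every wire $x$, the order in which $x$ meets the other wires. Indeed, $\sw{xy}$ precedes $\sw{xz}$ exactly when a condition read off from $s_W(\{x,y,z\})$ and the relative order of $x,y,z$ holds. Thus $s_W$ determines the partial order ``swap $\sw{xy}$ precedes $\sw{xz}$'' on swaps sharing a wire. Two reduced words inducing the same such order are commutation equivalent: they are linear extensions of the same heap (Cartier--Foata), since the letters that do not commute are exactly the swaps sharing a wire.

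For surjectivity, given a signotope $u$ we build a word greedily. At each step we swap an adjacent pair $(x,y)$ that is ``ready'', meaning that $u$ says the swap $\sw{xy}$ precedes all remaining swaps of $x$ and of $y$. The main obstacle is showing that a ready adjacent pair always exists. Here we use the single-sign-change condition on packets: by the $4$-wire case, if no adjacent pair were ready, the orders demanded along the current sequence of wires would form a cycle. Restricting to a $4$-set inside that cycle contradicts the packet condition. This is the standard Felsner--Weil argument, and I would follow it.

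For (b), a braid move $\sigma_i\sigma_{i+1}\sigma_i\to\sigma_{i+1}\sigma_i\sigma_{i+1}$ acts on the three wires at positions $i,i+1,i+2$. It reverses the order of their three swaps and leaves every other pair of swaps in the same relative order. It therefore keeps the word reduced and flips exactly $s(T)$. For the converse, assume $s_W^T$ is a signotope. By (a) there is a reduced word $W''$ with sign vector $s_W^T$. We compare the heaps of $W$ and $W''$: they differ only in the relative order of the three swaps of $T$. We then argue that these three swaps can be made consecutive in some linear extension of the heap of $W$. If some swap $\sw{xy}$ with $x\notin T$ had to lie strictly between two of them, the $4$-set $T\cup\{x\}$ would have a packet in which flipping $T$ creates two sign changes, contradicting the assumption that $s_W^T$ is a signotope. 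Once the three swaps are consecutive, the braid move applies. For (c), the correspondences with pseudoline arrangements and rhombic tilings are standard (Elnitsky, Ziegler), and we simply cite them.
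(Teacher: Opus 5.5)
The paper does not prove Theorem~\ref{thm:corr}; it only cites \cite{MS,Ziegler,FW,Elnitsky}. Your route through local sequences is therefore a self-contained alternative, and its easy parts are fine:
invariance under commutations, the signotope property via four wires, injectivity (the sign vector fixes the order of any two swaps sharing a wire), and the forward half of (b).

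The gap is the step you yourself call the main obstacle: that the greedy sweep never gets stuck.
\begin{enumerate}
\item Let $\prec_x$ be the order in which $u$ says $x$ meets the other wires. You must first show it is a linear order; transitivity comes from the packets of the $4$-sets containing $x$.
\item More seriously, ``a deadlock gives a cycle, and restricting to a $4$-set inside it contradicts the packet condition'' is not an argument. A cycle of forced precedences among the remaining swaps can run through arbitrarily many wires, while a $4$-set sees only swaps among its own four wires. Without an extremal choice there is no $4$-set to restrict to.
\end{enumerate}

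Here is one extremal choice that works.
\begin{enumerate}
\item Keep the performed swaps of each wire an initial segment of $\prec_x$. Then in every triple they form a prefix of its order, and the triple's next swap joins two wires adjacent among the three.
\item Let $n(x)$ be the first unperformed partner of $x$. Take $x$ with $y=n(x)$ below $x$ at minimum distance; the topmost wire with an unperformed swap shows such $x$ exist.
\item Suppose the wire $z$ directly below $x$ is not $y$. The triple $xzy$ forces $\sw{zy}$ to be unperformed with $y\prec_z x$. Now consider $n(z)$:
\begin{itemize}
\item if $n(z)$ lies below $z$ but not below $y$, it contradicts minimality;
\item if $n(z)=y'$ lies below $y$, then $y'\prec_y z\prec_y x$, and the triple $xyy'$ gives an unperformed $\sw{xy'}$ with $y'\prec_x y$, contradicting $n(x)=y$;
\item if $n(z)=w$ lies above $z$, it lies above $x$, and the triples $wxz$, $wzy$, $wxy$ give $x\prec_w z\prec_w y\prec_w x$, contradicting linearity of $\prec_w$.
\end{itemize}
\item So $y$ is directly below $x$, and one more triple argument gives $n(y)=x$, a ready pair.
\end{enumerate}

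The converse of (b) is missing the same localization on a smaller scale. You do not say which swap forced between two swaps of $T$ to use. It may have both wires outside $T$, or be forced there only through a chain of swaps on other wires; in either case the $4$-set $T\cup\{x\}$ need not see it.

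What you need is that the heap order is generated by immediate successions in the local sequences. If each wire of $T$ has its two swaps inside $T$ consecutive in its local sequence, nothing is forced between the swaps of $T$, and some equivalent word has them consecutive. Otherwise some $a\in T$ meets some $x\notin T$ strictly between its two swaps in $T$. This persists on the four wires $T\cup\{x\}$, where $T$ is then not flippable. The finite check on $C_8$ then shows that negating $T$ in that packet gives two sign changes.
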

For (a) and (b) see~\cite[Sect.~2, Thm.~3]{MS}, \cite[Thm.~4.1 and Cor.~4.2]{Ziegler} and \cite[Thms.~2
and~7]{FW}; for (c) see~\cite[Thms.~2 and~7]{FW} and~\cite{Elnitsky}. We say that a braid move
acting on the wires of $T$ \emph{flips} $T$.

We write $\B(m,2)$ for the graph on the signotopes of rank $3$ on $[m]$ in which two signotopes are adjacent if they
differ in exactly one triple, and $\dB$ for its distance (likewise $\B(S,2)$ for any finite totally ordered set
$S$); this is the Hasse diagram of the higher Bruhat order of Manin--Schechtman and Ziegler~\cite{MS,Ziegler}, taken
as an undirected graph. By Theorem~\ref{thm:corr},
\[
\dbr([W],[W'])=\dB(s_W,s_{W'})\qquad\text{for all reduced words }W,W'.
\]

\begin{remark}[a geometric picture]\label{rem:triangle}
Some word equivalent to $W$ admits a braid move on $T$ exactly when, in the wiring diagram, the three wires of $T$
bound a triangle that no other wire enters. The orientation of a triple $abc$ is $+$ exactly when the crossing of
$a$ and $c$ lies below the wire $b$ (Figure~\ref{fig:wiring}).
\end{remark}

\subsection{Hamming distance, local replacement and deleting wires}\label{sec:tools}

We state three elementary lemmas that are used throughout the paper.

\begin{lemma}[Hamming bound and parity]\label{lem:parity}
Let $u_0,\dots,u_\ell\colon\binom Sr\to\{\pm\}$ be maps such that each $u_i$ equals $u_{i-1}$ or $u_{i-1}^X$ for
some $X$, and let $c_X$ be the number of steps that \emph{flip} $X$, i.e.\ with $u_i=u_{i-1}^X$. Put
$D=D(u_0,u_\ell)$ and $Y=\{X:c_X>[X\in D]\}$, where $[X\in D]$ is $1$ if $X\in D$ and $0$ otherwise. Then $c_X$ is
odd if and only if $X\in D$, and $\sum_Xc_X\ge|D|+2|Y|$.
\end{lemma}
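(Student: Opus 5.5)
The plan is to fix a single $r$-subset $X$, track only the value at $X$, and then sum over all $X$.

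First, the parity claim. Follow the sequence $u_0(X),u_1(X),\dots,u_\ell(X)$. A step changes the value at $X$ exactly when it flips $X$. Indeed, if $u_i=u_{i-1}^{X'}$ with $X'\ne X$, then $u_i(X)=u_{i-1}(X)$. If $u_i=u_{i-1}$, the value does not change either. Hence $u_\ell(X)=(-1)^{c_X}u_0(X)$, so $c_X$ is odd exactly when $u_\ell(X)\ne u_0(X)$, that is, when $X\in D$.

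There is a small ambiguity to settle: a step with $u_i=u_{i-1}$ is never also of the form $u_{i-1}^X$, since negating a value always changes the map. So "flips $X$" is well defined and each step flips at most one $X$. This is what lets us count steps by summing over $X$.

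Second, the per-set bound $c_X\ge[X\in D]+2[X\in Y]$. If $X\notin Y$, then $c_X\le[X\in D]$. By parity, $c_X\equiv[X\in D]\pmod 2$, and so $c_X=[X\in D]$. If $X\in Y$, then $c_X>[X\in D]$ and $c_X\equiv[X\in D]\pmod 2$ together force $c_X\ge[X\in D]+2$.

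Summing the per-set bound over all $X$ gives $\sum_X c_X\ge|D|+2|Y|$. There is no real obstacle here. The only care needed is the observation above that each step flips at most one set, so that $\sum_X c_X$ counts the flipping steps and in particular is at most $\ell$.
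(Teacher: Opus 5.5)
Your proof is correct and follows the paper's argument: each flip of $X$ changes the value at $X$, which gives the parity, and together with parity the condition $c_X>[X\in D]$ forces $c_X\ge[X\in D]+2$; summing over $X$ gives the bound. Your extra remark that each step flips at most one set is a harmless clarification that the paper leaves implicit.
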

\begin{proof}
Each flip of $X$ changes the value at $X$, which gives the parity; so $c_X>[X\in D]$ implies $c_X\ge[X\in D]+2$.
\end{proof}
The maps $u_i$ can be the sign vectors along a sequence of commutations and braid moves (a commutation leaves the
sign vector unchanged, and a braid move flips one triple by Theorem~\ref{thm:corr}(b)), or, in Part~\ref{part:om},
signotopes or chirotopes along a walk. For words, with $D=D(s_W,s_{W'})$, Lemma~\ref{lem:parity} says that
$\dbr([W],[W'])\ge|D|$, that the number of braid moves of every sequence from $W$ to $W'$ has the parity of $|D|$,
and that a sequence with exactly $|D|$ braid moves flips every triple of $D$ once and nothing else.
We call the difference $\dbr([W],[W'])-|D(s_W,s_{W'})|$ the \emph{excess}.

\begin{lemma}[local replacement]\label{lem:replace}
Let $W=\mathsf A\,L\,\mathsf C$ be a reduced word such that $L$ consists of letters $\sigma_i$ with $q\le i\le
q+k-2$ and swaps every pair of the $k$ wires $Z$ that occupy positions $q,\dots,q+k-1$ after $\mathsf A$. Let
$L'$ be a word in the same letters that swaps every pair of these $k$ wires exactly once. We read $L$ and $L'$ as
reduced words on $k$ wires (positions shifted by $q-1$).
\begin{enumerate}
\item[(a)] $W'=\mathsf A\,L'\,\mathsf C$ is reduced, and $s_{W'}(t)=s_W(t)$ for every triple $t\not\subseteq Z$. If
moreover the wires of $Z$ are in label order after $\mathsf A$, then $s_W|_Z=s_L$ and $s_{W'}|_Z=s_{L'}$.
\item[(b)] A sequence of braid moves from $L$ to $L'$, applied to the letters of $L$ inside $W$, is a sequence of
braid moves from $W$ to $W'$ with the same number of braid moves.
\end{enumerate}
\end{lemma}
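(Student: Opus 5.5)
The plan is to follow the wire positions through the word $W=\mathsf A\,L\,\mathsf C$ and use that the letters of $L$ only act on positions $q,\dots,q+k-1$.

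\emph{Part (a).} After $\mathsf A$ the wires of $Z$ occupy positions $q,\dots,q+k-1$, and every other wire occupies a position outside this window. The letters of $L$ and of $L'$ are $\sigma_i$ with $q\le i\le q+k-2$, so they permute only the wires of $Z$ among these positions and never move any other wire. Since $L$ swaps every pair of $Z$ exactly once, it reverses the order of $Z$ inside the window. The same holds for $L'$, because it also swaps every pair of these wires exactly once; this is the standard fact that a word whose swaps are all distinct pairs realises the permutation whose inversions are exactly those pairs.

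Hence $\mathsf A L$ and $\mathsf A L'$ leave every wire at the same position, and $\mathsf C$ acts identically in $W$ and $W'$, performing the same swaps of the same pairs in the same order. Therefore $W'$ swaps exactly the pairs that $W$ swaps, with pairs inside $Z$ swapped once in $L'$ instead of once in $L$, so $W'$ is reduced.

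For a triple $t\not\subseteq Z$, at most one pair of $t$ lies inside $Z$. The order of the three swaps of $t$ is determined by:
\begin{itemize}
\item the times of the swaps in $\mathsf A$ and $\mathsf C$, which are unchanged;
\item the single swap inside the $L$-block, if any, which occurs in the middle block both in $W$ and in $W'$.
\end{itemize}
So the relative order of the three swaps is the same in $W$ and $W'$, and $s_{W'}(t)=s_W(t)$.

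If the wires of $Z$ are in label order after $\mathsf A$, relabel them $1,\dots,k$ increasingly and shift the positions by $q-1$. Then $L$ is a reduced word on $k$ wires, and for $t\subseteq Z$ all three swaps of $t$ occur inside $L$. Their order in $W$ is their order in $L$, which gives $s_W|_Z=s_L$, and likewise $s_{W'}|_Z=s_{L'}$. The monotone relabelling preserves the order $a<b<c$, which is exactly why the label-order hypothesis is needed.

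\emph{Part (b).} A commutation or braid move on the word $L$ (shifted back by $q-1$) is a replacement of consecutive letters inside the block $L$ of $W$, so it is also a commutation or braid move on $W$. Every intermediate word $L''$ is reduced on $k$ wires by Theorem~\ref{thm:corr}(b), or simply because braid moves and commutations preserve the multiset of swapped pairs. So part (a) applies to $\mathsf A L''\mathsf C$ and keeps all intermediate words reduced, and the number of braid moves is the same.

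I expect no real obstacle here. The one point needing care is the claim that $\mathsf C$ sees the same wire configuration, that is, that $L$ and $L'$ realise the same permutation of the window. This follows since both swap every pair of $Z$ exactly once and hence both reverse $Z$.
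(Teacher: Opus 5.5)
Your proposal is correct and takes essentially the same route as the paper. Like the paper, you observe that $L$ and $L'$ both reverse $Z$ inside the window, so $\mathsf C$ acts identically after either; you compare the order of the swaps of each triple according to how many of its pairs lie in $Z$; and for (b) you note that moves inside $L$ are moves of $W$, and that the intermediate words are reduced by (a). Your alternative justification in (b), that commutations and braid moves preserve the multiset of swapped pairs, is a valid and slightly more elementary substitute for the appeal to Theorem~\ref{thm:corr}(b).
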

\begin{proof}
(a) $L$ moves no wire outside $Z$ and reverses $Z$, since it reverses the relative order of every pair of $Z$;
the same holds for $L'$. So $\mathsf A\,L$ and $\mathsf A\,L'$ end in the same arrangement, and $\mathsf C$
acts in the same way after both. Pairs not inside $Z$ are swapped once, in $\mathsf A$ or in $\mathsf C$, and
pairs inside $Z$ once in $L'$; so $W'$ is reduced. A triple with at most one wire in $Z$ has its swaps in
$\mathsf A$ or $\mathsf C$ in both words, in the same order. A triple with exactly two wires $x,y\in Z$ has
$\sw{xy}$ in $L$ (respectively $L'$) and its other two swaps at the same places in $\mathsf A$ or $\mathsf C$.
For the last statement: all swaps between two wires of $Z$ occur in $L$ (respectively $L'$), and the wires
of $Z$ enter $L$ at positions $q,\dots,q+k-1$ in label order, so $L$, read on $k$ wires, moves the $j$-th
smallest wire of $Z$ as $W$ does; hence every triple of $Z$ has its three swaps in the same order in $W$ as
in $L$.

(b) Shifted by $q-1$, commutations and braid moves among the letters of $L$ are commutations and braid moves of
$W$. The intermediate words have the form $\mathsf A\,L''\,\mathsf C$, where $L''$ is again reduced on $k$ wires
(Theorem~\ref{thm:corr}(b)), so they are reduced by~(a).
\end{proof}

\paragraph{Deleting wires.} Let $W$ be a reduced word on the wires $[m]$ and $A\subseteq[m]$ with $|A|\ge2$. When
$W$ swaps two wires of $A$, they are adjacent in the current arrangement, hence adjacent among the wires of $A$.
Listing these swaps in the order in which they occur, each as the letter $\sigma_j$ with $j$ the position of the
upper of the two wires among the wires of $A$ at that moment, gives a reduced word $W|_A$ on $|A|$ wires (the
wires of $A$ numbered in label order); in the wiring diagram, the wires outside $A$ are erased
(Figure~\ref{fig:wiring}(c)). The three swaps of a triple of $A$ occur in the same order in $W|_A$ as in $W$, so
$s_{W|_A}$ is the restriction of $s_W$ to the triples of $A$.

\begin{lemma}[deleting wires]\label{lem:delete}
Let $W,W'$ be reduced words on $[m]$ and $A\subseteq[m]$ with $|A|\ge3$. A sequence of braid moves from $W$ to $W'$
yields a sequence of braid moves from $W|_A$ to $W'|_A$ whose braid moves flip, in the same order, the triples
inside $A$ that the original sequence flips.
\end{lemma}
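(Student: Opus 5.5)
The plan is to follow a single step of the sequence and see what it does to $W|_A$, then induct along the sequence. Write the sequence as $W=W_0,W_1,\dots,W_\ell=W'$, where each $W_{i}$ arises from $W_{i-1}$ by one commutation or one braid move. Since $W|_A$ is defined directly from $W$, it is enough to prove the following for one step from $W_{i-1}$ to $W_i$. Either $W_{i-1}|_A$ and $W_i|_A$ are equivalent, or they are related by a single braid move. The braid-move case occurs exactly when the step is a braid move flipping a triple $T\subseteq A$, and then the braid move on the restriction flips $T$. Concatenating these pieces gives a sequence from $W|_A$ to $W'|_A$. It flips the triples inside $A$ in the same order as the original sequence, and braid moves on triples not inside $A$ contribute only commutations.

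For a \textbf{commutation}, the step exchanges two adjacent letters $\sigma_i\sigma_j$ with $|i-j|\ge2$. These swap two disjoint pairs of wires. If at most one of the swaps is between two wires of $A$, then the list of $A$-swaps is unchanged. I also need to check that the letters assigned to them are unchanged. Each letter is the rank among the $A$-wires of the upper swapped wire, and the other swap acts on a pair not both in $A$. Swapping an $A$-wire with a non-$A$-wire does not change the relative order of the $A$-wires, so the ranks are the same and $W_{i-1}|_A=W_i|_A$. If both swaps are inside $A$, then the two $A$-letters are consecutive in the restriction. Their pairs are disjoint and adjacent among the $A$-wires, so their $A$-positions differ by at least $2$. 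The restricted words therefore differ by one commutation.

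For a \textbf{braid move} on $T=\{a,b,c\}$, the three letters $\sigma_i\sigma_{i+1}\sigma_i$ swap the pairs in one of the orders listed in Definition~\ref{def:words}, and become the reverse order. Suppose $|T\cap A|\le1$. Then none of the three swaps is between two $A$-wires, the $A$-order never changes, and the restriction is unchanged. Suppose $|T\cap A|=2$, say with wires $x,y$. Exactly one of the three swaps, namely $\sw{xy}$, is an $A$-swap, both before and after the move. Before and after, it happens when $x,y$ are adjacent in the current arrangement, and the configuration of the $A$-wires is the same at that moment. This holds because the other two swaps in the window involve the third wire $z\notin A$, and they do not change the order of the $A$-wires. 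So the letter for $\sw{xy}$ is the same and the restriction is unchanged. Finally, suppose $T\subseteq A$. The three wires are consecutive at positions $i,i+1,i+2$, so they are consecutive among the $A$-wires at some rank $j$. The three letters restrict to $\sigma_j\sigma_{j+1}\sigma_j$, respectively $\sigma_{j+1}\sigma_j\sigma_{j+1}$. No other $A$-letters come between them, so the restriction undergoes exactly one braid move, and it flips $T$.

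I expect the main obstacle to be the bookkeeping that the letter index assigned to each $A$-swap is preserved. It is defined through the current $A$-rank, not through the absolute position. The key observation is that only swaps inside $A$ change the relative order of the $A$-wires, and I would state and use it once before the case analysis. The remaining care is that, in the case $T\subseteq A$, the three restricted letters are really consecutive in $W|_A$, which holds because the corresponding letters are consecutive in $W$.
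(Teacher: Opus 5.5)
Your argument is correct, but it takes a different route from the paper. The paper works entirely with sign vectors. Since $s_{V|_A}$ is the restriction of $s_V$ to $\binom A3$, a step that flips no triple inside $A$ leaves $s_{V|_A}$ unchanged, and a step flipping $T\subseteq A$ changes it exactly at $T$. Theorem~\ref{thm:corr}(a), the injectivity of $[W]\mapsto s_W$, then gives $[V|_A]=[V'|_A]$ in the first case. In the second case, Theorem~\ref{thm:corr}(b) gives a single braid move: if $s^T$ is a signotope, some equivalent word admits a braid move on $T$.

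You instead check letter by letter that a commutation of $W$ restricts to the identity or to one commutation of $W|_A$, and that a braid move restricts to the identity or to one braid move. The key point is that only swaps of two $A$-wires change the relative order of the $A$-wires, so the index assigned to each $A$-swap is preserved. In the commutation case with both swaps inside $A$, this same observation is what shows the two restricted letters keep their indices when exchanged; you use it implicitly.

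The paper's proof is two lines but relies on the full correspondence between commutation classes and signotopes, including the nontrivial fact that a single sign change is realized by a braid move. Yours is longer but elementary and self-contained. It also gives slightly more: every step of the original sequence becomes at most one step between the literal restricted words, not just between their commutation classes.
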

\begin{proof}
Let $V,V'$ be consecutive words of the sequence. By Theorem~\ref{thm:corr}(a),(b), $s_{V'}=s_V^T$ if the step is a
braid move flipping $T$, and $s_{V'}=s_V$ otherwise. Since $s_{V|_A}$ and $s_{V'|_A}$ are the restrictions of $s_V$
and $s_{V'}$ to the triples of $A$, they are equal unless the step flips a triple $T\subseteq A$, and then
$s_{V'|_A}=s_{V|_A}^T$. By Theorem~\ref{thm:corr}(a),(b), $[V|_A]=[V'|_A]$ in the first case, and $[V|_A]$ and
$[V'|_A]$ are related by one braid move flipping $T$ in the second.
\end{proof}

\subsection{Four wires and the packet criterion}\label{sec:packets}

This subsection proves a useful tool, the packet criterion (Lemma~\ref{lem:packet}): a sequence with exactly $|D|$ braid moves exists if and only if the differing triples can
be ordered so that, on every four wires, they are flipped in the order of an arc of an $8$-cycle. The criterion
is classical in substance; we state it in the form we use.

Deleting all but four wires $a<b<c<d$ from a reduced word $W$ gives a reduced word on four wires, whose sign
vector is the packet $s_W[abcd]$. There are eight wiring diagrams on four wires, one for each of the packets
$v_0,\dots,v_7$ (Figure~\ref{fig:C8}).

\begin{figure}[t]
\centering
\input{fig_c8.tex}
\caption{The eight wiring diagrams on four wires used in the packet criterion, and their sign vectors
$v_0,\dots,v_7$, listed as
$(s(123),s(124),s(134),s(234))$. Two of them are related by a braid move exactly when they are neighbours on the
cycle; each edge is labelled by the triple it flips.}\label{fig:C8}
\end{figure}

\begin{observation}\label{obs:C8}
Two wiring diagrams on four wires are related by a braid move exactly when they are neighbours in
Figure~\ref{fig:C8}, so $\B(4,2)$ is a cycle $C_8$ of length $8$.
\end{observation}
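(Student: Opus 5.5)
By Theorem~\ref{thm:corr}(a),(b), the statement reduces to a claim about signotopes of rank~$3$ on $[4]$. The wiring diagrams on four wires, up to commutation, correspond bijectively to these signotopes. Two diagrams are related by a braid move exactly when their sign vectors differ in one triple. So the plan is in two steps. First, identify the signotopes on $[4]$ as exactly $v_0,\dots,v_7$. Second, determine which pairs among them differ in exactly one coordinate.

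For the first step, there is only one $4$-set, namely $1234$. Its packet is the whole sign vector $(s(123),s(124),s(134),s(234))$. By definition, a map $\binom{[4]}3\to\{\pm\}$ is a signotope exactly when this single sequence has at most one sign change. The text already records that, of the sixteen sign sequences of length~$4$, exactly the eight sequences $v_0,\dots,v_7$ have this property. This can be checked by listing them: there are the two constant sequences, plus a block of one sign followed by a block of the other, with the switch after position $1$, $2$ or $3$, in either order of the signs. So the vertex set of $\B(4,2)$ is $\{v_0,\dots,v_7\}$, and these are the sign vectors of the eight diagrams in Figure~\ref{fig:C8}.

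For the second step, consecutive entries in the cyclic list $v_0,v_1,\dots,v_7,v_0$ each differ in exactly one position. Going from $v_0$ to $v_4$, the entries turn from $+$ to $-$ from the right, one at a time. Going from $v_4$ back to $v_0$, they turn back from $-$ to $+$ in the same order. This gives the eight edges of the cycle, labelled by the triples $234,134,124,123,234,134,124,123$.

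It remains to show there are no other edges. For this I would use the map $v_i\mapsto i$ and observe that the Hamming distance between $v_i$ and $v_j$ equals the cyclic distance $\min(|i-j|,8-|i-j|)$. A cleaner way to see this is to encode $v_i$ by the position of its sign change together with its leading sign. Then a non-adjacent pair, such as $v_i$ and $v_{i+4}$ (which are complementary), differs in at least two coordinates. With only $\binom82=28$ pairs, a direct check is also acceptable. Hence the graph is exactly $C_8$.

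The only delicate point is matching the labelled pictures in Figure~\ref{fig:C8} with the sign vectors under the paper's orientation convention. Logically, however, the statement follows from Theorem~\ref{thm:corr} and the combinatorial count alone, so I expect no real obstacle.
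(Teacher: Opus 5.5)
Your proof is correct and is essentially the paper's own reasoning: Theorem~\ref{thm:corr}(a),(b) reduces the observation to the eight sign sequences $v_0,\dots,v_7$ with at most one sign change, and two of these differ in exactly one coordinate precisely when they are cyclically consecutive. For the step that rules out other edges, the quickest justification is that at most four consecutive steps of the cycle flip pairwise distinct coordinates, so the Hamming distance of $v_i$ and $v_j$ equals their cyclic distance.
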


Now let $W,W'$ be reduced words on $[m]$ and $D=D(s_W,s_{W'})$. For a $4$-set $P=abcd\subseteq[m]$ we identify
the coordinates $1,2,3,4$ of packets with the triples $abc,abd,acd,bcd$. Deleting the wires outside $P$ projects
every sequence of braid moves from $W$ to $W'$ to a walk on $C_8$ from $s_W[P]$ to $s_{W'}[P]$, with one step for
each braid move that flips a triple inside $P$ (Lemma~\ref{lem:delete}). If the sequence has exactly $|D|$ braid
moves, it flips every triple of $D$ once and no other triple (Lemma~\ref{lem:parity}). Then the projected walk flips
each of the $k$ triples of $D$ inside $P$ once, so it never turns back, and it runs along an arc of $C_8$ of
length $k$.

If $x,y\in C_8$ have Hamming distance $k\ge1$, the two arcs of $C_8$ from $x$ to $y$ have lengths $k$ and $8-k$
(Figure~\ref{fig:C8}). An \emph{arc order} of $(x,y)$ is the order in which an arc of length $k$ from $x$ to $y$
flips its coordinates; it is unique if $k\le3$, and if $k=4$ there are two, reverse to
each other. An enumeration $t_1,\dots,t_{|D|}$ of $D$ is \emph{admissible} for $(W,W')$ if for every $4$-set
$P\subseteq[m]$ containing at least two triples of $D$, the triples of $D$ inside $P$ occur in an arc order of
$(s_W[P],s_{W'}[P])$. So a sequence with exactly $|D|$ braid moves flips the triples of $D$ in an admissible order.
The packet criterion is a local tool for the converse: whether there is a sequence of length exactly $|D|$ can be
decided by looking at four wires at a time.

\begin{lemma}[packet criterion]\label{lem:packet}
Let $W,W'$ be reduced words on $[m]$ and $D=D(s_W,s_{W'})$. Then $\dbr([W],[W'])=|D|$ if and only if $D$ has an
admissible enumeration.
\end{lemma}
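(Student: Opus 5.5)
The forward direction is already contained in the discussion preceding the statement. If $\dbr([W],[W'])=|D|$, take a sequence with exactly $|D|$ braid moves. By Lemma~\ref{lem:parity} it flips every triple of $D$ exactly once and no other triple. By Lemma~\ref{lem:delete} its projection to any $4$-set $P$ is a walk on $C_8$ that never turns back, so the triples of $D$ inside $P$ are flipped in an arc order. Hence the order of the flips is an admissible enumeration.

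For the converse I would argue by induction on $|D|$, flipping the first triple of an admissible enumeration. Let $t_1,\dots,t_{|D|}$ be admissible and put $u=s_W^{t_1}$. The key claim is that $u$ is a signotope. Once this is shown, Theorem~\ref{thm:corr}(b) gives a word $W_1$, reached from $W$ by one braid move flipping $t_1$, with $s_{W_1}=u$. Then $D(s_{W_1},s_{W'})=D\setminus\{t_1\}$, and $t_2,\dots,t_{|D|}$ is admissible for $(W_1,W')$. Indeed, for a $4$-set $P$, the packet $u[P]$ is either $s_W[P]$ (if $t_1\not\subseteq P$) or the first step of the arc. Removing the first step of an arc order of $(x,y)$ leaves an arc order of (new $x$, $y$); if $P$ contained no $t_1$, the arcs from $s_W[P]$ and from $u[P]$ are the same. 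We must also handle $4$-sets with only one remaining differing triple, where admissibility imposes nothing; that case is fine because the condition is vacuous. Induction then yields $|D|$ braid moves in total, and the base case $D=\emptyset$ is Theorem~\ref{thm:corr}(a).

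The main step is the claim that $u$ is a signotope, i.e.\ every packet $u[P]$ has at most one sign change. Packets with $t_1\not\subseteq P$ are unchanged. For $P\supseteq t_1$, I would split into cases by the number $k$ of triples of $D$ inside $P$. If $k\ge2$, admissibility says $t_1$ is the first coordinate flipped along an arc of $C_8$ from $s_W[P]$ to $s_{W'}[P]$. Then $u[P]$ is a neighbour of $s_W[P]$ on $C_8$, hence among $v_0,\dots,v_7$. One must check that admissibility as defined refers to the order of $D\cap\binom P3$ as it appears in the enumeration, so $t_1$ is indeed first there. If $k=1$, then $D\cap\binom P3=\{t_1\}$, so $s_W[P]$ and $s_{W'}[P]$ are two signotope packets at Hamming distance $1$. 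Every Hamming-$1$ pair among $v_0,\dots,v_7$ is adjacent on $C_8$: the cycle's neighbours differ in one coordinate, and one can check from the list that no nonadjacent pair differs in exactly one coordinate. So $u[P]=s_{W'}[P]$ is a valid packet.

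I expect this last verification, that Hamming distance $1$ in $C_8$ forces adjacency and more generally that the arcs have length exactly the Hamming distance, to be the only place where the concrete structure of the eight packets is used. It can be checked directly from the list $v_0,\dots,v_7$: $v_i$ and $v_{i+4}$ are complements, and moving along the cycle flips coordinates in the cyclic order $4,3,2,1$.
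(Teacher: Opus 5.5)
Your proof is correct and is essentially the paper's argument. The paper also shows that flipping the triples of $D$ in the admissible order keeps every packet on an arc of $C_8$, or on $\{s_W[P],s_{W'}[P]\}$ when $P$ contains at most one triple of $D$. It does this for all intermediate maps $s_i$ at once and reads off a walk of length $|D|$ in $\B(m,2)$, which spares your re-check that the tail of the enumeration stays admissible. In your case $k=1$ the Hamming-distance-one adjacency check is unnecessary, since $u[P]=s_{W'}[P]$ is already a packet of the signotope $s_{W'}$.
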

\begin{proof}
Suppose first that $\dbr([W],[W'])=|D|$, and let $t_1,\dots,t_{|D|}$ be the triples flipped by the braid moves of a
shortest sequence from $W$ to $W'$, in this order. By Lemma~\ref{lem:parity} every triple of $D$ appears exactly
once, so $t_1,\dots,t_{|D|}$ is an enumeration of $D$. For a $4$-set $P$ containing $k\ge2$ triples of $D$, deleting
the wires outside $P$ projects the sequence to a walk on $C_8$ that flips each of these $k$ triples once
(Lemma~\ref{lem:delete}). This walk never turns back, so it runs along an arc of length $k$, and the $k$ triples
appear in an arc order. Hence the enumeration is admissible.

Conversely, let $t_1,\dots,t_{|D|}$ be an admissible enumeration of $D$, and put $s_0=s_W$ and $s_i=s_{i-1}^{t_i}$.
For a $4$-set $P$ the packets $s_0[P],s_1[P],\dots$ change exactly at the steps with $t_i\subseteq P$. If $P$
contains at least two triples of $D$, they run along an arc of $C_8$; if it contains at most one, they take only the
values $s_W[P]$ and $s_{W'}[P]$. In both cases every packet has at most one sign change, therefore each of
$s_0,\dots,s_{|D|}$ is a signotope, and $s_0,\dots,s_{|D|}$ is a walk of length $|D|$ in $\B(m,2)$ from $s_W$ to
$s_{W'}$. By Theorem~\ref{thm:corr}, $\dbr([W],[W'])=\dB(s_W,s_{W'})\le|D|$, and equality holds by
Lemma~\ref{lem:parity}.
\end{proof}

For the two commutation classes with constant sign vectors, sequences with $\binom m3$ braid moves are the maximal
chains of $\B(m,2)$, and the proof of Lemma~\ref{lem:packet} becomes the classical description of maximal chains, in which the
triples of every $4$-set are flipped in lexicographic or reverse lexicographic
order~\cite[Sect.~2, Thm.~3(c)]{MS}, \cite[Thm.~4.1(A)]{Ziegler}; Ziegler~\cite[Thm.~4.5]{Ziegler} treats
comparable pairs, and the argument for arbitrary pairs is the same.

\section{The gadget}\label{sec:gadget}

The goal of Sections~\ref{sec:gadget}--\ref{sec:braid} is to turn a graph $G$ into two reduced words whose excess
is twice the vertex cover number of $G$. The gadget encodes one edge of $G$. It is a pair of reduced words on seven wires whose braid-move distance exceeds the Hamming
distance by $2$; turning over either of its two beads, in both words, gives a pair without excess
(Proposition~\ref{prop:gadget}(b)).
It is built from three ingredients that reappear in the construction of Section~\ref{sec:construction}: two
blocks of three wires, a bead in each block, and a private wire.

\paragraph{Blocks and the private wire.}
The seven wires are, in label order,
\[
U=\{1,2,3\},\qquad p=4,\qquad V=\{5,6,7\}.
\]
The \emph{blocks} $U$ and $V$ stand for the two endpoints $u<w$ of an edge $uw$ (for a fixed linear order $<$ of
the vertices of $G$), and $p$ is the \emph{private
wire} of the edge. In the construction every vertex $u$ receives a block $X_u$ of three consecutive wires, which
plays the role of $U$ in the gadget of every edge $uw$ with $w>u$ and the role of $V$ in the gadget of every edge
$wu$ with $w<u$; every edge receives its own private wire. Gadgets of different edges share at most one block,
and hence at most one triple (Proposition~\ref{prop:lowerW}).
\begin{center}\small
\begin{tabular}{llll}\toprule
wires & in the gadget & bead triple & in the construction\\\midrule
$U=\{1,2,3\}$ & block of the endpoint $u$ & $123$ & the block $X_u$ of the vertex $u$\\
$p=4$ & private wire & -- & the wire $p_e$ of the edge $e=uw$\\
$V=\{5,6,7\}$ & block of the endpoint $w$ & $567$ & the block $X_w$ of the vertex $w$\\\bottomrule
\end{tabular}
\end{center}

\paragraph{Beads.}
In every word we build, the gadget words below as well as the words of Section~\ref{sec:construction}, the three
wires of a block are swapped with each other in a single factor of three letters, during which they occupy three
consecutive positions. This factor is the \emph{bead} of the block; in the wiring diagram it is a small triangle
(Figure~\ref{fig:gadget}). It is one of the following two words. For a position $q$ let
\[
\beta_q=\sigma_q\sigma_{q+1}\sigma_q,\qquad\beta'_q=\sigma_{q+1}\sigma_q\sigma_{q+1},\qquad
\beta^0_q=\beta_q,\quad\beta^1_q=\beta'_q .
\]
Both reverse the three wires at positions $q,q+1,q+2$. If these are $x<y<z$ in this order (which is automatic
in a reduced word, since the bead swaps all three pairs), $\beta$ swaps them in the order
$\sw{xy},\sw{xz},\sw{yz}$ and $\beta'$ in the reverse order. We say that $\beta^\epsilon_q$ has \emph{type} $\epsilon$. The two types give opposite signs to the \emph{bead triple} $xyz$, and by Theorem~\ref{thm:corr}(b):
\begin{itemize}
\item[(B)] exchanging $\beta_q$ and $\beta'_q$ inside a reduced word is a braid move; it changes $s_W$ exactly on
the bead triple.
\end{itemize}
We say that a bead of type $1$ is \emph{turned over}. In the reduction, the beads that are turned over form
a vertex cover.

\paragraph{The cell.}
The private wire is threaded through the two blocks by the word (Figure~\ref{fig:kappa})
\[
\kappa=\sigma_4\sigma_3\sigma_2\sigma_1\cdot\sigma_4\sigma_3\cdot\sigma_5\sigma_4\sigma_3\sigma_2\cdot
\sigma_6\sigma_5\sigma_4\sigma_3\cdot\sigma_4 .
\]

\begin{figure}[t]
\centering
\input{fig_kappa.tex}
\caption{The word $\kappa$ on the wires $u_1,u_2,u_3$ (the block $U$, blue), $p$ (black) and $w_1,w_2,w_3$ (the
block $V$, red), named by their positions $1,\dots,7$ when $\kappa$ starts; the five parts of $\kappa$ are marked
above. Every pair of wires from different groups crosses once and no pair inside a group crosses (K1); at the end
$V$ is on top and $U$ at the bottom, each in its original order (K2); the numbers mark the crossings of $p$, which
meets $w_1,u_3,u_2,w_2,w_3,u_1$ in this order (K3).}\label{fig:kappa}
\end{figure}

Let $\kappa$ act on seven wires, whatever their labels. Call the wires at positions $1,2,3$ when $\kappa$ starts
$u_1,u_2,u_3$ (the $U$-wires), the wire at position $4$ the $p$-wire, and the wires at positions $5,6,7$
$w_1,w_2,w_3$ (the $V$-wires). Following $\kappa$ letter by letter (Figure~\ref{fig:kappa}) shows:
\begin{itemize}
\item[(K1)] $\kappa$ swaps every pair of wires from two different of these three groups exactly once, and no pair
inside a group;
\item[(K2)] $\kappa$ ends with the $V$-wires at positions $1$--$3$, the $p$-wire at $4$ and the $U$-wires at
$5$--$7$, each group in its previous internal order;
\item[(K3)] the $p$-wire meets the six block wires in the order $w_1,u_3,u_2,w_2,w_3,u_1$.
\end{itemize}

\begin{definition}[the gadget pairs]\label{def:gadget}
For $\epsilon=(\epsilon_1,\epsilon_2)\in\{0,1\}^2$ the \emph{gadget words} are
$\beta^{\epsilon_1}_1\,\beta^{\epsilon_2}_5\,\kappa$ and $\kappa\,\beta^{\epsilon_2}_1\,\beta^{\epsilon_1}_5$,
with sign vectors
\[
g^s_\epsilon=s_{\beta^{\epsilon_1}_1\,\beta^{\epsilon_2}_5\,\kappa},\qquad
g^v_\epsilon=s_{\kappa\,\beta^{\epsilon_2}_1\,\beta^{\epsilon_1}_5}.
\]
Both words are reduced by (K1). In both, $U$ carries a bead of type $\epsilon_1$ and $V$ one
of type $\epsilon_2$ (after $\kappa$, $V$ is at positions $1$--$3$ and $U$ at $5$--$7$), placed before
$\kappa$ in the first word and after it in the second. We write $g^s=g^s_{00}$ and $g^v=g^v_{00}$.
Figure~\ref{fig:gadget} shows the two wiring diagrams.
\end{definition}
\begin{figure}[t]
\centering
\input{fig_gs.tex}\par\medskip\input{fig_gv.tex}
\caption{The wiring diagrams of $\beta_1\beta_5\kappa$ (top, giving $g^s$) and $\kappa\beta_1\beta_5$
(below, giving $g^v$). Wires $1,2,3$ (the block $U$) are blue, wire $4$ ($p$) is black, and wires $5,6,7$
(the block $V$) are red. The shaded regions are the beads.}\label{fig:gadget}
\end{figure}

\begin{proposition}[the gadget]\label{prop:gadget}
\begin{enumerate}
\item[(a)] For every $\epsilon$, the pair $(g^s_\epsilon,g^v_\epsilon)$ differs on the same set $D_\kappa$ of $30$
triples.
\item[(b)] \emph{(OR)} The braid-move distance of the two words is
\begin{gather*}
\dbr\bigl([\beta_1\beta_5\kappa],[\kappa\beta_1\beta_5]\bigr)=|D_\kappa|+2=32,\\
\dbr\bigl([\beta^{\epsilon_1}_1\beta^{\epsilon_2}_5\kappa],[\kappa\beta^{\epsilon_2}_1\beta^{\epsilon_1}_5]\bigr)
=|D_\kappa|=30\quad\text{for }\epsilon\ne(0,0).
\end{gather*}
\end{enumerate}
\end{proposition}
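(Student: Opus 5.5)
The plan is to compute $D_\kappa$ directly from the two words, then get the upper bounds from explicit braid sequences or the packet criterion (Lemma~\ref{lem:packet}), and the lower bound from parity together with a failure of the packet criterion.

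\emph{Part (a).} A triple inside $U$ or inside $V$ is swapped only within its bead. The bead has the same type in both words, and acts on the same wires in label order: before $\kappa$ in the first word, and after $\kappa$ in the second by (K2). So these triples agree for every $\epsilon$. Every other triple has its three swaps either all in $\kappa$ or split between $\kappa$ and one bead, and the bead type $\epsilon$ only affects the bead triple itself. Hence $D(g^s_\epsilon,g^v_\epsilon)$ does not depend on $\epsilon$, and it suffices to compute it for $\epsilon=(0,0)$.

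For a triple consisting of two wires $x<y$ of one block and one outside wire $z$, the swap $\sw{xy}$ comes first in one word and last in the other. Since the order is $\sw{ab},\sw{ac},\sw{bc}$ or its reverse, the orientations differ. The remaining triples, one wire from each group, are ordered inside $\kappa$. Their contribution must be read off by following $\kappa$ letter by letter using (K1)--(K3), remembering that $\kappa$ starts on different wire labels in the two words only through the positions. I would do this bookkeeping carefully, as a table, and check the total against $30$. A naive count gives only $24$ from the block-pair triples, so this is the place where I most expect to have to look closely at how the labels enter each word.

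\emph{Upper bounds in (b).} For $\epsilon\ne(0,0)$, I would exhibit an admissible enumeration of $D_\kappa$ and apply Lemma~\ref{lem:packet}. Concretely, take an explicit sequence of $30$ braid moves sweeping the private wire and the blocks through one another. Realising it as braid moves inside each word via Lemma~\ref{lem:replace} is an alternative to checking all $\binom74$ packets. For $\epsilon=(0,0)$, first turn over the bead of $U$ in the first word (one braid move, by (B)). This gives the first word of the pair $\epsilon=(1,0)$. Then apply the $30$ moves to reach $\kappa\beta_1\beta'_5$, and turn the bead of $U$ back (one move, by (B)). This gives $\le32$.

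\emph{Lower bound.} By Lemma~\ref{lem:parity} the distance is at least $30$ and even, so it suffices to show that $D_\kappa$ has no admissible enumeration for the pair $(g^s,g^v)$. The strategy is to collect, from the $4$-sets $P$ containing two or three triples of $D_\kappa$, the forced precedences given by the unique arc order. For $k=4$ the arc order is determined up to reversal, which still forces orders among consecutive triples. The goal is to exhibit a directed cycle among these precedences.

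The natural $4$-sets are those containing $p$, two wires of one block and one of the other, since there (K3) fixes in which order $p$ meets the block wires. I expect the cycle to pass through triples from both blocks and to use $4$-sets that contain a whole block, hence the bead triple. This explains why turning either bead over changes the packets of exactly those $4$-sets, reverses one arc, and breaks the cycle. Finding this short cycle is the main obstacle; I would first search the $4$-sets $\{u_i,u_j,p,w_k\}$ and $\{u_i,p,w_j,w_k\}$, plus the two $4$-sets $U\cup\{p\}$ and $V\cup\{p\}$, for it.
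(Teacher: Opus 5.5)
Your strategy matches the paper's step for step: $D_\kappa$ does not depend on $\epsilon$ because of the beads, the $24$ block-pair triples lie in $D_\kappa$, the packet criterion with explicit admissible orders handles $\epsilon\ne(0,0)$, the detour $1+30+1$ through the pair $(1,0)$ gives the upper bound $32$, and parity plus a directed cycle of forced arc orders gives the lower bound. The gap is that you defer all three certificates, and for a finite statement like this the certificates are the proof. The missing pieces are the last $6$ triples of $D_\kappa$, the admissible enumerations (on which the bound $32$ also rests), and the cycle, which you yourself name as the main obstacle. As written, you have shown only that the $24$ block-pair triples lie in $D_\kappa$, that the bead triples do not, and that $D_\kappa$ is independent of $\epsilon$.

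\emph{The last six triples.} These follow from (K3) in one line. For $a\in U$ and $b\in V$, the sign of $a4b$ is $+$ exactly when the $p$-wire meets $a$ before $b$, since the first of $\sw{a4},\sw{4b}$ decides. In the first word the beads hand both blocks to $\kappa$ reversed, so $a,b$ act as $u_{4-a},w_{8-b}$; in the second word they act as $u_a,w_{b-4}$. With the order $w_1,u_3,u_2,w_2,w_3,u_1$, this puts $145,146,245,247,346,347$ in $D_\kappa$ and $147,246,345$ outside it, so $|D_\kappa|=30$.

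\emph{The cycle.} Your proposed search space does contain one. The $4$-sets $1246,1467,4567,2456,2345,1234$ each contain exactly three triples of $D_\kappa$. Their forced orders are $124\prec126\prec146$, $146\prec467\prec167$, $467\prec457\prec456$, $256\prec456\prec245$, $245\prec235\prec234$ and $234\prec134\prec124$. These close into $124\prec146\prec467\prec456\prec245\prec234\prec124$. As you predicted, only $1234$ and $4567$ contain a bead triple.

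\emph{The remark on $k=4$.} Drop it. Two mutually reverse arc orders force no precedence between any two triples, only a betweenness condition, so they give no directed edges. The cycle above does not need them.

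\emph{The admissible orders.} For $\epsilon\ne(0,0)$ there is no way around writing down, for each of the three values of $\epsilon$, an explicit order of the $30$ triples. You must then check every $4$-set containing at least two of them, as the paper does in a table. Neither ``a sequence sweeping the wires through one another'' nor an appeal to Lemma~\ref{lem:replace} produces such an order by itself.
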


The proposition concerns eight explicit reduced words on seven wires; we prove it below. \texttt{check.py} also
confirms the four distances by breadth-first search in $\B(7,2)$, which has $24698$ vertices
(see \hyperref[sec:computations]{\emph{Computations}} at the end of the paper).

In words, (b) says that the gadget is an OR of its two beads: if neither bead is turned over, every sequence of braid moves
between the two words needs two extra braid moves (Lemma~\ref{lem:parity}), and turning over either bead removes
them.

\begin{proof}[Proof of Proposition~\ref{prop:gadget}(a)]
Each row of Table~\ref{tab:seq30} below lists the same $30$ triples; this is the set $D_\kappa$. That $g^s$ and $g^v$
differ exactly on these triples is a direct check with Definition~\ref{def:words}: the orientation of a triple is
read off from the order of its three swaps in the two words, or from the two diagrams of Figure~\ref{fig:gadget};
\texttt{check.py} does the same check (see \hyperref[sec:computations]{\emph{Computations}}). The bead triples $123$ and $567$ are not in
$D_\kappa$, since their pairs are swapped only by the beads, which have the same type in both words. Changing the type
of a bead changes the sign vector exactly on its bead triple, in both words (property~(B)), so
$D(g^s_\epsilon,g^v_\epsilon)=D_\kappa$ for every $\epsilon$.
\end{proof}

The lower bound in (b) comes from a cycle of forced orders. A sequence without excess flips the differing triples
of every four wires in an arc order (Section~\ref{sec:packets}), and each of the six $4$-sets of
Table~\ref{tab:cycle} has exactly three differing triples, so that its arc order is unique.

\begin{table}[t]
\centering\small
\begin{tabular}{cccl}
\toprule
$P$ & $g^s[P]$ & $g^v[P]$ & forced order\\
\midrule
1246 & \pl{++++} & \pl{-{}-{}-{}+} & $124\prec126\prec146$\\
1467 & \pl{+-{}-{}-{}} & \pl{-{}-{}++} & $146\prec467\prec167$\\
4567 & \pl{-{}-{}-{}+} & \pl{++++} & $467\prec457\prec456$\\
2456 & \pl{++-{}-{}} & \pl{-{}+++} & $256\prec456\prec245$\\
2345 & \pl{+++-{}} & \pl{-{}-{}-{}-{}} & $245\prec235\prec234$\\
1234 & \pl{++++} & \pl{+-{}-{}-{}} & $234\prec134\prec124$\\
\bottomrule
\end{tabular}
\caption{Six packets of the gadget with $|D_\kappa\cap\binom P3|=3$ and their unique arc orders
(Figure~\ref{fig:C8}); coordinates are ordered $abc,abd,acd,bcd$ for $P=abcd$.}\label{tab:cycle}
\end{table}

\begin{lemma}\label{lem:gadgetlower}
$\dbr([\beta_1\beta_5\kappa],[\kappa\beta_1\beta_5])>|D_\kappa|=30$.
\end{lemma}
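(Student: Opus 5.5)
We argue by contradiction using the packet criterion (Lemma~\ref{lem:packet}). Suppose $\dbr([\beta_1\beta_5\kappa],[\kappa\beta_1\beta_5])=|D_\kappa|$. Then $D_\kappa$ has an admissible enumeration $t_1,\dots,t_{30}$. For every $4$-set $P$ containing at least two triples of $D_\kappa$, the triples of $D_\kappa$ inside $P$ then appear in this enumeration in an arc order of $(g^s[P],g^v[P])$. We apply this only to the six $4$-sets of Table~\ref{tab:cycle}.

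For each of these six sets $P$, the packets $g^s[P]$ and $g^v[P]$ differ in exactly three coordinates. By Observation~\ref{obs:C8} and the discussion of arc orders, the arc of length $3$ from $g^s[P]$ to $g^v[P]$ is then unique, and so is its order of flips. We read it off the cycle of Figure~\ref{fig:C8}. For example, for $P=1246$ the packet goes from $v_0=\pl{++++}$ to $v_5=\pl{-{}-{}-{}+}$ along $v_0,v_7,v_6,v_5$. Identifying the coordinates with $124,126,146,246$, this flips $124$, then $126$, then $146$. The other five rows are checked the same way: compute the two packets from the words (or from Figure~\ref{fig:gadget}), which is the direct verification already underlying Proposition~\ref{prop:gadget}(a), locate them on $C_8$, and record the unique short arc.

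The six forced orders chain together, since the last triple of each row is the first triple of the next:
\[
124\prec126\prec146\prec467\prec167,\quad 467\prec457\prec456,\quad 256\prec456\prec245\prec235\prec234\prec134\prec124 .
\]
Following the chain $124\prec146\prec467\prec456\prec245\prec234\prec124$ gives $124\prec124$. This contradicts the enumeration being a linear order of $D_\kappa$, so no admissible enumeration exists. By Lemma~\ref{lem:packet} the distance then exceeds $|D_\kappa|=30$. By the parity part of Lemma~\ref{lem:parity} it is in fact at least $32$, although the lemma only claims the strict inequality.

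The main obstacle is the correctness of the table entries. One must compute the packets $g^s[P]$ and $g^v[P]$ with the paper's sign convention and the letter-by-letter action of $\kappa$ (properties (K1)--(K3) and the bead positions), and then read the arcs on $C_8$ with the coordinates labelled in the order $abc,abd,acd,bcd$. A sign slip would reverse an arc and break the cycle. I would verify each row by listing the swap orders of the relevant triples in both words. In $\beta_1\beta_5\kappa$ the bead swaps come first; in $\kappa\beta_1\beta_5$ they come last, and there the blocks are relocated according to (K2).
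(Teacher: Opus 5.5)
Your proof is correct and is essentially the paper's own argument: you apply the packet criterion to the six $4$-sets of Table~\ref{tab:cycle}, use their unique arc orders, and obtain the cycle $124\prec146\prec467\prec456\prec245\prec234\prec124$. One small wording slip is that the last triple of a row is not always the first of the next ($467$ and $456$ are the middle triples of the rows $1467$ and $2456$), but every precedence in the cycle you follow is forced by some row, so the contradiction stands.
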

\begin{proof}
By the packet criterion (Lemma~\ref{lem:packet}), it suffices to show that $D_\kappa$ has no admissible enumeration for the pair, that is, no order in which the triples of $D_\kappa$ inside
every $4$-set $P$ containing at least two of them appear in an arc order of $(g^s[P],g^v[P])$. Table~\ref{tab:cycle} lists six such
$4$-sets $P$ with their packets $g^s[P]$ and $g^v[P]$; each packet is the sign vector of the word on four wires
obtained from a gadget word by deleting the other three wires (Section~\ref{sec:words}), and can be read off
Figure~\ref{fig:gadget}. In each of the six rows the two packets differ in exactly three triples, so the arc order is
unique (Figure~\ref{fig:C8}); it is the order in the last column. The six orders chain into the cycle
$124\prec146\prec467\prec456\prec245\prec234\prec124$, so no such order of $D_\kappa$ exists.
\end{proof}

Only two of the six $4$-sets of Table~\ref{tab:cycle} contain a bead triple: $1234$, which forces
$234\prec134\prec124$, and $4567$, which forces $467\prec457\prec456$. Turning over the bead of $U$ (respectively
$V$) negates $123$ (respectively $567$) in both sign vectors and reverses the corresponding forced order, which breaks
the cycle. This is why the gadget is an OR of its two beads.

The upper bounds come from the explicit orders of Table~\ref{tab:seq30}.

\begin{table}[t]
\centering\small
\begin{tabular}{cl}
\toprule
$\epsilon$ & order of the $30$ flips\\
\midrule
$(1,0)$ & $127,124,126,125,137,146,136,145,135,134,247,267,257,256,467,$\\
 & $457,456,367,357,347,356,346,245,235,234,236,237,167,157,156$\\
$(0,1)$ & $237,137,127,156,256,456,356,346,245,145,235,135,125,234,134,$\\
 & $124,236,136,126,146,457,357,257,157,347,367,247,267,467,167$\\
$(1,1)$ & $127,124,126,125,137,146,136,145,135,134,247,256,257,267,456,$\\
 & $457,467,356,357,367,347,346,245,235,234,236,237,156,157,167$\\
\bottomrule
\end{tabular}
\caption{For each $\epsilon\ne(0,0)$, an admissible order of $D_\kappa$ for the gadget pair of $\epsilon$; by the
packet criterion, $\beta^{\epsilon_1}_1\beta^{\epsilon_2}_5\kappa$ and $\kappa\beta^{\epsilon_2}_1\beta^{\epsilon_1}_5$
are therefore at braid-move distance $30$.}\label{tab:seq30}
\end{table}

\begin{proof}[Proof of Proposition~\ref{prop:gadget}(b)]
By Lemma~\ref{lem:parity}, every sequence of braid moves between the two words of a gadget pair has at least
$|D_\kappa|=30$ braid moves, and their number has the parity of $30$.

\emph{Distance $30$ for $\epsilon\ne(0,0)$.} The row of Table~\ref{tab:seq30} for $\epsilon$ is an enumeration of
$D_\kappa$, and a direct check shows that it is admissible for the pair: for every $4$-set $P\subseteq[7]$
containing at least two triples of $D_\kappa$, these triples appear in an arc order of
$(g^s_\epsilon[P],g^v_\epsilon[P])$. (This finite check is also done by \texttt{check.py}.) By the packet criterion (Lemma~\ref{lem:packet}), the
distance is $30$.

\emph{Distance $32$ for $\epsilon=(0,0)$.} By Lemma~\ref{lem:gadgetlower} and parity, the distance is at least
$32$. Conversely, turning over the bead of $U$ in the first word is one braid move (property~(B)) and gives the first
word for $\epsilon=(1,0)$; then $30$ braid moves lead to the second word $\kappa\beta_1\beta'_5$ for
$\epsilon=(1,0)$; and turning the bead of $U$, now at positions $5$--$7$, back is one more braid move. This gives
$32$ braid moves.
\end{proof}

\paragraph{Why seven wires.} Pairs of wiring diagrams with positive excess already exist on six
wires~\cite{FW00,BFR,BFRR}. The construction needs more: two disjoint triples whose flip in both words removes the
excess, here the bead triples $123$ and $567$. None of the $16$ pairs on six wires with positive excess has two
disjoint such triples. Among the $60$ possible cells formed by two blocks with beads and a private wire, $\kappa$ is
the only one with the OR property. Both facts were found by computer (see
\hyperref[sec:computations]{\emph{Computations}}).

\section{The construction}\label{sec:construction}

Let $G$ be a graph with vertex set $[n]$ and edge set $E$; we write an edge as $uw$ with $u<w$, and $\VC(G)$ for the minimum size of a vertex cover of $G$. We construct
two reduced words $W^s_G$ and $W^v_G$ on $m=3n+|E|$ wires, with sign vectors $s_G$ and $v_G$. Both come from one
word $\widehat\Phi$, in which the wires of each block are never swapped with each other; the three swaps
inside a block are supplied by its bead, which stands at the very beginning of $W^s_G$ and at the very end of
$W^v_G$. The
diagram $\widehat\Phi$ is built in two steps. First we draw every block and every private wire as a single
wire and describe how these wires cross (the \emph{shadow sweep} $\Phi$, Section~\ref{sec:shadow}); then we
replace each of these single wires by the wires of its group (the \emph{blow-up}, Section~\ref{sec:blowup}).
Figure~\ref{fig:p3} shows $\Phi$ and the two words for the path $P_3$.

\subsection{Wires, groups and bundles}

For each vertex $u$ there is a \emph{block} $X_u$ of three wires, and for each edge $e$ a \emph{private wire}
$p_e$. Blocks and private wires are called \emph{groups}. The private wire of an edge belongs to its smaller
endpoint: for $u\in[n]$ let $w_1<\dots<w_{k_u}$ be the neighbours of $u$ that are larger than $u$, and let the
\emph{bundle} of $u$ be the list of groups
\[
B_u=(X_u,\Pi_u),\qquad \Pi_u=(p_{uw_1},\dots,p_{uw_{k_u}}).
\]
Listing the bundles $B_1,B_2,\dots,B_n$ one after another gives an order of the groups, the \emph{label order}.
We label the $m=3n+|E|$ wires $1,\dots,m$ following it: the three wires of a block receive three consecutive labels,
and a private wire receives one label.

\begin{example}\label{ex:p3wires}
For the path $P_3$ with edges $12$ and $23$ (Figure~\ref{fig:p3}), the bundles are
$B_1=(X_1,p_{12})$, $B_2=(X_2,p_{23})$ and $B_3=(X_3)$, so the groups in label order are
$X_1,p_{12},X_2,p_{23},X_3$, and the wires are labelled $X_1=\{1,2,3\}$, $p_{12}=4$, $X_2=\{5,6,7\}$, $p_{23}=8$,
$X_3=\{9,10,11\}$.
\end{example}

For an edge $e=uw$ let $A_e=X_u\cup\{p_e\}\cup X_w$. In label order these seven wires are the three wires of
$X_u$, then $p_e$ (which lies in $B_u$), then the three wires of $X_w$ (which lies in the later bundle $B_w$).
We identify them with $U,p,V$ of Section~\ref{sec:gadget}, and $A_e$ with $[7]$ in label order.

\subsection{The shadow sweep}\label{sec:shadow}

In outline, the shadow sweep reverses the order of the bundles $B_1,\dots,B_n$ by exchanging adjacent bundles,
one pair at a time. In round $u$ the later bundles $B_{u+1},\dots,B_n$ pass up through $B_u$ one after another, so
that $B_u$ ends up below all of them. When $uw$ is an edge, the crossing of $B_u$ and $B_w$ is routed so that the
three groups $X_u,p_{uw},X_w$ are exchanged with each other in three consecutive moves, a \emph{cell}; this is
where the gadget of $uw$ will be placed. The rest of this subsection makes the outline precise.

\paragraph{Arrangements of groups and moves.}
In this subsection each group is a single wire. An \emph{arrangement} is the list of all $n+|E|$ groups from
top to bottom; the initial arrangement is the label order. A group $g$ lies \emph{directly above} a group $h$
(and $h$ \emph{directly below} $g$) if $h$ immediately follows $g$ in the arrangement; a set of groups lies directly above
another set if the two sets are consecutive in the list, the first above the second. A \emph{move} exchanges two
groups of which one lies directly above the other; this is a letter $\sigma_i$ acting on the positions of the
groups. If $h_1,\dots,h_k$ are the $k$ groups directly above $g$, with $h_1$ nearest to $g$, then \emph{$g$
passes up through $h_1,\dots,h_k$} means the $k$ moves that exchange $g$ with $h_1$, then with $h_2$, and so on.
Afterwards $g$ lies directly above $h_k$, and $h_1,\dots,h_k$ have moved down by one position in the same order.
The shadow sweep $\Phi$ is a sequence of moves in which every pair of groups is exchanged exactly once, i.e.\ a
reduced word on the groups; at the end the arrangement is the reverse of the label order.

\paragraph{Cells.}
If the three groups $X_u,p_{uw},X_w$ of an edge $uw$ lie in this order at three consecutive positions, the
cell of $uw$ is the sequence of three moves in which $X_w$ passes up through $p_{uw}$ and $X_u$, and
then $p_{uw}$ passes up through $X_u$. As a word it is $\sigma_{i+1}\sigma_i\sigma_{i+1}$, where $i$ is the
position of $X_u$. It exchanges the pairs $\{p_{uw},X_w\}$, $\{X_u,X_w\}$, $\{X_u,p_{uw}\}$ in this order and
leaves the three groups in the order $X_w,p_{uw},X_u$. In the blow-up the three moves of a cell are replaced
together by the word $\kappa$ of Section~\ref{sec:gadget}, and this is where the gadget of the edge is placed.
We therefore describe $\Phi$ so that for every edge $uw$ the three groups $X_u,p_{uw},X_w$ are exchanged with
each other in a cell.

\paragraph{The rounds.}
$\Phi$ consists of rounds $u=1,\dots,n$, followed by a completion. At the start of round $u$ the arrangement is
$B_u,B_{u+1},\dots,B_n$, each bundle in its initial order, followed by the groups of $B_{u-1},\dots,B_1$; for $u=1$
this is the label order. In \emph{round $u$}, for $w=u+1,\dots,n$ in this order, the bundle $B_w$ passes up through
$B_u$, as described in the next paragraph, so that $B_u$ moves down past all later bundles, one bundle at a time.
For $u<w$ let $\Pi_u^{<w}$ and $\Pi_u^{\ge w}$ be the sublists of $\Pi_u$ of the wires $p_{uw'}$ with $w'<w$ and with
$w'\ge w$, respectively; $\Pi_u^{>w}$ is defined in the same way. Every passage starts in the following situation:
\begin{itemize}
\item[(I)] $B_u$ lies directly above $B_w$, the groups of $B_w$ are in their initial order $(X_w,\Pi_w)$, and the
groups of $B_u$ are in the order $(\Pi_u^{<w},X_u,\Pi_u^{\ge w})$.
\end{itemize}
For $w=u+1$ this is the arrangement at the start of the round, since $\Pi_u^{<u+1}$ is empty; for the later passages
it follows from the next paragraph.

\paragraph{One bundle passing up through another.}
Assume (I). \emph{$B_w$ passes up through $B_u$} means that the groups of $B_w$, starting with its top group $X_w$,
pass up one after another through all groups of $B_u$, with one modification when $uw$ is an edge.
\begin{itemize}
\item If $uw$ is not an edge, there is no private wire $p_{uw}$, and the passage consists of these moves. Every group
of $B_w$ is exchanged once with every group of $B_u$, and no other pair is exchanged.
\item If $uw$ is an edge, then $p_{uw}$ is the first wire of $\Pi_u^{\ge w}$, since $\Pi_u$ is in label order, that
is, in the order of the larger endpoints; so $p_{uw}$ lies directly below $X_u$. First $X_w$ passes up through the
other wires of $\Pi_u^{\ge w}$. Then $X_u,p_{uw},X_w$ lie consecutively in this order, and the cell of $uw$ is
applied. Then $X_w$ passes up through $\Pi_u^{<w}$, and the other groups of $B_w$ follow by moves as before. Besides
the pairs of a group of $B_w$ and a group of $B_u$, the cell exchanges the pair $\{X_u,p_{uw}\}$: inside $B_u$ the
wire $p_{uw}$ moves from below $X_u$ to above it.
\end{itemize}
In both cases $B_w$ ends directly above $B_u$, in its initial order, and the groups of $B_u$ end in the order
$(\Pi_u^{<w+1},X_u,\Pi_u^{\ge w+1})$. So (I) holds again when $B_{w+1}$ is about to pass up through $B_u$. At the end of the round, $B_u$ lies below
$B_{u+1},\dots,B_n$ in the order $(\Pi_u,X_u)$, and $B_{u+1},\dots,B_n$ are still in their initial order; this is the
arrangement at the start of round $u+1$.

\paragraph{The completion.}
After round $n$ the arrangement is $B_n,\dots,B_1$, each $B_u$ in the order $(\Pi_u,X_u)$; in particular the
private wires of each $\Pi_u$ lie consecutively, in label order. In the \emph{completion} their order is reversed:
for definiteness, for $j=2,\dots,k_u$ in turn, the wire $p_{uw_j}$ passes up through the private wires of $\Pi_u$
above it. This exchanges each pair of private wires of $\Pi_u$ once, and afterwards the arrangement is the reverse
of the label order. The order does not matter: any sequence of moves that reverses $\Pi_u$ and exchanges each pair once
would do, and the lists $\Pi_u$ for different $u$ are separated by blocks and may be handled in any order.

\begin{example}\label{ex:c4}
For the cycle $C_4$ with edges $12,23,34,14$ the bundles are $B_1=(X_1,p_{12},p_{14})$, $B_2=(X_2,p_{23})$,
$B_3=(X_3,p_{34})$ and $B_4=(X_4)$. The arrangements during the sweep are (top to bottom):
\begin{center}\small
\begin{tabular}{ll}\toprule
start & $X_1,p_{12},p_{14},X_2,p_{23},X_3,p_{34},X_4$\\\midrule
round $1$: $B_2$ through $B_1$; $X_2$ through $p_{14}$, the cell of $12$ & $X_2,p_{12},X_1,p_{14},p_{23},X_3,p_{34},X_4$\\
\phantom{round $1$: $B_2$ through $B_1$;} $p_{23}$ through $p_{14},X_1,p_{12}$ & $X_2,p_{23},p_{12},X_1,p_{14},X_3,p_{34},X_4$\\
\phantom{round $1$: }$B_3$ through $B_1$ (no edge) & $X_2,p_{23},X_3,p_{34},p_{12},X_1,p_{14},X_4$\\
\phantom{round $1$: }$B_4$ through $B_1$; the cell of $14$, $X_4$ through $p_{12}$ & $X_2,p_{23},X_3,p_{34},X_4,p_{12},p_{14},X_1$\\\midrule
round $2$: $B_3$ through $B_2$; the cell of $23$, $p_{34}$ through $B_2$ & $X_3,p_{34},p_{23},X_2,X_4,p_{12},p_{14},X_1$\\
\phantom{round $2$: }$B_4$ through $B_2$ (no edge) & $X_3,p_{34},X_4,p_{23},X_2,p_{12},p_{14},X_1$\\\midrule
round $3$: $B_4$ through $B_3$; the cell of $34$ & $X_4,p_{34},X_3,p_{23},X_2,p_{12},p_{14},X_1$\\\midrule
completion: $p_{14}$ through $p_{12}$ & $X_4,p_{34},X_3,p_{23},X_2,p_{14},p_{12},X_1$\\\bottomrule
\end{tabular}
\end{center}
When $B_4$ passes up through $B_1$, the bundle $B_1$ is in the order $(p_{12},X_1,p_{14})$, so $p_{14}$ lies directly
below $X_1$ and the cell of $14$ comes first; afterwards $X_4$ continues through $\Pi_1^{<4}=(p_{12})$. At the end,
$B_1$ is in the order $(p_{12},p_{14},X_1)$, and the completion reverses $\Pi_1$.
\end{example}

We record what is used later.

\begin{lemma}\label{lem:phi}
$\Phi$ is a reduced word on the groups, and for every edge $e=uw$ the pairs $\{X_u,p_e\}$, $\{X_u,X_w\}$ and
$\{p_e,X_w\}$ are exchanged in the cell of $e$.
\end{lemma}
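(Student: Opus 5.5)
We argue by bookkeeping over the rounds and the completion, using the invariant (I) and the description of a single passage. We show three things: every pair of groups is exchanged at least once, no pair is exchanged twice, and the pairs of an edge are exchanged in its cell. Since every move exchanges two adjacent groups, "each pair exchanged exactly once" is precisely the statement that $\Phi$ is a reduced word on the groups.

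\emph{Classifying the pairs.} A pair of groups either lies inside one bundle $B_u$ or meets two bundles $B_u,B_w$ with $u<w$. A pair of the second kind is exchanged only in round $u$, during the passage of $B_w$ through $B_u$. In every other passage of the rounds, the two bundles involved are a pair $B_{u'},B_{w'}$ with $\{u',w'\}\ne\{u,w\}$, and a passage exchanges only groups of the two bundles taking part, plus possibly one pair inside $B_{u'}$. The completion exchanges only pairs inside some $\Pi_u$.

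By the description of the passage, each group of $B_w$ is exchanged with each group of $B_u$ exactly once: every group of $B_w$ passes once through all groups of $B_u$. When $uw$ is an edge, the cell replaces the exchanges of $X_w$ with $p_{uw}$ and with $X_u$ and does not duplicate them. We check this against the list of moves in the edge case: $X_w$ passes through $\Pi_u^{\ge w}\setminus\{p_{uw}\}$, then the cell, then $\Pi_u^{<w}$. This is every group of $B_u$ once. The other groups of $B_w$ then pass through all of $B_u$.

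\emph{Pairs inside a bundle.} Pairs inside $B_w$ are never exchanged during the passages, because $B_w$ keeps its initial order throughout, as (I) records. Likewise, in round $u$, the pair $\{X_u,p_{uw}\}$ is exchanged exactly once, namely in the cell of $uw$, where $p_{uw}$ moves above $X_u$. The order $(\Pi_u^{<w},X_u,\Pi_u^{\ge w})$ in (I) shows that this is the only time $X_u$ and $p_{uw}$ are exchanged. Two private wires of $\Pi_u$ are never exchanged with each other during the rounds, since their relative order stays the label order. The completion then exchanges each such pair exactly once.

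So every pair is exchanged exactly once, and $\Phi$ is reduced. The second claim is immediate from the edge case of the passage: the three pairs $\{p_e,X_w\}$, $\{X_u,X_w\}$ and $\{X_u,p_e\}$ are exchanged in the cell, and by uniqueness nowhere else.

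\emph{Main obstacle.} The one step that needs care is verifying that (I) really propagates, i.e.\ that $p_{uw}$ lies directly below $X_u$ when $B_w$ arrives. This is the claim that the sublists $\Pi_u^{<w}$ and $\Pi_u^{\ge w}$ update correctly, including when $uw$ is not an edge, in which case $\Pi_u^{<w+1}=\Pi_u^{<w}$. We handle it by induction on $w$ within the round, using that $\Pi_u$ is sorted by larger endpoint.
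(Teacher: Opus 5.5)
Your proposal is correct and is essentially the paper's argument. In both, a pair from different bundles is exchanged exactly once, when $B_w$ passes up through $B_u$; the pair $\{X_u,p_{uw}\}$ is exchanged only in the cell of $uw$; and a pair of private wires of one $\Pi_u$ only in the completion. Your version spells out the edge-case passage and the propagation of (I), which the paper records in the description of the sweep before the lemma.

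One small imprecision: ``pairs inside $B_w$ are never exchanged'' holds only in the passages where $B_w$ is the bundle moving up. In round $w$ the pairs $\{X_w,p_{ww'}\}$ are exchanged, but your next sentences cover exactly that case.
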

\begin{proof}
By the description above, every move and every cell acts on groups at consecutive positions. Two groups of
different bundles $B_u$, $B_w$ with $u<w$ are exchanged exactly once, when $B_w$ passes up through $B_u$. The
pair $\{X_u,p_{uw}\}$ is exchanged only in the cell of $uw$, and a pair of private wires of the same $\Pi_u$ only
in the completion. These are all pairs of groups, each exchanged exactly once.
\end{proof}

\begin{figure}[tp]
\centering
\input{fig_p3_phi.tex}

\medskip
{\small $s_G$:}\par\input{fig_p3_blowup.tex}

\medskip
{\small $v_G$:}\par\input{fig_p3_blowup_v.tex}
\caption{The construction for the path $P_3$ ($1-2-3$), with $m=11$ wires (Example~\ref{ex:p3}). Top: the
shadow sweep $\Phi$ on the groups $X_1,p_{12},X_2,p_{23},X_3$. Middle: the word $W^s_G$, the three beads followed
by the blow-up $\widehat\Phi$. Bottom: the word $W^v_G$, the blow-up $\widehat\Phi$ followed by the three beads.
Colours: $X_1=\{1,2,3\}$ blue, $p_{12}=4$ black, $X_2=\{5,6,7\}$ red, $p_{23}=8$ grey, $X_3=\{9,10,11\}$ green;
the shaded regions, labelled in the figure, mark the two cells, the beads and the two copies of $\kappa$.}\label{fig:p3}
\end{figure}

\subsection{The blow-up and the two words}\label{sec:blowup}

We now replace each group by its wires, three for a block and one for a private wire, and each step of $\Phi$ (a
move, or a whole cell) by its \emph{factor}, a word on the positions of these wires: when $h$ passes up through $g$,
the wires of $h$ pass up one after another through the wires of $g$, and a cell becomes $\kappa$. With $q$ the top
position of the upper group, the factors are:
\begin{itemize}
\item[(pp)] a private wire passes up through a private wire: the factor is $\sigma_q$;
\item[(Xp)] a private wire passes up through a block: the factor is $\sigma_{q+2}\sigma_{q+1}\sigma_q$;
\item[(pX)] a block passes up through a private wire: the factor is $\sigma_q\sigma_{q+1}\sigma_{q+2}$;
\item[(XX)] a block passes up through a block: the factor is
$\sigma_{q+2}\sigma_{q+1}\sigma_q\cdot\sigma_{q+3}\sigma_{q+2}\sigma_{q+1}\cdot\sigma_{q+4}\sigma_{q+3}\sigma_{q+2}$;
\item[(cell)] the cell of an edge $e$: the factor is $\kappa$, shifted to the seven positions $q,\dots,q+6$ of $A_e$.
\end{itemize}
Every factor exchanges each pair of wires from different groups once and keeps the order inside each group (for
$\kappa$ by (K1) and (K2)). Call the resulting word $\widehat\Phi=\phi_1\phi_2\cdots\phi_J$.

\paragraph{The beads and the two words.}

In $\widehat\Phi$ the three wires of a block are never exchanged with each other, and between any two factors they
occupy three consecutive positions. This lets us insert the bead of the block between any two factors; it supplies
the three missing swaps. For $0\le k\le J$
and $u\in[n]$ let $q_u(k)$ be the top position of $X_u$ after $\phi_1\cdots\phi_k$; the block occupies the
positions $q_u(k),q_u(k)+1,q_u(k)+2$. For $\epsilon\colon[n]\to\{0,1\}$ put
\[
\mathcal B^\epsilon_k=\prod_{u\in[n]}\beta^{\epsilon_u}_{q_u(k)}\qquad\text{and}\qquad
W^\epsilon_k=\phi_1\cdots\phi_k\;\mathcal B^\epsilon_k\;\phi_{k+1}\cdots\phi_J ,
\]
i.e.\ $W^\epsilon_k$ is the word $\widehat\Phi$ with all beads inserted after its $k$-th factor, the bead of
$u$ of type $\epsilon_u$. Beads of different blocks act on disjoint position triples and their letters differ
by at least $2$, so the order of the product is irrelevant up to $\equiv$. Finally let
\[
W^s_G=W^0_0,\qquad W^v_G=W^0_J,\qquad s_G=s_{W^s_G},\qquad v_G=s_{W^v_G}.
\]
So all beads, of type $0$, stand at the very beginning of $W^s_G$ and at the very end of $W^v_G$
(Figure~\ref{fig:p3}). In Section~\ref{sec:braid} we move the beads through the diagram one factor at a time,
$W^\epsilon_0,W^\epsilon_1,\dots,W^\epsilon_J$; Section~\ref{sec:onestep} describes what a single step changes.

\begin{lemma}\label{lem:blowup}
In every $W^\epsilon_k$, before and after each factor, the groups occupy consecutive positions, in the current
order of $\Phi$. Every pair of wires is swapped exactly once: a pair from different groups in the factor of the step
of $\Phi$ that exchanges the two groups, and a pair inside a block in its bead. In particular $W^\epsilon_k$ is
reduced.
\end{lemma}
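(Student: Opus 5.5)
The plan is to prove the first sentence by induction along the factors and then read off the swap count. I will first establish the invariant for $\widehat\Phi$ alone, and afterwards account for the inserted beads.

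\emph{Step 1: consecutive positions in $\widehat\Phi$.} I induct on $k$, showing that after $\phi_1\cdots\phi_k$ the wires of each group occupy consecutive positions, in the order of the groups given by the arrangement of $\Phi$ after its first $k$ steps, and that the wires inside each block are in label order. Initially this is the label order (the labels follow the groups). For the inductive step I check each factor type. (pp), (Xp), (pX) and (XX) are exactly the letters by which the wires of the lower group pass up one by one through the wires of the upper group. Each acts only on the positions $q,\dots$ of the two adjacent groups, exchanges each cross pair once, and leaves both groups contiguous in their internal order, now in swapped order. For (cell), (K1) and (K2) give the same statement for the three groups $X_u,p_e,X_w$, which by the description of $\Phi$ lie consecutively in this order at the start of the cell. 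Their end order $X_w,p_e,X_u$ is the one the cell produces in $\Phi$. All other groups are untouched by each factor.

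\emph{Step 2: inserting the beads.} The bead $\beta^{\epsilon_u}_{q_u(k)}$ acts only on the three positions of $X_u$ and reverses them. The beads of different blocks act on disjoint positions. So after $\mathcal B^\epsilon_k$ every group still occupies the same positions, but each block is reversed internally. The factors after position $k$ do not care about the internal order of a group: each factor is defined by positions only, and a block is moved as a unit (by (K2) in the cell case). So the consecutive-positions statement persists after the beads, with the only change being the internal order of the blocks. This settles the first sentence, and it is also the one subtle point: once the blocks are reversed, the wires in $\kappa$ that play the roles $u_1,u_2,u_3$ are relabelled. Nothing in the claim depends on which wire plays which role, since (K1) and (K2) hold for arbitrary labels.

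\emph{Step 3: the swap count.} A pair of wires from different groups $g,h$ is swapped only by a factor acting on both groups, because factors move only wires of the groups involved. By Lemma~\ref{lem:phi} exactly one step of $\Phi$ exchanges $g$ and $h$. Every step of $\Phi$ involving both $g$ and $h$ is that exchanging step: the only multi-group step is a cell, and inside a cell each pair of the three groups is exchanged exactly once, by Lemma~\ref{lem:phi}. That factor swaps the wire pair exactly once, by the factor description (and by (K1) for $\kappa$). A pair inside a block is never swapped by a factor, since the factors keep the internal order, and it is swapped exactly once by the bead, which reverses the three wires. Hence every pair is swapped exactly once, so $W^\epsilon_k$ is reduced in the sense of Definition~\ref{def:words}.

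The main obstacle is only bookkeeping: checking that each factor formula has the stated effect. For (XX), for instance, the word $\sigma_{q+2}\sigma_{q+1}\sigma_q$ moves the top wire of the lower block up through the three wires above it, and the two later triples do the same for the next two wires. This is a routine trace.
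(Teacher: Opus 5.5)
Your proposal is correct and is essentially the paper's proof: induction along the word, using that each factor acts only on the positions of its groups, swaps each cross-group pair once and keeps internal orders, while a bead reverses its block in place, followed by Lemma~\ref{lem:phi} for the swap count. You make explicit the bookkeeping the paper leaves implicit. One phrase should be tightened: a pair inside a block is never swapped by a factor because the letters of a factor swap only cross-group pairs (for $\kappa$ this is (K1)), which is slightly stronger than saying that the factor preserves the internal order.
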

\begin{proof}
Each factor acts on the positions of its groups and swaps exactly the pairs of wires from different ones of them,
keeping the internal orders, and a bead reverses its block in place. Induction along the word gives the first
statement, and Lemma~\ref{lem:phi} (every pair of groups is exchanged by exactly one step) the second.
\end{proof}

\subsection{One step}\label{sec:onestep}
In Section~\ref{sec:braid} we go from $W^0_0$ to $W^0_J$ by braid moves in three phases: turn over the beads of a
vertex cover, go from $W^\epsilon_0$ to $W^\epsilon_J$ through $W^\epsilon_1,\dots,W^\epsilon_{J-1}$, and turn
the beads back. A step from $W^\epsilon_k$ to $W^\epsilon_{k+1}$ moves the beads past the factor $\phi_{k+1}$, and it
involves only a few wires. Let $Z$ be the set of wires of the groups of $\phi_{k+1}$, and let $b$ and $b^+$ be the
products of the beads of the blocks among these groups, at their positions before and after $\phi_{k+1}$. All other
beads commute with $\phi_{k+1}$, since their letters differ from those of $\phi_{k+1}$ by at least $2$. So the step
only replaces $b\,\phi_{k+1}$ by $\phi_{k+1}b^+$, a word on the wires of $Z$. For a factor of kind (pp) there is no
bead and nothing changes; for the other kinds $Z$ has $4$, $6$ or $7$ wires. Let $D_k=D(s_{W^\epsilon_k},
s_{W^\epsilon_{k+1}})$ be the set of triples whose sign the step changes.

\begin{lemma}[one step]\label{lem:step}
Let $\epsilon\colon[n]\to\{0,1\}$ and $0\le k<J$. Then $s_{W^\epsilon_k}$ and $s_{W^\epsilon_{k+1}}$ agree outside
$\binom Z3$, and $D_k$ and the number of braid moves needed to go from $W^\epsilon_k$ to $W^\epsilon_{k+1}$ are as
follows:
\begin{center}\small
\begin{tabular}{llllc}\toprule
kind of $\phi_{k+1}$ & $|Z|$ & triples of $D_k$ & $|D_k|$ & braid moves\\\midrule
(pp) & $2$ & none & $0$ & $0$\\
(Xp), (pX) & $4$ & two wires in the block, one in the private wire & $3$ & $3$\\
(XX) & $6$ & two wires in one block, one in the other & $18$ & $18$\\
(cell) of $uw$ & $7$ & the copy of $D_\kappa$ in $A_{uw}$ & $30$ & $30$ if $\epsilon_u=1$ or $\epsilon_w=1$\\\bottomrule
\end{tabular}
\end{center}
\end{lemma}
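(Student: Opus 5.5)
The plan is to localize via Lemma~\ref{lem:replace} and then treat each kind of factor on its own, using the gadget (Proposition~\ref{prop:gadget}) for cells and the packet criterion for the rest. By the discussion before the lemma, $W^\epsilon_k$ and $W^\epsilon_{k+1}$ differ only in that $b\,\phi_{k+1}$ is replaced by $\phi_{k+1}b^+$, after commuting the other beads past $\phi_{k+1}$. Both factors use only the letters on the positions of $Z$, and both swap every pair of $Z$ exactly once. Pairs from different groups are swapped by $\phi_{k+1}$, and pairs inside a block by its bead. By Lemma~\ref{lem:blowup} the groups of $Z$ occupy consecutive positions before $\phi_{k+1}$. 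So Lemma~\ref{lem:replace}(a) gives agreement outside $\binom Z3$, and Lemma~\ref{lem:replace}(b) reduces the count of braid moves to the same question for the two local words on $|Z|$ wires. A sequence of that length is then a sequence between $W^\epsilon_k$ and $W^\epsilon_{k+1}$. The lower bound $|D_k|$ is Lemma~\ref{lem:parity}, so in each case it suffices to identify $D_k$ and exhibit a sequence of $|D_k|$ braid moves.

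For the local words the wires of $Z$ need not be in label order, because $Z$ enters in the current order of $\Phi$. I would relabel them by their positions: signs of triples inside $Z$ are then determined by the local word up to the fixed relabelling. This is enough to identify $D_k$ as a set, since whether two words agree on a triple does not depend on the convention.

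\emph{Case (pp):} nothing changes. \emph{Case (Xp)/(pX):} on four wires the local words are $\beta\,\sigma_3\sigma_2\sigma_1$ versus $\sigma_3\sigma_2\sigma_1\,\beta$ (or the analogous pair). A triple inside the block keeps its sign, since its swaps all lie in the bead of the same type. A triple with two block wires and the private wire changes sign, because the private wire now crosses the pair on the other side of the bead's crossing. That gives three triples. A direct sequence of three braid moves exists, pushing the private wire across the triangle one crossing at a time. Equivalently, the packet criterion applies: the single $4$-set's packets differ in three coordinates, and any arc order works.

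\emph{Case (XX):} the same reasoning shows that triples inside a block are unchanged and every triple with two wires in one block and one in the other changes, $2\cdot 3\cdot 3=18$ of them. Triples with three different group-wires do not exist here. For the upper bound I would move the two beads past the (XX) factor one at a time. Moving one bead past the other block's three wires decomposes as three (Xp)-type passages, each over a single wire of the other block, with $3$ moves each. Doing this for both beads gives $18$ braid moves. Alternatively, one can give an admissible enumeration and check it on the $4$-sets.

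\emph{Case (cell):} the local words are exactly the gadget words $\beta^{\epsilon_u}_1\beta^{\epsilon_w}_5\kappa$ and $\kappa\,\beta^{\epsilon_w}_1\beta^{\epsilon_u}_5$. Here the wires of $A_{uw}$ do enter in label order $X_u,p_{uw},X_w$, as arranged by the cell. Proposition~\ref{prop:gadget}(a) then gives $D_k$ as the copy of $D_\kappa$, and (b) gives $30$ moves when $\epsilon\ne(0,0)$.

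The main obstacle is bookkeeping rather than ideas. One point is checking that the wires of $Z$ entering in non-label order does not spoil the identification of $D_k$ for (Xp), (pX) and (XX). Another is verifying that the 18-move decomposition really consists of valid intermediate reduced words. For the latter I would rely on Lemma~\ref{lem:replace}(b) applied to successive sub-blocks.
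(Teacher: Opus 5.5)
Your proposal is correct. Its structure is the paper's: commute the other beads away, apply Lemma~\ref{lem:replace} to $L=b\,\phi_{k+1}$ and $L'=\phi_{k+1}b^+$, identify $D_k$ by noting that the swap inside a block moves from first to last among the three swaps of a mixed triple, and quote Proposition~\ref{prop:gadget} for cells.

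One remark in your proposal is wrong, though harmless. The wires of $Z$ enter $L$ in label order in every case, not only for cells. Every pair of them is swapped in $L$, hence in no letter of $\mathsf A$. The paper phrases this as: the groups of $\phi_{k+1}$ have not yet been exchanged, and a block is reversed only by its bead, which lies in $L$. So Lemma~\ref{lem:replace}(a) gives $s_W|_Z=s_L$ directly. Your convention-free workaround is valid but not needed, and the same reasoning is what justifies label order in the cell case.

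The real difference is the upper bound for (XX). The paper writes down an admissible enumeration and checks the packet criterion on all $4$-subsets of $Z$, including the mixed ones $\{x,x',y,y'\}$. You instead realize the $18$ moves as six successive $4$-wire replacements.

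\begin{itemize}
\item For the bead of the upper block $X$, this works on the factor as written, since it is three (Xp)-shaped factors $\sigma_{r+2}\sigma_{r+1}\sigma_r$, $r=q,q+1,q+2$, in a row.
\item For the bead of $Y$, you must first rewrite the factor by commutations so that the wires of $X$ pass down through $Y$ one at a time, giving three (pX) steps. This is legitimate because letters whose indices differ by at most $1$ occur in the same relative order in both words. You did not state this step, but it is routine.
\end{itemize}

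Moving the bead of $X$ first, your sequence flips exactly the paper's enumeration. First come the triples $xx'y$ for $y$ decreasing; the order is decreasing because the bead of $Y$, standing before $\phi_{k+1}$, has already reversed $Y$. Then come the triples $xyy'$ for $x$ decreasing.

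Your route gets reducedness of every intermediate word for free from Lemma~\ref{lem:replace} and skips the mixed-packet check. The paper's route avoids the commutation rewriting and the position bookkeeping, at the cost of that check.
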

\begin{proof}
\emph{Reduction to $Z$.} The other beads act on positions outside the interval of positions of $Z$, and their
letters differ from those of $\phi_{k+1}$, $b$ and $b^+$ by at least $2$. Using commutations we may therefore write
$W^\epsilon_k\equiv\mathsf A\,L\,\mathsf C$ and $W^\epsilon_{k+1}\equiv\mathsf A\,L'\,\mathsf C$ with $L=b\,\phi_{k+1}$
and $L'=\phi_{k+1}b^+$. After $\mathsf A$ the wires of $Z$ occupy consecutive positions in label order
(Lemma~\ref{lem:blowup}): the groups of $\phi_{k+1}$ have not been exchanged yet, and a block is reversed only by its
bead, which lies in $L$. By Lemma~\ref{lem:replace}, $s_{W^\epsilon_k}$ and $s_{W^\epsilon_{k+1}}$ agree outside
$\binom Z3$, their restrictions to $Z$ are $s_L$ and $s_{L'}$ (reading $L,L'$ as words on the $|Z|$ wires), and a
sequence of braid moves from $L$ to $L'$ gives one from $W^\epsilon_k$ to $W^\epsilon_{k+1}$ with the same number
of braid moves. Since every such sequence has at least $|D_k|$ braid moves, it remains to compare $L$ and $L'$,
words on at most $7$ wires.

\emph{The cases.} (pp): $L=L'$. In the other cases no bead triple changes: $\phi_{k+1}$ swaps no pair inside a block,
so in $L$ and in $L'$ each bead acts on its block in label order and gives its bead triple the sign of its type.

(Xp), (pX), (XX): a triple with two wires $x,x'$ in a block and its third wire in the other group changes sign. Its
swap $\sw{xx'}$ lies in $b$, before its two other swaps, in $L$, and in $b^+$, after them, in $L'$; as the labels
of a block are consecutive, $x,x'$ are the two smallest or the two largest wires of the triple, and by
Definition~\ref{def:words} the sign of the triple records whether their swap comes first or last. These triples
and the bead triples are all triples of $Z$, which gives $D_k$. For (Xp) and (pX), $Z$ is a single $4$-set and $D_k$
consists of three of its triples, so the unique arc order of $(s_L[Z],s_{L'}[Z])$ is admissible, and the packet
criterion (Lemma~\ref{lem:packet}) gives $3$ braid moves. For (XX), let $X$ be the block above $Y$ before
$\phi_{k+1}$; its labels are smaller than those of $Y$. In $L$ the swaps inside the blocks come first, so $s_L$ is
$+$ on the triples $xx'y$ and $-$ on the triples $xyy'$ ($x,x'\in X$, $y,y'\in Y$), and $s_{L'}$ is the opposite.
Flip first, for $y\in Y$ in decreasing order, the three triples $xx'y$ in the arc order of the $4$-set $X\cup\{y\}$,
and then, for $x\in X$ in decreasing order, the three triples $xyy'$ in the arc order of the $4$-set $\{x\}\cup Y$.
This enumeration of $D_k$ is admissible: the $4$-subsets of $Z$ are $X\cup\{y\}$, $\{x\}\cup Y$, and
$\{x,x',y,y'\}$ with $x<x'$ and $y<y'$; the first two kinds are respected by construction, and for the third the
packet goes from $(+,+,-,-)$ to $(-,-,+,+)$, and the order flips $xx'y'$, $xx'y$, $x'yy'$, $xyy'$ in this order,
which is the arc through $(+,-,-,-)$, $(-,-,-,-)$, $(-,-,-,+)$. So the packet criterion gives $18$ braid moves.

(cell): with the positions of $A_{uw}$ shifted to $1,\dots,7$, we have
$L=\beta^{\epsilon_u}_1\beta^{\epsilon_w}_5\kappa$ and $L'=\kappa\,\beta^{\epsilon_w}_1\beta^{\epsilon_u}_5$,
because $\kappa$ moves $X_w$ to the positions $1$--$3$ and $X_u$ to $5$--$7$ (property (K2)). These are the gadget
words of Definition~\ref{def:gadget}, and Proposition~\ref{prop:gadget} gives $D_k$ and, if
$(\epsilon_u,\epsilon_w)\ne(0,0)$, $30$ braid moves.

These finitely many local statements, for all types of the beads involved, were also checked by computer (see
\hyperref[sec:computations]{\emph{Computations}}).
\end{proof}

\paragraph{The steps change disjoint sets of triples.}
The following corollary is what makes the moves of the beads add up to exactly $|D|$ braid moves.

\begin{corollary}\label{cor:disjoint}
For every $\epsilon$ the sets $D_0,\dots,D_{J-1}$ are pairwise disjoint, and their union is $D=D(s_G,v_G)$. Hence
$|D|=3n(m-3)+6|E|$.
\end{corollary}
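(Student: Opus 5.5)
The proof tracks, for every triple, the unique step that can change its sign. By Lemma~\ref{lem:step}, $D_k\subseteq\binom{Z_k}3$, where $Z_k$ is the set of wires of the groups of $\phi_{k+1}$. We classify triples by the groups their wires belong to and show that each triple lies in at most one $D_k$.

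\emph{Triples inside a block.} A bead triple lies in no $D_k$, since every step preserves the bead types.

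\emph{Two wires $x,x'$ in a block $X$ and the third $y$ in another group $g$.} By Lemma~\ref{lem:step}, such a triple can lie in $D_k$ only if $\phi_{k+1}$ is a factor whose groups include both $X$ and $g$. For (Xp), (pX) and (XX) these are exactly the steps of $\Phi$ that exchange $X$ with $g$. For a cell of $uw$, the groups are $X_u,p_{uw},X_w$, and the cell exchanges each pair of them. By Lemma~\ref{lem:phi}, each pair of groups is exchanged by exactly one step, so at most one step involves both $X$ and $g$. I would add one check here: in the cell case, the triple, if it lies in $D_k$ at all, lies in the copy of $D_\kappa$. This holds because Lemma~\ref{lem:step} identifies $D_k$ with that copy.

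\emph{Three wires in three different groups $g_1,g_2,g_3$.} Such a triple lies in $Z_k$ only for a cell factor whose three groups are exactly $g_1,g_2,g_3$. This means $\{g_1,g_2,g_3\}=\{X_u,p_{uw},X_w\}$ for an edge $uw$, and that edge is determined by its private wire. So there is at most one such step, namely the cell of $uw$.

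This gives pairwise disjointness. Since the $D_k$ are pairwise disjoint and consecutive sign vectors differ exactly on $D_k$, every triple changes sign at most once along $s_{W^\epsilon_0},\dots,s_{W^\epsilon_J}$. Hence $D(s_{W^\epsilon_0},s_{W^\epsilon_J})=\bigcup_k D_k$. For $\epsilon=0$ the endpoints are $W^s_G$ and $W^v_G$, so this union is $D$. For general $\epsilon$ the union is the same set: $D_k$ does not depend on $\epsilon$, since Lemma~\ref{lem:step} describes it independently of $\epsilon$, and in the cell case Proposition~\ref{prop:gadget}(a) gives $D_\kappa$ for every $\epsilon$.

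\emph{The count.} We sum $|D_k|$ by kind of step. Each step exchanging a block with a private wire, in either direction, contributes $3$. Each step exchanging two blocks contributes $18$. Each cell contributes $30$.

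The pairs of groups split as follows. There are $\binom n2$ block--block pairs; of these, $|E|$ are exchanged inside cells and the rest by (XX) steps. There are $n|E|$ block--private pairs; of these, $2|E|$ are exchanged inside cells and the rest by (Xp) or (pX) steps. The total is therefore
\[
18\Bigl(\tbinom n2-|E|\Bigr)+3\bigl(n|E|-2|E|\bigr)+30|E|
=9n(n-1)+3n|E|+6|E|.
\]
Since $m-3=3n+|E|-3$, we have $3n(m-3)=9n^2+3n|E|-9n$, so the total equals $3n(m-3)+6|E|$. The bookkeeping of which pairs fall inside cells is the only delicate point; the cell's $30=3\cdot3+18+3\cdot 1$ doesn't split this way, which is why $|E|$ is accounted separately.
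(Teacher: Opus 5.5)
Your proof is correct and is essentially the paper's argument. Disjointness comes from the fact that every pair is exchanged exactly once: you apply this to pairs of groups via Lemma~\ref{lem:phi} after a case split on how a triple meets the groups, while the paper applies it to pairs of wires via Lemma~\ref{lem:blowup}. The union follows by telescoping, with independence of $\epsilon$ coming from the $\epsilon$-free description of $D_k$ in Lemma~\ref{lem:step}; the paper instead notes that the endpoints differ from $s_G,v_G$ on the same bead triples. The count $30|E|+18\bigl(\binom n2-|E|\bigr)+3(n-2)|E|$ is the same.

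Your closing aside about splitting the $30$ is unnecessary, and its decomposition is inaccurate. $D_\kappa$ consists of the $18$ triples with two wires in one block and one in the other, the $3+3$ triples with two wires in a block and the private wire as third, and $6$ of the $9$ triples meeting all three groups. Nothing in your argument depends on this aside.
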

\begin{proof}
Let $t\in D_k$. By Lemma~\ref{lem:step}, $t\subseteq Z$. The three wires of $t$ do not all lie in one block: the
factor $\phi_{k+1}$ keeps the order inside each block, so the sign of a bead triple is fixed by the type of its bead,
which the step does not change. So $t$ has wires in at least two groups of $\phi_{k+1}$, at least two of its three
pairs join different groups, and $\phi_{k+1}$ swaps both. If $t$ also lay in $D_l$ with $l\ne k$, two of its pairs would
also be swapped in $\phi_{l+1}$; since $t$ has only three pairs, one pair would be swapped twice, contrary to
Lemma~\ref{lem:blowup}. So the sets $D_k$ are pairwise disjoint, every triple changes sign in at most one step, and
$\bigcup_kD_k=D(s_{W^\epsilon_0},s_{W^\epsilon_J})$. This is $D$, because $s_{W^\epsilon_0}$ and $s_{W^\epsilon_J}$
arise from $s_G$ and $v_G$ by negating the same bead triples (property~(B) in Section~\ref{sec:gadget}).

For the count: by Lemma~\ref{lem:phi}, $\widehat\Phi$ has $|E|$ factors (cell), $\binom n2-|E|$ factors (XX), and
$(n-2)|E|$ factors (Xp) or (pX), since each cell contains two of the $n|E|$ pairs of a block and a private wire.
Lemma~\ref{lem:step} gives $|D|=30|E|+18\bigl(\binom n2-|E|\bigr)+3(n-2)|E|=3n(m-3)+6|E|$.
\end{proof}

\begin{corollary}\label{cor:edgeD}
For every edge $e$, deleting all wires outside $A_e$ turns $W^0_0$ into
$\beta_1\beta_5\kappa$ and $W^0_J$ into a word equivalent to $\kappa\beta_1\beta_5$. In particular $s_G|_{A_e}=g^s$
and $v_G|_{A_e}=g^v$.
\end{corollary}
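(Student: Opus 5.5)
We prove Corollary~\ref{cor:edgeD} by tracking what the restriction to $A_e$ does to each factor of $W^0_0=\mathcal B^0_0\,\widehat\Phi$ and of $W^0_J=\widehat\Phi\,\mathcal B^0_J$. Deleting wires commutes with concatenation: $(W_1W_2)|_A$ is $W_1|_A$ followed by $W_2|_A$, with positions read among the wires of $A$. So it suffices to restrict each factor separately.

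\emph{The beads.} Restricting the bead of a block $X_x$ gives the empty word unless $x\in\{u,w\}$. The wires of $A_e$ start in label order: first $X_u$, then $p_e$, then $X_w$. Hence the beads of $X_u$ and $X_w$ restrict to $\beta_1$ and $\beta_5$. In $W^0_J$, after $\widehat\Phi$ the arrangement is the reverse of the label order on groups, with each block in its internal order. So among the wires of $A_e$ we find $X_w$ at positions $1$--$3$, then $p_e$, then $X_u$ at $5$--$7$. The beads then restrict to $\beta_1\beta_5$ in some order, and these two commute.

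\emph{The factors of $\widehat\Phi$.} Every factor swaps only pairs of wires from different groups, and every pair is swapped exactly once (Lemma~\ref{lem:blowup}). The restriction therefore consists of the swaps between two of the three groups $X_u,p_e,X_w$. By Lemma~\ref{lem:phi} all three such pairs of groups are exchanged in the cell of $e$ and nowhere else. Every factor other than that cell's $\kappa$ therefore restricts to the empty word.

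It remains to see that the cell factor restricts to $\kappa$ itself. At the moment the cell is applied, the seven wires of $A_e$ occupy consecutive positions $q,\dots,q+6$ in label order. The factor is $\kappa$ shifted to exactly these positions, so restricting to $A_e$ just shifts the positions back to $1,\dots,7$.

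Putting the pieces together, $W^0_0|_{A_e}=\beta_1\beta_5\kappa$ and $W^0_J|_{A_e}=\kappa\beta_1\beta_5$, up to the order of the commuting beads. The statement about sign vectors follows because restricting a word restricts its sign vector (the paragraph before Lemma~\ref{lem:delete}), and because equivalent words have the same sign vector (Theorem~\ref{thm:corr}(a)). The only point that needs care is the positional bookkeeping: the cell's $\kappa$ acts on precisely the wires of $A_e$, in label order, and the two blocks sit at relative positions $1$ and $5$. This is exactly what the cell definition and Lemma~\ref{lem:blowup} guarantee.
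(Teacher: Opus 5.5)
Your proof is correct and follows essentially the paper's route: by Lemmas~\ref{lem:blowup} and~\ref{lem:phi}, only the beads of $X_u,X_w$ and the cell of $e$ swap pairs inside $A_e$. You compute their restrictions directly, where the paper reads them off as the words $L,L'$ from the proof of Lemma~\ref{lem:step}. One small slip, which does no harm: in $W^0_0$ the beads at the front have already reversed $X_u$ and $X_w$, so when the cell is applied the wires of $A_e$ are not in label order, but the restricted letters depend only on the positions $q,\dots,q+6$ occupied by $A_e$, not on which wire sits where.
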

\begin{proof}
Let $e=uw$. By Lemma~\ref{lem:blowup}, the only letters of $W^0_0$ and $W^0_J$ that swap two wires of $A_e$ are those
of the beads of $X_u$ and $X_w$ and of the cell of $e$, before which the beads stand in $W^0_0$ and after which they
stand in $W^0_J$. So deleting the wires outside $A_e$ gives the words $L$ and $L'$ of the step of the cell of $e$
for $\epsilon=0$, up to commuting the two beads, which are $\beta_1\beta_5\kappa$ and $\kappa\beta_1\beta_5$ by the
proof of Lemma~\ref{lem:step}. Their sign vectors are $g^s$ and $g^v$ (Definition~\ref{def:gadget}).
\end{proof}

\begin{example}\label{ex:p3}
We continue Example~\ref{ex:p3wires} (Figure~\ref{fig:p3}). In round $1$ the bundle $B_2=(X_2,p_{23})$ passes up
through $B_1=(X_1,p_{12})$: first the cell of $12$ is applied, then $p_{23}$ passes up through $B_1$. Then $B_3=(X_3)$
passes up through $B_1$. In round $2$ the bundle $B_3$ passes up through $B_2$, with the cell of $23$. The gadgets
$A_{12}=\{1,\dots,7\}$ and $A_{23}=\{5,\dots,11\}$ share only the block $X_2$, and hence only the triple $567$;
$X_2$ plays the role of $V$ in the gadget of $12$ and of $U$ in the gadget of $23$. By Corollary~\ref{cor:disjoint},
$|D|=3\cdot3\cdot8+6\cdot2=84$.
\end{example}

\section{Braid-move distance}\label{sec:braid}

Let $G$ be a graph with vertex set $[n]$ and edge set $E$, let $W^s_G$ and $W^v_G$ be the two reduced words on
$m=3n+|E|$ wires constructed in Section~\ref{sec:construction}, and let $D=D(s_G,v_G)$. This section proves
Theorem~A for these words.

\begin{proposition}[lower bound]\label{prop:lowerW}
For every graph $G$, $\dbr([W^s_G],[W^v_G])\ge|D|+2\VC(G)$.
\end{proposition}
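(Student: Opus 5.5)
Take a shortest sequence of braid moves from $W^s_G$ to $W^v_G$, with $\ell=\dbr([W^s_G],[W^v_G])$ braid moves. For each triple $X$ let $c_X$ be the number of braid moves that flip $X$. Let $Y=\{X:c_X>[X\in D]\}$ be the set of triples flipped more often than necessary. By Lemma~\ref{lem:parity}, $\ell=\sum_Xc_X\ge|D|+2|Y|$. So it suffices to show that $|Y|\ge\VC(G)$. We do this by constructing, from $Y$, a vertex cover of $G$ of size at most $|Y|$.

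\emph{The edge gadgets are excess-forcing.} Fix an edge $e=uw$. Delete all wires outside $A_e$. By Lemma~\ref{lem:delete}, the sequence projects to a sequence of braid moves from $W^s_G|_{A_e}$ to $W^v_G|_{A_e}$. These words are $\beta_1\beta_5\kappa$ and a word equivalent to $\kappa\beta_1\beta_5$, by Corollary~\ref{cor:edgeD}. The projected sequence flips exactly the triples inside $A_e$, with the same multiplicities $c_X$. By Proposition~\ref{prop:gadget}(b) it has at least $32=|D_\kappa|+2$ braid moves. The differing triples inside $A_e$ are exactly $D\cap\binom{A_e}3=D_\kappa$, since $s_G|_{A_e}=g^s$ and $v_G|_{A_e}=g^v$. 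Lemma~\ref{lem:parity} applied to the projected sequence therefore gives some triple $X_e\subseteq A_e$ with $c_{X_e}>[X_e\in D]$, i.e.\ $X_e\in Y\cap\binom{A_e}3$.

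\emph{Charging to vertices.} Define a map from the edges to $Y$ by $e\mapsto X_e$. If $X_e\subseteq X_u$ or $X_e\subseteq X_w$, then $X_e$ is a bead triple; we charge it to that vertex. Otherwise $X_e$ meets $p_e$ or meets both blocks. In that case $X_e$ lies in no other $A_{e'}$: the private wire is exclusive to $e$, and $A_e\cap A_{e'}$ is at most one block when $e\ne e'$, because two edges share at most one endpoint. So the edge $e$ alone uses $X_e$, and we put $u$ (say) into the cover, charged to $X_e$.

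Now let $C$ be the set of vertices $u$ whose bead triple lies in $Y$, together with one chosen endpoint of every edge whose $X_e$ is not a bead triple. Then $C$ is a vertex cover. The map sending each vertex of $C$ to its charging triple is injective into $Y$: bead triples of distinct vertices are distinct, and non-bead triples $X_e$ belong to distinct edges and are distinct from bead triples. Hence $\VC(G)\le|C|\le|Y|$, which proves the bound.

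The main obstacle is the disjointness fact used in the charging, namely that two gadgets $A_e,A_{e'}$ share at most one block. This follows from the labelling: private wires are distinct, and $|A_e\cap A_{e'}|$ is $0$ or $3$. Hence a triple not inside a single block belongs to at most one $A_e$. One must also check that projection preserves the count $c_X$ for triples inside $A_e$, which Lemma~\ref{lem:delete} gives directly.
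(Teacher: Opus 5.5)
Your proof is correct and follows the paper's argument. Lemma~\ref{lem:parity} reduces the claim to $|Y|\ge\VC(G)$; deleting the wires outside each $A_e$ and invoking Proposition~\ref{prop:gadget}(b) gives a triple of $Y$ inside every gadget; and because two gadgets share at most one block, the charging into $Y$ is injective. The only cosmetic difference is that you classify edges by whether the chosen witness $X_e$ is a bead triple, whereas the paper classifies them by whether an endpoint's bead triple lies in $Y$ (its ``charged'' edges); both give a vertex cover of size at most $|Y|$.
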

\begin{proof}
Fix a sequence of commutations and braid moves from $W^s_G$ to $W^v_G$. For a triple $T\in\binom{[m]}3$ let $c_T$
be the number of braid moves of the sequence that flip $T$, and let
\[
Y=\bigl\{T\in\tbinom{[m]}3:\ c_T>[T\in D]\bigr\}
\]
as in Lemma~\ref{lem:parity}. By that lemma, a triple of $D$ is flipped an odd number of times and lies in $Y$ if
and only if it is flipped at least three times, and a triple outside $D$ is flipped an even number of times and lies
in $Y$ if and only if it is flipped at least twice. The lemma also shows that the sequence has at least $|D|+2|Y|$
braid moves, so it suffices to show $|Y|\ge\VC(G)$. We construct a vertex cover of $G$ with at most $|Y|$ vertices.

\emph{Every gadget contains a triple of $Y$.} Let $e$ be an edge. By Corollary~\ref{cor:edgeD}, deleting all wires outside $A_e$
turns $W^s_G$ and $W^v_G$ into the gadget words $\beta_1\beta_5\kappa$ and $\kappa\beta_1\beta_5$ (the latter up to commutation). By
Lemma~\ref{lem:delete}, the fixed sequence from $W^s_G$ to $W^v_G$ yields a sequence of commutations and braid
moves from $\beta_1\beta_5\kappa$ to $\kappa\beta_1\beta_5$ with $\sum_{T\subseteq A_e}c_T$ braid moves, and
exactly $|D_\kappa|=30$ triples inside $A_e$ lie in $D$ (Proposition~\ref{prop:gadget}(a)). If no triple inside
$A_e$ lay in $Y$, this sequence would have $30$ braid moves, contrary to Proposition~\ref{prop:gadget}(b). So some
triple inside $A_e$ lies in $Y$.

\emph{Overlaps.} For distinct edges $e,f$, a triple inside both $A_e$ and $A_f$ lies in $A_e\cap A_f$. If
$e\cap f=\{u\}$, then $A_e\cap A_f=X_u$, and the only such triple is the bead triple $X_u$; if $e\cap f=\emptyset$
there is none. Moreover, the only bead triples inside $A_e$, for $e=uw$, are $X_u$ and $X_w$.

\emph{A vertex cover.} Call an edge $uw$ \emph{charged} if $X_u\in Y$ or $X_w\in Y$, that is, if the bead triple of
one of its endpoints lies in $Y$. Let
\[
\mathcal C=\{u\in[n]:X_u\in Y\}\ \cup\ \{\text{one chosen endpoint of each edge that is not charged}\}.
\]
A charged edge has an endpoint in the first set, and every other edge has its chosen endpoint in the second set, so
$\mathcal C$ is a vertex cover. A charged edge may contain further triples of $Y$ besides the bead triple; they are
not needed, since the edge is already covered.

It remains to show $|\mathcal C|\le|Y|$. We assign to each vertex $u$ of the first set the triple $X_u\in Y$, and to
each edge $e=uw$ that is not charged a triple $T_e\in Y$ inside $A_e$, which exists by the first step. Since
$X_u,X_w\notin Y$, the triple $T_e$ is neither $X_u$ nor $X_w$, so it is not a bead triple, and by the overlap step it
lies in no $A_f$ with $f\ne e$. Hence the triples $T_e$ of different edges are different, and they differ from the
bead triples $X_u$. So these assignments are injective into $Y$, and $|\mathcal C|\le|Y|$. Hence
$\VC(G)\le|\mathcal C|\le|Y|$.
\end{proof}

\begin{proposition}[upper bound]\label{prop:upper}
For every graph $G$ and every vertex cover $\mathcal C$ of $G$,
\[
\dbr([W^s_G],[W^v_G])\le|D|+2|\mathcal C|.
\]
\end{proposition}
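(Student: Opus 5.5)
We follow the three-phase plan announced at the start of Section~\ref{sec:onestep}. Let $\epsilon\colon[n]\to\{0,1\}$ be the indicator of $\mathcal C$, that is, $\epsilon_u=1$ exactly for $u\in\mathcal C$. The sequence of braid moves is
\[
W^s_G=W^0_0\ \longrightarrow\ W^\epsilon_0\ \longrightarrow\ W^\epsilon_1\ \longrightarrow\ \cdots\ \longrightarrow\ W^\epsilon_J\ \longrightarrow\ W^0_J=W^v_G,
\]
and its cost is $|\mathcal C|+\sum_k(\text{cost of step }k)+|\mathcal C|$.

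\emph{Phase 1.} In $W^0_0$ all beads stand at the very beginning, in the product $\mathcal B^0_0$. For each $u\in\mathcal C$ we replace $\beta_{q_u(0)}$ by $\beta'_{q_u(0)}$. By property~(B) this is one braid move, because beads of different blocks commute and each can be brought into position. This gives $W^\epsilon_0$ with $|\mathcal C|$ braid moves.

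\emph{Phase 3.} Symmetrically, in $W^\epsilon_J$ all beads stand at the very end. Turning the beads of $\mathcal C$ back costs $|\mathcal C|$ braid moves and gives $W^0_J$.

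\emph{Phase 2.} For each $0\le k<J$ we apply Lemma~\ref{lem:step} to go from $W^\epsilon_k$ to $W^\epsilon_{k+1}$ with $|D_k|$ braid moves. This is legitimate for factors of kinds (pp), (Xp), (pX) and (XX), whatever $\epsilon$ is. For a factor of kind (cell) of an edge $uw$, the lemma gives $30=|D_k|$ braid moves only when $\epsilon_u=1$ or $\epsilon_w=1$. This is exactly where the hypothesis enters: $\mathcal C$ is a vertex cover, so every edge has an endpoint with $\epsilon=1$. Summing over $k$ and using Corollary~\ref{cor:disjoint} (the sets $D_k$ are pairwise disjoint with union $D$, for every $\epsilon$), Phase~2 costs exactly $\sum_k|D_k|=|D|$.

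Adding the three phases gives $|D|+2|\mathcal C|$ braid moves, which is the claimed bound.

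The only point needing care is that the phases compose as sequences of commutations and braid moves between the stated words. This is settled by Lemma~\ref{lem:step}, which already asserts a transformation between $W^\epsilon_k$ and $W^\epsilon_{k+1}$ up to commutation. For Phases~1 and~3, we only need that the bead $\beta^{\epsilon_u}_{q_u(k)}$ occupies three consecutive letters (after commuting with other beads), so that property~(B) applies. No substantial obstacle is expected: the real content, the gadget's OR property, is already in Proposition~\ref{prop:gadget}(b).
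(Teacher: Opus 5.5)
Your proposal is correct and follows the paper's proof essentially verbatim. It uses the same three phases: turn over the beads of $\mathcal C$ at the start; move all beads through $\widehat\Phi$ one factor at a time via Lemma~\ref{lem:step}, where the vertex-cover hypothesis enters only at the cells; and turn the beads back at the end, with Corollary~\ref{cor:disjoint} giving exactly $|D|$ braid moves for the middle phase.
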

\begin{proof}
Recall from Section~\ref{sec:construction} that $\widehat\Phi=\phi_1\cdots\phi_J$ is the blow-up of the shadow
sweep, and that for $\epsilon\colon[n]\to\{0,1\}$ and $0\le k\le J$ the word
\[
W^\epsilon_k=\phi_1\cdots\phi_k\;\mathcal B^\epsilon_k\;\phi_{k+1}\cdots\phi_J
\]
is $\widehat\Phi$ with all beads inserted after its $k$-th factor, the bead of each vertex $u$ being of type
$\epsilon_u$, so that it is turned over exactly when $\epsilon_u=1$. In particular
$W^s_G=W^0_0$ has all beads, of type $0$, before the first factor, and $W^v_G=W^0_J$ has them after the last
one. Let $\epsilon$ be the indicator function of $\mathcal C$, so that $\epsilon_u=1$ exactly for $u\in\mathcal C$.
We go from $W^s_G$ to $W^v_G$ in three phases,
\[
W^s_G=W^0_0\ \xrightarrow{\ |\mathcal C|\ }\ W^\epsilon_0\ \xrightarrow{\ |D|\ }\ W^\epsilon_J\
\xrightarrow{\ |\mathcal C|\ }\ W^0_J=W^v_G ,
\]
where the labels are the numbers of braid moves: we turn over the beads of the vertices in $\mathcal C$, move all
beads through $\widehat\Phi$ one factor at a time, and turn the beads of $\mathcal C$ back. Turning over the beads
of a vertex cover is what makes the middle phase possible with $|D|$ braid moves: in every cell at least one of the
two beads involved is turned over, and by Proposition~\ref{prop:gadget}(b) the gadget then has no excess.

\emph{Phases 1 and 3.} In $W^0_0$ the beads stand before $\phi_1$, the bead of $u$ being $\beta_{q_u(0)}$, where
$q_u(0)$ is the top position of the block $X_u$. Replacing it by $\beta'_{q_u(0)}$ is one braid move, which flips
the bead triple of $X_u$ (property~(B) in Section~\ref{sec:gadget}). Doing this for every $u\in\mathcal C$ turns
$W^0_0$ into $W^\epsilon_0$ with $|\mathcal C|$ braid moves. In the same way, replacing $\beta'_{q_u(J)}$ by
$\beta_{q_u(J)}$ for every $u\in\mathcal C$ turns $W^\epsilon_J$ into $W^0_J$ with $|\mathcal C|$ braid moves.

\emph{Phase 2.} We go from $W^\epsilon_0$ to $W^\epsilon_J$ through $W^\epsilon_1,\dots,W^\epsilon_{J-1}$. The step
from $W^\epsilon_k$ to $W^\epsilon_{k+1}$ moves the beads past the factor $\phi_{k+1}$. Since $\mathcal C$ covers
every edge, $\epsilon_u=1$ or $\epsilon_w=1$ for every cell of an edge $uw$, so by Lemma~\ref{lem:step}(b) every step
uses exactly $|D_k|$ braid moves. The sets $D_0,\dots,D_{J-1}$ are pairwise disjoint with union $D$
(Corollary~\ref{cor:disjoint}), so the whole phase uses $\sum_k|D_k|=|D|$ braid moves.

Altogether the three phases use $|\mathcal C|+|D|+|\mathcal C|$ braid moves.
\end{proof}

\begin{theorem}\label{thm:wiring}
For every graph $G$ with $n$ vertices and $m=3n+|E|$,
\[
\dbr([W^s_G],[W^v_G])=\dB(s_G,v_G)=|D|+2\VC(G),\qquad |D|=3n(m-3)+6|E|.
\]
\end{theorem}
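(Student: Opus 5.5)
The theorem follows by combining the two propositions just proved with the correspondence of Theorem~\ref{thm:corr}. I first note that $\dbr([W^s_G],[W^v_G])=\dB(s_G,v_G)$; this is the identity $\dbr([W],[W'])=\dB(s_W,s_{W'})$ recorded after Theorem~\ref{thm:corr}, applied with $W=W^s_G$, $W'=W^v_G$, whose sign vectors are $s_G$ and $v_G$ by definition. Both words are reduced by Lemma~\ref{lem:blowup}, so the identity applies.

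Next I prove the equality with $|D|+2\VC(G)$. Proposition~\ref{prop:lowerW} gives $\dbr\ge|D|+2\VC(G)$. For the reverse inequality I choose a minimum vertex cover $\mathcal C$ of $G$, so $|\mathcal C|=\VC(G)$. Proposition~\ref{prop:upper} then gives $\dbr\le|D|+2\VC(G)$. The value $|D|=3n(m-3)+6|E|$ is exactly the count in Corollary~\ref{cor:disjoint}.

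There is no real obstacle, since all the work sits in Propositions~\ref{prop:lowerW} and~\ref{prop:upper} and Corollary~\ref{cor:disjoint}. The only points to check are that both bounds refer to the same $D=D(s_G,v_G)$, and that the edgeless case is consistent. When $E=\emptyset$ we have $\VC(G)=0$, the empty cover works in Proposition~\ref{prop:upper}, and the lower bound is just Lemma~\ref{lem:parity}.
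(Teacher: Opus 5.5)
Your proposal is correct and follows the paper's own proof exactly. It takes the lower bound from Proposition~\ref{prop:lowerW}, the upper bound from Proposition~\ref{prop:upper} applied to a minimum vertex cover, the value of $|D|$ from Corollary~\ref{cor:disjoint}, and the identity $\dbr=\dB$ from Theorem~\ref{thm:corr}; your side remarks (both words are reduced by Lemma~\ref{lem:blowup}, and the edgeless case) are consistent but not needed.
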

\begin{proof}
Propositions~\ref{prop:lowerW} and~\ref{prop:upper}, Corollary~\ref{cor:disjoint} for $|D|$, and
Theorem~\ref{thm:corr} for $\dB$.
\end{proof}

Let $\textup{\textsc{Braid}}$ be the problem: given two reduced words $W,W'$ of $w_0\in S_m$ and an integer $K$ (in
binary), decide whether $\dbr([W],[W'])\le K$. Let $\FLIP_{\B}$ be the same problem for two maps
$s,v\colon\binom{[m]}3\to\{\pm\}$ in $\B(m,2)$, given as sign vectors of length $\binom m3$.

\begin{theorem}\label{thm:braidNP}
$\textup{\textsc{Braid}}$ and $\FLIP_{\B}$ are NP-complete.
\end{theorem}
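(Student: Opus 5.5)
The proof has two parts: membership in NP and NP-hardness. For hardness we reduce from \textsc{Vertex Cover} using Theorem~\ref{thm:wiring}.

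\emph{Membership in NP.} For $\FLIP_{\B}$ we take the certificate to be a walk $s=u_0,u_1,\dots,u_\ell=v$ in $\B(m,2)$ with $\ell\le K$. A priori $K$ may be huge (it is given in binary), so we first bound the useful length. The diameter of $\B(m,2)$ is $\binom m3$~\cite{GMS}, so we may assume $\ell\le\min(K,\binom m3)$. The certificate is then the list of flipped triples, of polynomial length in the input size $\binom m3$. The verifier checks the following:
\begin{itemize}
\item consecutive maps differ in exactly one triple;
\item each $u_i$ is a signotope, i.e.\ every packet has at most one sign change.
\end{itemize}
This takes time polynomial in $\binom m3$ per step. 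Alternatively, without citing the diameter, one can argue from Ringel/Proposition-level facts, but citing~\cite{GMS} is simplest.

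For \textsc{Braid} the input is two reduced words of length $\binom m2$. Computing $s_W$ and $s_{W'}$ from them is polynomial by Definition~\ref{def:words}. Then $\dbr=\dB(s_W,s_{W'})$ by Theorem~\ref{thm:corr}, and the same certificate works.

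\emph{Hardness.} Given an instance $(G,k)$ of \textsc{Vertex Cover}, we construct $W^s_G$ and $W^v_G$ as in Section~\ref{sec:construction} and set $K=|D|+2k$, with $|D|=3n(m-3)+6|E|$. By Theorem~\ref{thm:wiring},
\[
\dbr([W^s_G],[W^v_G])\le K \iff \VC(G)\le k .
\]
For $\FLIP_{\B}$ we output the sign vectors $s_G$ and $v_G$ instead. These have length $\binom m3$, which is polynomial in $n+|E|$.

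\emph{Polynomial time.} The remaining point is that the construction is polynomial. The shadow sweep $\Phi$ has $O((n+|E|)^2)$ moves, since it is a reduced word on the groups. Each factor of the blow-up has at most $|\kappa|=15$ letters. The beads add $3n$ letters. So both words have $\binom m2$ letters, and the positions $q$ of the factors can be tracked while simulating the arrangement.

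The main (mild) obstacle is the NP membership, namely keeping the certificate polynomial despite $K$ being given in binary. The diameter bound $\binom m3$ resolves it.
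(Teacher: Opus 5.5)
Your proposal is correct and follows the paper's proof: you use the same reduction from \textsc{Vertex Cover} with $K=|D|+2k$, justified by Theorem~\ref{thm:wiring}. Your NP membership argument is also the paper's: you reduce \textsc{Braid} to $\FLIP_{\B}$ via sign vectors and Theorem~\ref{thm:corr}, cap the certificate length by the diameter $\binom m3$ of $\B(m,2)$, and check each intermediate map packet by packet.
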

\begin{proof}
We reduce from \textup{\textsc{Vertex Cover}}, which is NP-complete~\cite{Karp}. Given $(G,k)$ with $k\le n$,
compute $W^s_G$ and $W^v_G$: $\Phi$ has $O((n+|E|)^2)$ steps, the words have $\binom m2$ letters, and $s_G,v_G$
have $\binom m3$ entries. Put $K=|D|+2k$. By Theorem~\ref{thm:wiring}, $\VC(G)\le k$ if and only if the distance is
at most $K$, in both settings.

Membership: given reduced words $W,W'$, the sign vectors $s_W,s_{W'}$ are computed in polynomial time
(Definition~\ref{def:words}), and $\dbr([W],[W'])=\dB(s_W,s_{W'})$ by Theorem~\ref{thm:corr}; so it suffices to
show that $\FLIP_{\B}$ is in NP. The diameter of $\B(m,2)$ is $\binom m3$~\cite[Thm.~9]{GMS} (see
also~\eqref{eq:diamB} in Section~\ref{sec:signotopes}), so the answer is yes if $K\ge\binom m3$. Otherwise a certificate is a sequence $s=s_0,s_1,\dots,s_\ell=v$ with
$\ell\le K$, each term obtained from the previous one by negating one triple. That every term lies in $\B(m,2)$,
i.e.\ has at most one sign change in each packet, is checked in time $O(m^4)$ per term.
\end{proof}

\begin{corollary}\label{cor:reform}
The following problems are NP-complete.
\begin{itemize}
\item Triangle-flip distance between two marked simple arrangements of $m$ pseudolines (Question~5
of~\cite{Radtke}); as in~\cite{Radtke}, an arrangement in the Euclidean plane is \emph{marked} by the choice of an
unbounded cell.
\item Flip distance between two rhombic tilings of the regular $2m$-gon with fixed edge directions.
\end{itemize}
In both problems the objects are given by their sign vectors, or by any standard representation that is
polynomial-time interconvertible with the sign vectors, such as reduced words or wiring diagrams, and for tilings
the list of tiles.
\end{corollary}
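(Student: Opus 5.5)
The plan is to transfer Theorem~\ref{thm:braidNP} through the bijections of Theorem~\ref{thm:corr}(c), so that both problems become $\FLIP_{\B}$ in disguise, and then handle the change of input representation.

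\emph{The flip graphs are isomorphic to $\B(m,2)$.} By Theorem~\ref{thm:corr}(c), commutation classes of reduced words of $w_0\in S_m$ correspond bijectively to marked simple arrangements of $m$ pseudolines and to rhombic tilings of the regular $2m$-gon. Under these correspondences, braid moves go to triangle flips and to hexagon flips, respectively. Composing with the bijection $[W]\mapsto s_W$ of part (a), I will obtain graph isomorphisms from $\B(m,2)$ onto the triangle-flip graph and onto the tiling flip graph. These isomorphisms preserve distances. For arrangements I will check that ``marked by an unbounded cell'' matches the convention of~\cite{Radtke}. Their sign convention differs from ours only by a global negation, and negating all signs of both inputs is an automorphism of $\B(m,2)$ that preserves distances.

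\emph{Representations.} Hardness then follows from the reduction of Theorem~\ref{thm:braidNP}. From $(G,k)$ we output $W^s_G,W^v_G$ (or $s_G,v_G$) and $K=|D|+2k$. If the objects are instead given as wiring diagrams, as arrangements (for example as a polynomial-size combinatorial description), or as lists of tiles, I will convert the reduced words in polynomial time:
\begin{itemize}
\item a wiring diagram is drawn directly from the word;
\item a tiling is obtained from a reduced word by Elnitsky's construction~\cite{Elnitsky}, in which each letter contributes one rhombus whose edge directions are indexed by the swapped pair of wires, giving $\binom m2$ tiles.
\end{itemize}
The assumption that the representation is polynomial-time interconvertible with sign vectors covers every other representation.

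\emph{Membership.} Given any such representation, we convert to sign vectors and use the NP certificate from the proof of Theorem~\ref{thm:braidNP}. That certificate is a sequence of at most $\min(K,\binom m3)$ single-triple negations, each checked to be a signotope by the packet condition. If $K\ge\binom m3$ we answer yes by the diameter bound.

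The only delicate point, and the one I expect to take care, is the precise dictionary for arrangements. Marked Euclidean arrangements must correspond exactly to commutation classes, which needs the chosen unbounded cell to fix the wiring-diagram sweep direction, and flips must be exactly the triangle flips of~\cite{Radtke}. Both facts are part of Theorem~\ref{thm:corr}(c) as cited from~\cite{FW}, so I would simply invoke them.
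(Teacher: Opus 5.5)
Your proposal is correct and follows essentially the paper's argument. By Theorem~\ref{thm:corr}(c), both flip graphs are isomorphic to $\B(m,2)$ via the sign vectors, so both problems are $\FLIP_{\B}$ of Theorem~\ref{thm:braidNP}; polynomial-time conversions of the representation are then used for the output of the reduction and for checking the certificate. Your extra remarks are correct details that the paper leaves implicit: the global sign convention of~\cite{Radtke} is absorbed by the distance-preserving automorphism $u\mapsto -u$ of $\B(m,2)$, and Elnitsky's construction produces the tiles explicitly.
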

\begin{proof}
By Theorem~\ref{thm:corr}(c) and its sources~\cite{MS,Ziegler,Elnitsky,FW,Radtke}, both flip graphs are
isomorphic to $\B(m,2)$ via the sign vectors, so with sign vectors as input both problems are $\FLIP_{\B}$
(Theorem~\ref{thm:braidNP}). For another representation, the reduction outputs the sign vectors $s_G,v_G$ converted
to that representation; for membership, the inputs are converted to sign vectors and the certificate of
Theorem~\ref{thm:braidNP} is used. Both conversions take polynomial time.
\end{proof}

\begin{remark}\label{rem:apx}
The reduction is exact: the excess $\dbr-|D|$ equals $2\VC(G)$. Since \textup{\textsc{Vertex Cover}} is APX-hard
on cubic graphs~\cite{AlimontiKann}, there is $\delta>0$ such that approximating the excess
$\dbr([W],[W'])-|D(s_W,s_{W'})|$ within a factor $1+\delta$ is NP-hard. For cubic graphs,
$|D|=\tfrac23m^2$ while the excess is at most $2n<m$, so on these instances $|D|$ approximates $\dbr$ itself within
a factor $1+O(1/m)$; the hardness concerns the excess only.
\end{remark}

\part{Oriented matroids and higher Bruhat orders}\label{part:om}

\section{Chirotopes, signotopes and the positive fibre}\label{sec:om}

A wiring diagram becomes a simple arrangement of pseudolines in the projective plane when the line at infinity is
added, and uniform oriented matroids are the combinatorial abstraction of such arrangements. By the topological
representation theorem of Folkman and Lawrence, uniform oriented matroids of rank $r$ correspond to simple
arrangements of oriented pseudospheres in $S^{r-1}$, and in rank $3$, up to reorientation, to simple arrangements of
pseudolines in the projective plane~\cite[Sect.~5.2 and Ch.~6]{BLSWZ}. The realizable ones come from arrangements of
oriented central hyperplanes in general position in $\mathbb R^r$, or equivalently from their normal vectors.

The higher Bruhat orders $\B(n,k)$ of Manin and Schechtman, whose elements are the signotopes of rank $k+1$,
generalize the flip graph $\B(n,2)$ of wiring diagrams in another direction. The two generalizations are connected
by the positive fibre, which turns a signotope into a chirotope by adding an element $\infty$ (in rank $3$, the line
at infinity) and makes $\B(n,k)$ an induced subgraph of the mutation graph of rank $k+1$ on $n+1$ elements
(Section~\ref{sec:fibre}). Section~\ref{sec:rank3} transfers the rank-$3$ results of Part~\ref{part:wiring} to
oriented matroids, and Section~\ref{sec:higher} proves hardness in higher ranks for both.

\subsection{Chirotopes and the mutation graph}\label{sec:chirotopes}

For vectors $x_e\in\mathbb R^r$, $e\in\Omega$, in general position, the sign of the determinant of any $r$ of them,
taken in increasing order of their labels, is nonzero; the map that assigns this sign to each $r$-subset is the
\emph{chirotope} of the configuration. Chirotopes in general are defined by the sign conditions of the
Grassmann--Pl\"ucker relations.

\begin{definition}[uniform chirotopes]\label{def:chirotope}
Let $\Omega$ be a finite totally ordered set and $2\le r\le|\Omega|$. A map $\chi\colon\binom\Omega r\to
\{\pm\}$ is extended to ordered $r$-tuples of distinct elements by
$\chi(e_1,\dots,e_r)=\sgn(\pi)\,\chi(\{e_1,\dots,e_r\})$, where $\pi$ sorts $(e_1,\dots,e_r)$. For an
$(r-2)$-subset $\sigma=\{\sigma_1<\dots<\sigma_{r-2}\}$ write
$\chi(\sigma,x,y)=\chi(\sigma_1,\dots,\sigma_{r-2},x,y)$. Then $\chi$ is a \emph{uniform chirotope of rank
$r$} if for every $(r-2)$-subset $\sigma\subseteq\Omega$ and all $a<b<c<d$ in $\Omega\setminus\sigma$
\[
\chi(\sigma,a,b)\chi(\sigma,c,d),\qquad -\chi(\sigma,a,c)\chi(\sigma,b,d),\qquad
\chi(\sigma,a,d)\chi(\sigma,b,c)\quad\text{are not all equal.}\tag{GP}
\]
The $r$-subsets of $\Omega$ are the \emph{bases}.
\end{definition}

(GP) is the sign form of the three-term Grassmann--Pl\"ucker relation
$[\sigma ab][\sigma cd]-[\sigma ac][\sigma bd]+[\sigma ad][\sigma bc]=0$. A permutation of $a,b,c,d$ permutes the
three terms and multiplies them by a common sign, so (GP) holds for all distinct $a,b,c,d$ once it holds for
$a<b<c<d$. The maps satisfying (GP) are exactly the chirotopes of the uniform oriented matroids of rank $r$ on
$\Omega$~\cite[Thm.~3.6.2]{BLSWZ}. A uniform chirotope is \emph{realizable} if $\chi(B)=\sgn\det(x_b)_{b\in B}$ for
some vectors $x_b\in\mathbb R^r$, $b\in\Omega$, with the columns in increasing order; conversely, for any vectors
with all these determinants nonzero, this map satisfies (GP) by the Grassmann--Pl\"ucker identity.

A chirotope $\chi$ and its negative $-\chi$ describe the same oriented matroid (a reflection of $\mathbb R^r$
negates all determinants), and we write $[\chi]=\{\chi,-\chi\}$ for the uniform oriented matroid. A
\emph{mutation} changes the sign of exactly one basis in such a way that the result is again a chirotope.
Geometrically, the $r$ pseudospheres of the basis bound a simplicial cell (a cell with exactly $r$ facets), and the
mutation moves one of them across the vertex in which the other $r-1$ meet; in rank $3$ a mutation is a triangle
flip, as for wiring diagrams.

\begin{definition}[mutation graph]\label{def:mutgraph}
The \emph{mutation graph} $\G_{\Omega,r}$ has the uniform oriented matroids $[\chi]$ of rank $r$ on $\Omega$ as
vertices, with $[\chi]\sim[\psi]$ if some representatives differ in exactly one basis. Its distance
$d([\chi],[\psi])$ is the \emph{mutation distance}. For $\Omega=[n]$ we write $\G_{n,r}$.
\end{definition}

\paragraph{Walks of chirotopes.} In the proofs it is convenient to work with chirotopes rather than with pairs
$\{\chi,-\chi\}$. A \emph{walk of chirotopes} from $\chi$ to $\psi$ is a sequence $\chi=\chi_0,\chi_1,\dots,
\chi_\ell=\psi$ of uniform chirotopes of rank $r$ on $\Omega$ in which consecutive terms differ in exactly one
basis; the step from $\chi_{i-1}$ to $\chi_i=\chi_{i-1}^B$ \emph{flips} $B$. Let $\dt(\chi,\psi)$ be the least
length of such a walk. As for wiring diagrams, the \emph{Hamming distance} of $\chi$ and $\psi$ is $|D(\chi,\psi)|$;
by Lemma~\ref{lem:parity}, $\dt(\chi,\psi)\ge|D(\chi,\psi)|$, with the same parity, and we call the difference the
\emph{excess}. The quantity $\dt$ is only a tool; the mutation distance is recovered from it as follows.

\begin{observation}\label{obs:cover}
For uniform chirotopes $\chi,\psi$ of rank $r$ on $\Omega$,
\[
d([\chi],[\psi])=\min\{\dt(\chi,\psi),\dt(\chi,-\psi)\},\qquad
\dt(\chi,-\psi)\ge\tbinom{|\Omega|}r-|D(\chi,\psi)|.
\]
\end{observation}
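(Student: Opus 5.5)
The observation has two parts: an identity for $d$ and a lower bound for $\dt(\chi,-\psi)$. I would prove them separately, working only with walks of chirotopes.

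\emph{The identity.} First note that negation commutes with flipping a basis: if $\chi'=\chi^B$ then $-\chi'=(-\chi)^B$. Also, $-\chi$ is a chirotope whenever $\chi$ is, since (GP) involves products of two values of $\chi$, and these are unchanged under a global sign.
\begin{itemize}
\item[($\le$)] A walk of chirotopes from $\chi$ to $\pm\psi$ projects to a walk $[\chi_0],[\chi_1],\dots$ in $\G_{\Omega,r}$ from $[\chi]$ to $[\psi]$. Each step joins representatives that differ in one basis. Hence $d([\chi],[\psi])\le\min\{\dt(\chi,\psi),\dt(\chi,-\psi)\}$.
\item[($\ge$)] Take a shortest path $[\chi]=M_0,M_1,\dots,M_\ell=[\psi]$ in $\G_{\Omega,r}$. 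Lift it inductively: start with $\chi_0=\chi$. Given a representative $\chi_{i-1}$ of $M_{i-1}$, the adjacency $M_{i-1}\sim M_i$ gives representatives $\alpha\in M_{i-1}$ and $\beta\in M_i$ with $\beta=\alpha^B$. If $\alpha=\chi_{i-1}$, set $\chi_i=\beta$. If $\alpha=-\chi_{i-1}$, set $\chi_i=-\beta=\chi_{i-1}^B$. Either way $\chi_i$ represents $M_i$ and differs from $\chi_{i-1}$ in one basis. The lift ends at $\chi_\ell\in\{\psi,-\psi\}$, so $\min\{\dt(\chi,\psi),\dt(\chi,-\psi)\}\le\ell$.
\end{itemize}

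\emph{The inequality.} By Lemma~\ref{lem:parity}, $\dt(\chi,-\psi)\ge|D(\chi,-\psi)|$. A basis $B$ lies in $D(\chi,-\psi)$ exactly when $\chi(B)=\psi(B)$, that is, when $B\notin D(\chi,\psi)$. So $D(\chi,-\psi)$ is the complement of $D(\chi,\psi)$ in $\binom\Omega r$, and $|D(\chi,-\psi)|=\binom{|\Omega|}r-|D(\chi,\psi)|$, which is the bound claimed.

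There is no real obstacle here. The only point needing care is the sign bookkeeping in the lifting step, namely that a representative flipped at $B$ still differs in one basis after a global negation.
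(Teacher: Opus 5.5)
Your proof is correct and is essentially the paper's argument: the paper also lifts a path of oriented matroids step by step to a walk of chirotopes, using that negation commutes with flips. It likewise gets the inequality from Lemma~\ref{lem:parity}, since $\chi$ and $-\psi$ differ exactly outside $D(\chi,\psi)$, and you merely spell out the easy direction and why $-\beta$ is again a chirotope, which the paper leaves implicit.
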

Negation commutes with flips, so a walk of oriented matroids starting at $[\chi]$ lifts step by step to a walk of
chirotopes starting at $\chi$, which ends at $\psi$ or at $-\psi$. The inequality is Lemma~\ref{lem:parity}, since
$\chi$ and $-\psi$ differ exactly outside $D(\chi,\psi)$.

As in the introduction, $\DIST_r$ is the problem: given uniform chirotopes $\chi_1,\chi_2$ of rank $r$ on $[n]$, as
tables of their $\binom nr$ values, and an integer $K$, decide whether $d([\chi_1],[\chi_2])\le K$. Whether $K$ is
written in binary or in unary matters in ranks $r\ge4$, where we know no polynomial bound on the distances.

\begin{definition}[minors and symmetries]
Let $\chi$ be a uniform chirotope of rank $r$ on $\Omega$ and $e\in\Omega$. For $A\subseteq\Omega$ with
$|A|\ge r$, the \emph{restriction} $\chi|_A$ is the restriction of $\chi$ to $\binom Ar$, and the
\emph{deletion} is $\chi\setminus e=\chi|_{\Omega\setminus e}$ (if $|\Omega|>r$). For $r\ge3$ the
\emph{contraction} is $(\chi/e)(Y)=\chi(e,y_1,\dots,y_{r-1})$ for $Y=\{y_1<\dots<y_{r-1}\}\subseteq
\Omega\setminus e$. For $R\subseteq\Omega$, the \emph{reorientation} ${}_{-R}\chi$ negates $\chi(B)$ exactly when
$|B\cap R|$ is odd. For a bijection $\pi\colon\Omega\to\Omega'$ onto another totally ordered set, the
\emph{relabelling} $\pi\chi$ is defined on ordered tuples by $(\pi\chi)(\pi e_1,\dots,\pi e_r)=
\chi(e_1,\dots,e_r)$.
\end{definition}

In the arrangement of pseudospheres, deletion removes the pseudosphere $e$, contraction restricts the arrangement
to $e$, and reorientation exchanges the two sides of the elements of $R$. Deletion is the analogue of deleting wires
from a wiring diagram.

\begin{lemma}[{standard, see~\cite[Ch.~3]{BLSWZ}}]\label{lem:minors}
Restrictions and deletions of uniform chirotopes of rank $r$ are uniform chirotopes of rank $r$, contractions are
uniform chirotopes of rank $r-1$, and reorientations and relabellings are uniform chirotopes of rank $r$.
Reorientations and relabellings preserve Hamming distances, commute with $\chi\mapsto-\chi$, preserve realizability,
and preserve $\dt$ and $d$.
\end{lemma}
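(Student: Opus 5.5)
We verify the statements of Lemma~\ref{lem:minors} one at a time against the sign condition (GP) of Definition~\ref{def:chirotope}. That condition is local: each instance involves one $(r-2)$-set $\sigma$ and four further elements. So each operation has to be shown to map (GP)-instances of the new map onto (GP)-instances of $\chi$, possibly multiplied by a common sign.

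\emph{Restriction and deletion.} An instance of (GP) for $\chi|_A$ uses only elements of $A$. It is therefore literally an instance of (GP) for $\chi$, and there is nothing to prove.

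\emph{Contraction.} For $\chi/e$ of rank $r-1$, take an $(r-3)$-set $\sigma'$ and elements $a<b<c<d$ outside $\sigma'\cup e$. Then $(\chi/e)(\sigma',x,y)=\chi(e,\sigma'_1,\dots,\sigma'_{r-3},x,y)$. Sorting the tuple $(e,\sigma')$ into increasing order produces a sign $\eta$ that depends only on $e$ and $\sigma'$, not on $x,y$. Hence $(\chi/e)(\sigma',x,y)=\eta\,\chi(\sigma,x,y)$ with $\sigma=\sigma'\cup\{e\}$. Each of the three products in (GP) is multiplied by $\eta^2=1$, so the instance for $\chi/e$ is exactly the instance of $\chi$ for $\sigma$.

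\emph{Reorientation.} For ${}_{-R}\chi$, the sign of the value at $\sigma\cup\{x,y\}$ changes by $(-1)^{|\sigma\cap R|+[x\in R]+[y\in R]}$. In each of the three products the two bases together contain each of $a,b,c,d$ exactly once, so every product picks up the same factor $(-1)^{[a\in R]+[b\in R]+[c\in R]+[d\in R]}$; the $\sigma$-contributions square away. A common sign does not change whether the three terms are all equal.

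\emph{Relabelling.} A relabelling permutes the order. Using the remark after Definition~\ref{def:chirotope} that (GP) for sorted $a<b<c<d$ implies (GP) for all distinct quadruples, together with the definition of $\chi$ on ordered tuples, (GP) for $\pi\chi$ at $\pi\sigma,\pi a,\dots$ becomes (GP) for $\chi$ at some ordering of $\sigma$ and $a,b,c,d$. Reordering $\sigma$ multiplies all terms by the same sign, so the condition still holds.

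\emph{Remaining properties.} Both reorientation and relabelling act on $\chi$ entrywise by a fixed sign pattern (for relabelling, composed with a bijection of the bases). Hence they are bijections that commute with flipping a single basis and with negation, and they preserve $D(\cdot,\cdot)$ up to the induced bijection of bases. Because they map chirotopes to chirotopes and are invertible of the same kind, they carry walks of chirotopes to walks of chirotopes, which gives preservation of $\dt$. Preservation of $d$ then follows from Observation~\ref{obs:cover}.

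Realizability is preserved as well. For reorientation, replace $x_e$ by $-x_e$ for $e\in R$. For relabelling, keep the same vectors; the determinants in the new column order differ by exactly the permutation sign built into the definition.

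The only delicate point is the sign bookkeeping for contraction and relabelling, and it is handled by the observation that all the extra signs are common to the three terms of (GP).
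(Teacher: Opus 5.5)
Your proof is correct. The paper gives no argument for this lemma; it calls it standard and cites \cite[Ch.~3]{BLSWZ}. You instead check everything directly against (GP), the definition the paper actually uses, with one mechanism throughout: each operation changes the values in an instance of (GP) only by signs common to the three terms. For contraction, the sorting sign $\eta$ of $\{e\}\cup\sigma'$ enters each product squared. For reorientation, each product is multiplied by $(-1)^{[a\in R]+[b\in R]+[c\in R]+[d\in R]}$. For relabelling, you reduce to (GP) for unsorted quadruples via the remark after Definition~\ref{def:chirotope}.

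What your route buys is a self-contained proof that avoids the equivalence between (GP) and the general chirotope axioms (\cite[Thm.~3.6.2]{BLSWZ}, quoted by the paper). That equivalence is tacitly needed when importing closure results stated in the book's own axiom system. The citation, in return, covers non-uniform chirotopes, which the paper never needs.

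The metric part concerns the paper's own quantities $\dt$ and $d$. Your observation is exactly the right reason: both operations have the form $\chi\mapsto\varepsilon\cdot(\chi\circ\phi)$ for a fixed sign pattern $\varepsilon$ and a fixed bijection $\phi$ of bases, with an inverse of the same form. Deducing the statement for $d$ from Observation~\ref{obs:cover} is not circular, since that observation is established before and independently of the lemma.

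One step you leave implicit: $(\chi/e)(\sigma',x,y)=\chi(e,\sigma'_1,\dots,\sigma'_{r-3},x,y)$ must also hold when $(\sigma',x,y)$ is not in increasing order. It does, because both sides are alternating in the last $r-1$ arguments and agree on sorted tuples.
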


The next lemma is the analogue for chirotopes of Lemma~\ref{lem:delete}: a walk induces walks on a deletion and on
a contraction, and each of its steps survives in exactly one of the two. It follows from Lemma~\ref{lem:minors},
since flipping $B$ flips $B$ in $\chi\setminus e$ if $e\notin B$, and flips $B\setminus e$ in $\chi/e$ if $e\in B$.

\begin{lemma}[splitting walks]\label{lem:split}
Let $\chi_0,\dots,\chi_\ell$ be a walk of chirotopes on $\Omega$, and $e\in\Omega$. With consecutive repetitions
removed, $(\chi_i\setminus e)_i$ is a walk of chirotopes on $\Omega\setminus e$ (if $|\Omega|>r$) with one step for
each step that flips a basis not containing $e$, and for $r\ge3$, $(\chi_i/e)_i$ is a walk of chirotopes of rank
$r-1$ on $\Omega\setminus e$ with one step for each step that flips a basis containing $e$. Likewise, for
$|A|\ge r$, $(\chi_i|_A)_i$ is a walk with one step for each step that flips a basis contained in $A$; in particular
$\dt(\chi|_A,\psi|_A)\le\dt(\chi,\psi)$.
\end{lemma}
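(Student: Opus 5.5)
The plan is to reduce everything to Lemma~\ref{lem:minors}, applied one step of the walk at a time. For the deletion statement, fix consecutive terms $\chi_{i-1},\chi_i=\chi_{i-1}^B$. By Lemma~\ref{lem:minors}, $\chi_{i-1}\setminus e$ and $\chi_i\setminus e$ are uniform chirotopes of rank $r$ on $\Omega\setminus e$. The map $\chi\mapsto\chi\setminus e$ is just restriction of the function to $\binom{\Omega\setminus e}{r}$. So:
\begin{itemize}
\item if $e\in B$, the two deletions coincide, and the step becomes a repetition, which we remove;
\item if $e\notin B$, the two deletions differ exactly in $B$, so this is a step of a walk flipping $B$.
\end{itemize}
After removing repetitions we get a walk of chirotopes with one step per step of the original walk that flips a basis avoiding $e$. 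The restriction $\chi\mapsto\chi|_A$ is handled identically: the step survives exactly when $B\subseteq A$. The bound $\dt(\chi|_A,\psi|_A)\le\dt(\chi,\psi)$ follows by applying this to a shortest walk, since the induced walk is no longer.

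For contraction ($r\ge3$) the same template applies, with a sign check. By definition $(\chi/e)(Y)=\chi(e,y_1,\dots,y_{r-1})=\pm\chi(Y\cup e)$, where the sign is $\sgn$ of the permutation sorting $(e,y_1,\dots,y_{r-1})$. This sign depends only on $Y$ and $e$, not on $\chi$. Hence $(\chi^B)/e$ and $\chi/e$ agree on every $Y$ with $Y\cup e\neq B$. If $e\in B$ they differ exactly at $Y=B\setminus e$; if $e\notin B$ they are equal. Lemma~\ref{lem:minors} guarantees that every $\chi_i/e$ is a uniform chirotope of rank $r-1$. So $(\chi_i/e)_i$, with repetitions removed, is a walk with one step for each step flipping a basis containing $e$.

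The only point needing care is this sign bookkeeping for contraction: the extension to ordered tuples must be by a sign independent of $\chi$, which it is, by the definition in Definition~\ref{def:chirotope}. Everything else is immediate from the fact that deletion, restriction and contraction are, up to this fixed sign twist, coordinate projections of the sign vector. In particular each step of the original walk survives in exactly one of $\chi\setminus e$ and $\chi/e$, according to whether $e\notin B$ or $e\in B$.
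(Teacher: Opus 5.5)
Your proof is correct and follows the paper's argument: every term is a chirotope by Lemma~\ref{lem:minors}, and a flip of $B$ becomes a flip of $B$ in $\chi\setminus e$ (respectively $\chi|_A$) when $e\notin B$ (respectively $B\subseteq A$), and a flip of $B\setminus e$ in $\chi/e$ when $e\in B$. Your explicit check that the sign relating $(\chi/e)(Y)$ to $\chi(Y\cup e)$ does not depend on $\chi$ fills in a detail the paper leaves implicit.
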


Duality transfers results in rank $r$ to \emph{corank} $r$, that is, to uniform oriented matroids of rank $n-r$ on
$n=|\Omega|$ elements: the map $\chi\mapsto\chi^*$ to the dual chirotope is computable in polynomial time and induces
an isomorphism $\G_{\Omega,r}\cong\G_{\Omega,n-r}$~\cite[Sect.~3.4]{BLSWZ}.

\subsection{Signotopes and higher Bruhat orders}\label{sec:signotopes}

Signotopes of rank $3$ were defined in Section~\ref{sec:words}; they are the sign vectors of wiring diagrams, and
the packet of a $4$-set $P$ is the sign vector of the diagram on the four wires of $P$ (Section~\ref{sec:packets}).
The definition extends to all ranks. In rank
$r$ a packet lists the values on the $r$-subsets of an $(r+1)$-set, and a signotope is a map all of whose packets
change sign at most once. Signotopes of rank $r$ form the higher Bruhat order $\B(n,r-1)$, introduced by Manin and
Schechtman~\cite{MS} as a generalization of the weak order on permutations ($\B(n,1)$) and studied by
Ziegler~\cite{Ziegler}, Felsner and Weil~\cite{FW} and Rambau~\cite{Rambau}.

\begin{definition}[signotopes~\cite{FW}]\label{def:signotope}
Let $S$ be a finite totally ordered set and $r\ge1$. For $u\colon\binom Sr\to\{\pm\}$ and an
$(r+1)$-subset $P=\{p_1<\dots<p_{r+1}\}$ of $S$, the \emph{packet} of $P$ is the sequence
\[
u[P]=\bigl(u(P\setminus p_{r+1}),\,u(P\setminus p_r),\,\dots,\,u(P\setminus p_1)\bigr)
\]
of the values on the $r$-subsets of $P$ in lexicographic order. The map $u$ is a \emph{signotope of rank
$r$} if every packet has at most one sign change. We write $\B(S,r-1)$ for the graph whose vertices are the rank-$r$
signotopes on $S$, with $u$ and $u'$ adjacent (related by a \emph{flip}) if $|D(u,u')|=1$; its distance is $\dB$. We
write $\B(m,k)=\B([m],k)$. Packets of a subset $S'\subseteq S$ are packets of $S$, so the restriction $u|_{S'}$ of a
signotope to $\binom{S'}r$ is a signotope.
\end{definition}

For rank $3$ the packets and the graph $\B(m,2)$ are those of Section~\ref{sec:prelim}, and by
Theorem~\ref{thm:corr} the rank-$3$ signotopes are the sign vectors of wiring diagrams. Every signotope $u$ is
joined to the constant signotope $+$ by a walk of length $|\{X:u(X)=-\}|$, and to the constant signotope $-$ by a
walk of length $|\{X:u(X)=+\}|$~\cite[Sect.~2, Thm.~3(b)]{MS}, \cite[Thm.~4.1(G)]{Ziegler}; going through whichever
is shorter gives
\begin{equation}\label{eq:diamB}
\dB(u,v)\le\tbinom{|S|}r\qquad\text{for all }u,v\in\B(S,r-1).
\end{equation}
Walks in $\B(S,r-1)$, like walks of chirotopes, satisfy Lemma~\ref{lem:parity}. In particular $\dB(u,v)\ge|D(u,v)|$,
and, as in Part~\ref{part:wiring}, we call $\dB(u,v)-|D(u,v)|$ the \emph{excess} of $(u,v)$. Let
$\FLIP_{\B(n,k)}$ be the problem: given two signotopes of rank $k+1$ on $[n]$, as sign vectors of length
$\binom n{k+1}$, and an integer $K$, decide whether their distance in $\B(n,k)$ is at most $K$; for $k=2$ this is the
problem $\FLIP_\B$ of Theorem~\ref{thm:braidNP}.

\subsection{The positive fibre}\label{sec:fibre}

A wiring diagram on the wires $1,\dots,m$ becomes an arrangement of $m+1$ pseudolines in the projective plane when
the line at infinity is added. On the level of sign vectors, this adds an element $\infty$ above all wires and gives
every triple through it the sign $+$, which records that the wires meet the line at infinity in the order
$1,\dots,m$. The same construction works in every rank.

\begin{proposition}[positive fibre~\cite{KV,Ziegler,FW}]\label{prop:fibre}
Let $r\ge2$, let $S$ be a finite totally ordered set with $|S|\ge r$, and let $\Omega=S\cup\{\infty\}$ with
$S<\infty$. For $u\colon\binom Sr\to\{\pm\}$ define $\Pos_r(u)\colon\binom\Omega r\to\{\pm\}$ by
$\Pos_r(u)(B)=+$ if $\infty\in B$ and $\Pos_r(u)(B)=u(B)$ otherwise; we call $\Pos_r(u)$ the \emph{positive fibre}
of $u$. Then $\Pos_r(u)$ is a uniform chirotope if and only if $u$ is a rank-$r$ signotope.
\end{proposition}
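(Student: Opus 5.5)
We check (GP) directly, splitting by how many of the elements involved equal $\infty$. Fix an $(r-2)$-subset $\sigma\subseteq\Omega$ and $a<b<c<d$ in $\Omega\setminus\sigma$. Since $\infty$ is the largest element, it occurs at most once among $\sigma\cup\{a,b,c,d\}$, and if it does, it is $d$ or the last element of $\sigma$. For the "if" direction we assume $u$ is a signotope and show that (GP) holds in all three cases below. For the "only if" direction we observe that the case $\infty\in\sigma$ already encodes the packet condition on all $(r+1)$-subsets of $S$, so (GP) forces $u$ to be a signotope.

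\emph{Case $\infty\notin\sigma\cup\{a,b,c,d\}$.} Here all six values are values of $u$, and (GP) is a three-term Grassmann--Plücker condition for $u$ on the $(r+2)$-set $\sigma\cup\{a,b,c,d\}\subseteq S$. We plan to deduce it from the packet conditions on the $(r+1)$-subsets $\sigma\cup\{a,b,c,d\}\setminus x$. Our first choice is to reduce to rank-$r$ signotopes on $r+2$ elements, i.e.\ to the higher Bruhat order $\B(r+2,r-1)$, whose elements are few and explicitly known (a cycle-like structure analogous to $C_8$). Alternatively, we would cite that signotopes are exactly the chirotopes realized as "alternating" or cyclic-polytope-like arrangements in their positive fibre. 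This is the step I expect to be the main obstacle. What we try first is the following. Sort $\sigma\cup\{a,b,c,d\}$, write each $\chi(\sigma,x,y)$ with its sorting sign $\sgn(\pi)$, and note that the three products then become products of two values of $u$ on $r$-sets lying in two different packets. Suppose all three products were equal. We would then extract from the two packets $P_1=\sigma\cup\{a,b,c\}$ and $P_2=\sigma\cup\{b,c,d\}$, or others as appropriate, an alternating subsequence of length three, contradicting at most one sign change. If a direct parity bookkeeping gets messy, we fall back on the known result of Felsner–Weil (rank~$3$) and Ziegler (general rank) cited in the statement.

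\emph{Case $d=\infty$.} Here $\chi(\sigma,a,d),\chi(\sigma,b,d),\chi(\sigma,c,d)$ are all $+$ up to the sorting sign of $\sigma$. Since $\sigma\subseteq S$ and $\infty$ is last, the sorting sign of $(\sigma,x,\infty)$ depends only on how many elements of $\sigma$ exceed $x$. The three terms then reduce to $\pm\chi(\sigma,a,b)$, $\mp\chi(\sigma,a,c)$, $\pm\chi(\sigma,b,c)$, with explicit signs determined by the positions of $a,b,c$ relative to $\sigma$. For these three not to be all equal, we need exactly the condition that the three values of $u$ on the $r$-subsets $\sigma\cup\{a,b\}$, $\sigma\cup\{a,c\}$, $\sigma\cup\{b,c\}$ of $P=\sigma\cup\{a,b,c\}$, read in lexicographic order, have no alternating pattern. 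This is implied by the packet $u[P]$ having at most one sign change.

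\emph{Case $\infty\in\sigma$.} Write $\sigma=\tau\cup\{\infty\}$ and move $\infty$ to the end. Every basis in (GP) then contains $\infty$, so all values are $+$ up to sorting signs. Checking that the three sorting signs are not all equal is then a direct parity computation, the same as for the constant chirotope of the alternating matroid.

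For "only if": given a packet $P$ with an alternating subsequence $u(X_i)=u(X_k)\neq u(X_j)$, we choose $\sigma$ and $a<b<c$ with $\sigma\cup\{a,b,c\}=P$, $d=\infty$, so that the $d=\infty$ case isolates exactly those three $r$-subsets. The equivalence established in that case then shows that (GP) fails.
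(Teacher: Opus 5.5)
Your ``only if'' direction is the paper's argument, and your cases $d=\infty$ and $\infty\in\sigma$ are correct; in the first, the sorting signs in fact reduce to a common factor. The gap is the remaining case $\infty\notin\sigma\cup\{a,b,c,d\}$, namely that a rank-$r$ signotope satisfies the three-term relations (GP). This is the whole substance of the ``if'' direction, and the proposal does not prove it.

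Your three routes for it do not close the gap:
\begin{itemize}
\item Your first plan rests on a wrong picture. The cycle analogous to $C_8$ is $\B(r+1,r-1)$, the $2(r+1)$ admissible values of a single packet. $\B(r+2,r-1)$ is not cycle-like: for $r=3$ it is $\B(5,2)$, with $62$ elements, and no enumeration covers all $r$.
\item Your second plan is only announced, not carried out.
\item Your fallback cites the sources of the proposition itself.
\end{itemize}
The paper also imports this fact, from Bergold, Miyata and Bergold--Felsner--Scheucher, but uses it only once. It observes that $\Pos_r(u)$ is itself a signotope, since its packets are those of $u$ and the sequences $(u(Q),+,\dots,+)$. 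So the case split by the position of $\infty$ is unnecessary.

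The missing step is short if you reduce all the way to rank $2$ on $\{a,b,c,d\}$, rather than to $r+2$ elements.
\begin{itemize}
\item For $\sigma$ sorted and $p<q$ outside $\sigma$, the sorting sign of $(\sigma,p,q)$ is $(-1)^{\alpha(p)+\alpha(q)}$, with $\alpha(p)=|\{s\in\sigma:s>p\}|$. So the three terms of (GP) share the factor $(-1)^{\alpha(a)+\alpha(b)+\alpha(c)+\alpha(d)}$. Hence (GP) for a map $w$ says that $\xi_{ab}\xi_{cd}$, $-\xi_{ac}\xi_{bd}$, $\xi_{ad}\xi_{bc}$ are not all equal, where $\xi_{pq}=w(\sigma\cup\{p,q\})$.
\item For $p<q<s$ in $\{a,b,c,d\}$, the sets $\sigma\cup\{p,q\}$, $\sigma\cup\{p,s\}$, $\sigma\cup\{q,s\}$ occupy three positions, in this order, of the packet of $\sigma\cup\{p,q,s\}$. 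So if $w$ is a signotope, $(\xi_{pq},\xi_{ps},\xi_{qs})$ never alternates.
\item Equivalently, define a tournament on $\{a,b,c,d\}$ by $p\prec q$ if $\xi_{pq}=+$ and $q\prec p$ if $\xi_{pq}=-$ (for $p<q$). It has no directed triangle, so it is a linear order.
\item Let $t_p$ be the rank of $p$ in this order. Then $\xi_{pq}=\sgn(t_q-t_p)$ for $p<q$. The three products are therefore the signs of the nonzero numbers $(t_b-t_a)(t_d-t_c)$, $-(t_c-t_a)(t_d-t_b)$ and $(t_d-t_a)(t_c-t_b)$. These sum to $0$, so they are not all of one sign.
\end{itemize}
Applied to $w=\Pos_r(u)$, this proves the ``if'' direction without any citation, and your two $\infty$-cases are special instances of it.
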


If $u$ is a signotope, then so is $\Pos_r(u)$, since its packets are those of $u$ and the sequences
$(u(Q),+,\dots,+)$; and every signotope is a uniform chirotope~\cite[Sect.~2.2.5]{Bergold}, \cite{Miyata}, \cite[Sect.~2.1]{BFS}.
Conversely, the conditions (GP) of $\Pos_r(u)$ at $a<b<c<\infty$ say that no packet of $u$ alternates, that is, that
$u$ is a signotope. The proposition is a form of the correspondence between signotopes and single-element liftings
of the alternating oriented matroid~\cite{KV,Ziegler,FW}; in rank~$3$ it describes $\B(m,2)$ by arrangements of
pseudolines with a line at infinity.

We call a uniform chirotope on $\Omega$ \emph{positive at $\infty$} if it is $+$ on every basis containing $\infty$.
By Proposition~\ref{prop:fibre}, $\Pos_r$ is a bijection from the rank-$r$ signotopes on $S$ onto these chirotopes,
with $D(\Pos_ru,\Pos_rv)=D(u,v)$ and $\Pos_r(u)^X=\Pos_r(u^X)$ for $X\in\binom Sr$. Since $\Pos_r(u)$ and
$-\Pos_r(v)$ differ on all $\binom{|S|}{r-1}\ge2$ bases through $\infty$, it follows that $u\mapsto[\Pos_r(u)]$ is an
isomorphism from $\B(S,r-1)$ onto the subgraph of the mutation graph $\G_{S\cup\{\infty\},r}$ induced by the oriented
matroids that have a chirotope positive at $\infty$. In rank $3$, $\Pos_3(s)$ is the chirotope of the projective
arrangement obtained from the wiring diagram of $s$ by adding the line at infinity, and the flip of a triple of wires
is the mutation of the same basis.

This subgraph is only a small part of the mutation graph. A walk of chirotopes leaves it as soon as it flips a basis
through $\infty$ (in rank $3$, such a mutation moves the line at infinity across a crossing of two other
pseudolines), and we do not know whether this can make walks shorter. The following lemma shows that a walk without
excess never leaves it; its part (b) is the tool that transfers lower bounds from signotopes to chirotopes.

\begin{lemma}[walks and the positive fibre]\label{lem:sheet}
Let $u,v\in\B(S,r-1)$.
\begin{enumerate}
\item[(a)] $\Pos_r$ maps walks in $\B(S,r-1)$ to walks of chirotopes of the same length, so
$\dt(\Pos_ru,\Pos_rv)\le\dB(u,v)$.
\item[(b)] $\dt(\Pos_ru,\Pos_rv)=|D(u,v)|$ if and only if $\dB(u,v)=|D(u,v)|$.
\end{enumerate}
\end{lemma}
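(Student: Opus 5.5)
Part (a) is immediate from the properties of $\Pos_r$ recorded after Proposition~\ref{prop:fibre}. If $u=u_0,u_1,\dots,u_\ell=v$ is a walk in $\B(S,r-1)$ with $u_i=u_{i-1}^{X_i}$, then each $\Pos_r(u_i)$ is a uniform chirotope by Proposition~\ref{prop:fibre}, and $\Pos_r(u_i)=\Pos_r(u_{i-1})^{X_i}$. So $(\Pos_r u_i)_i$ is a walk of chirotopes of the same length, and taking a shortest walk gives $\dt(\Pos_ru,\Pos_rv)\le\dB(u,v)$.

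For (b), the direction ``$\Leftarrow$'' follows from (a) together with the Hamming bound. By (a) we have $\dt(\Pos_ru,\Pos_rv)\le\dB(u,v)=|D(u,v)|$. By Lemma~\ref{lem:parity} applied to walks of chirotopes, $\dt(\Pos_ru,\Pos_rv)\ge|D(\Pos_ru,\Pos_rv)|=|D(u,v)|$, since $\Pos_r$ preserves the difference set.

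For ``$\Rightarrow$'', take a walk of chirotopes $\chi_0=\Pos_ru,\dots,\chi_\ell=\Pos_rv$ with $\ell=|D(u,v)|$. By Lemma~\ref{lem:parity}, this walk flips every basis of $D(\Pos_ru,\Pos_rv)=D(u,v)$ exactly once and no other basis. Every basis in $D(u,v)$ avoids $\infty$, so no step flips a basis through $\infty$. Hence every $\chi_i$ is still $+$ on all bases containing $\infty$, that is, every $\chi_i$ is positive at $\infty$. By Proposition~\ref{prop:fibre}, each $\chi_i$ is then of the form $\Pos_r(u_i)$ with $u_i$ a rank-$r$ signotope on $S$, namely $u_i=\chi_i|_{\binom Sr}$. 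Consecutive $u_i$ differ in exactly one $r$-subset of $S$, so $u_0=u,\dots,u_\ell=v$ is a walk in $\B(S,r-1)$ of length $|D(u,v)|$. Therefore $\dB(u,v)\le|D(u,v)|$, and equality follows from Lemma~\ref{lem:parity}.

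The only point needing care is the ``only if'' direction of Proposition~\ref{prop:fibre}: a chirotope positive at $\infty$ has a signotope as its restriction to $S$. This is exactly what allows the walk to be pulled back step by step. Apart from that, the argument is a direct application of the parity lemma, and I expect no real obstacle. I would also note that nothing here says what happens for walks with positive excess, which may leave the positive fibre.
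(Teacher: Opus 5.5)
Your proposal is correct and follows the paper's proof essentially step for step. For (a) you use $\Pos_r(u)^X=\Pos_r(u^X)$; for (b) you combine (a) with the Hamming bound in one direction, and in the other you argue that a walk of length $|D(u,v)|$ flips only bases avoiding $\infty$, so it stays positive at $\infty$ and pulls back to signotopes via Proposition~\ref{prop:fibre}.
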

\begin{proof}
(a) For $X\in\binom Sr$ we have $\Pos_r(u)^X=\Pos_r(u^X)$. (b) Both distances are at least
$|D(u,v)|=|D(\Pos_ru,\Pos_rv)|$ (Lemma~\ref{lem:parity}), so the ``if'' direction follows from (a). Conversely, by
Lemma~\ref{lem:parity} a walk of chirotopes of length $|D(u,v)|$ from $\Pos_ru$ to $\Pos_rv$ flips only bases in
$D(u,v)$, none of which contains $\infty$. So every chirotope on it is positive at $\infty$, hence the positive fibre
of a signotope (Proposition~\ref{prop:fibre}), and these signotopes form a walk of the same length in $\B(S,r-1)$.
\end{proof}

\section{Rank three}\label{sec:rank3}

Adding the line at infinity to the two wiring diagrams of Theorem~\ref{thm:wiring} gives the uniform chirotopes
$\Pos_3(s_G)$ and $\Pos_3(v_G)$ on $[m]\cup\{\infty\}$, which are positive at $\infty$. We show that the mutation
distance of the oriented matroids $[\Pos_3(s_G)]$ and $[\Pos_3(v_G)]$ is the braid-move distance $|D|+2\VC(G)$ of
the two wiring diagrams, if $G$ has at least three vertices (Theorem~\ref{thm:rank3}). The proof has four steps.
\begin{itemize}
\item \emph{Upper bound.} By Lemma~\ref{lem:sheet}(a), a sequence of braid moves gives a walk of chirotopes of the
same length, so $\dt(\Pos_3(s_G),\Pos_3(v_G))\le|D|+2\VC(G)$.
\item \emph{Lower bound.} A walk of chirotopes from $\Pos_3(s_G)$ to $\Pos_3(v_G)$ may flip bases through $\infty$
and pass through chirotopes that are not positive at $\infty$, so the lower bound of Theorem~\ref{thm:wiring} cannot
simply be quoted; whether such walks can ever be shorter than walks in $\B(m,2)$ is open
(item~\ref{it:isometry} of Section~\ref{sec:remarks}). We avoid this question. The lower bound of
Proposition~\ref{prop:lowerW} only needs that every gadget forces two extra flips, and this can be checked on the
eight elements $A_e\cup\{\infty\}$ (Proposition~\ref{prop:lower}).
\item \emph{From chirotopes to oriented matroids.} By Observation~\ref{obs:cover} it remains to show that
$-\Pos_3(v_G)$ is farther from $\Pos_3(s_G)$ than $\Pos_3(v_G)$; this is a count of the Hamming distance, and it
needs at least three vertices.
\item \emph{Membership in NP.} For $K$ in binary, certificates need a polynomial bound on the mutation distances
between arbitrary rank-$3$ oriented matroids (Section~\ref{sec:diam}).
\end{itemize}

\begin{proposition}[lower bound for walks of chirotopes]\label{prop:lower}
Let $G$ be a graph with vertex set $[n]$ and edge set $E$, let $m=3n+|E|$, let $s_G,v_G\in\B(m,2)$ be the sign
vectors of the words $W^s_G$ and $W^v_G$ of Section~\ref{sec:construction}, and let $D=D(s_G,v_G)$. Then every walk
of chirotopes on $[m]\cup\{\infty\}$ from $\Pos_3(s_G)$ to $\Pos_3(v_G)$ has length at least $|D|+2\VC(G)$; that is,
\[
\dt\bigl(\Pos_3(s_G),\Pos_3(v_G)\bigr)\ge|D|+2\VC(G).
\]
\end{proposition}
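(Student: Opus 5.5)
I follow the proof of Proposition~\ref{prop:lowerW} almost verbatim, with walks of chirotopes in place of sequences of braid moves and with the restriction to $A_e\cup\{\infty\}$ in place of deleting wires. Fix a walk of chirotopes $\chi_0=\Pos_3(s_G),\dots,\chi_\ell=\Pos_3(v_G)$ on $\Omega=[m]\cup\{\infty\}$. For each basis $T\in\binom\Omega3$ let $c_T$ be the number of steps flipping $T$, and put $Y=\{T:c_T>[T\in D]\}$. Note that $D(\chi_0,\chi_\ell)=D$ contains no basis through $\infty$. By Lemma~\ref{lem:parity}, $\ell\ge|D|+2|Y|$, so it suffices to show $|Y|\ge\VC(G)$. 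Bases through $\infty$ may now lie in $Y$; this is the only new feature.

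\emph{Every gadget contains an element of $Y$.} For an edge $e$, I restrict the walk to $A'_e=A_e\cup\{\infty\}$ by Lemma~\ref{lem:split}. This gives a walk of chirotopes from $\Pos_3(g^s)$ to $\Pos_3(g^v)$ (Corollary~\ref{cor:edgeD}, relabelling $A_e$ as $[7]$) with $\sum_{T\subseteq A'_e}c_T$ steps. The two endpoints differ exactly on the $30$ triples of $D_\kappa$. If no basis inside $A'_e$ were in $Y$, this restricted walk would have length exactly $30=|D(g^s,g^v)|$. Lemma~\ref{lem:sheet}(b) would then give $\dB(g^s,g^v)=30$, contradicting Proposition~\ref{prop:gadget}(b). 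Hence some basis $T_e\subseteq A'_e$ lies in $Y$.

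\emph{Overlaps.} This is the main obstacle, because the sets $A'_e$ now all share $\infty$. A basis $\{\infty,x,y\}$ with $x,y\in A_e\cap A_f$ could be charged to two edges. For edges $e,f$ sharing a vertex $u$, we have $A'_e\cap A'_f=X_u\cup\{\infty\}$, which contains the bead triple and the three bases $\infty xx'$ with $x,x'\in X_u$. My plan is to treat all four of these as ``vertex bases'' of $u$. Bases through $\infty$ with two wires in different groups, or involving a private wire, lie in at most one $A'_e$, so they can be charged to that edge alone.

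\emph{Constructing the cover.} I call $u$ \emph{marked} if one of its four vertex bases $X_u,\infty x x'$ ($x,x'\in X_u$) lies in $Y$. An edge is \emph{charged} if an endpoint is marked. Let $\mathcal C$ consist of the marked vertices together with one endpoint of each uncharged edge. Then $\mathcal C$ is a vertex cover. For the injective assignment, I send each marked $u$ to one of its vertex bases in $Y$; these are distinct for distinct $u$, since the blocks are disjoint. I send each uncharged edge $e$ to $T_e$. Because $e$ is uncharged, $T_e$ is not a vertex base of either endpoint. Any other basis in $A'_e$ has at least two of its elements in $A_e$ but not both in one block. So $T_e$ is not contained in any $A'_f$ for $f\ne e$, and it is not a vertex base of any vertex. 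This makes the assignment injective, so $\VC(G)\le|\mathcal C|\le|Y|$.

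Finally, I need to recheck the step $A'_e\cap A'_f$ for the remaining kind of overlap. A basis $\infty xy$ with $x\in X_u$ and $y$ outside $X_u$, both in $A_e\cap A_f$, is impossible because $A_e\cap A_f\subseteq X_u$. This is the point I would verify most carefully.
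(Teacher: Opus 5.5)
Your proof is correct and follows the paper's argument essentially step for step. Both restrict the walk to $A_e\cup\{\infty\}$ and use Lemma~\ref{lem:sheet}(b) to find a basis of $Y$ in every gadget, then rerun the charging argument of Proposition~\ref{prop:lowerW} with the bead triple replaced by the four bases inside $X_u\cup\{\infty\}$ (your ``vertex bases'', the paper's $\mathcal T_u$), using the same overlap facts you verify.
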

\begin{proof}
We follow the proof of Proposition~\ref{prop:lowerW}, with chirotopes in place of words. Let $\chi_s=\Pos_3(s_G)$
and $\chi_v=\Pos_3(v_G)$; both are $+$ on the bases through $\infty$, so $D(\chi_s,\chi_v)=D$. Fix a walk of
chirotopes from $\chi_s$ to $\chi_v$, let $c_B$ be the number of its steps that flip the basis $B$, and define $Y$
as in Lemma~\ref{lem:parity}. The walk has length at least $|D|+2|Y|$, so it suffices to show $|Y|\ge\VC(G)$.

For an edge $e$, restricting the walk to $A_e\cup\{\infty\}$ (Lemma~\ref{lem:split}), with $A_e$ relabelled by
$[7]$, gives a walk from $\Pos_3(g^s)$ to $\Pos_3(g^v)$ (Corollary~\ref{cor:edgeD}) of length
$\sum_{B\subseteq A_e\cup\{\infty\}}c_B$, and exactly $|D_\kappa|=30$ of these bases lie in $D$. If none of them lay
in $Y$, this walk would have length $30$, so $\dt(\Pos_3(g^s),\Pos_3(g^v))=30$, and Lemma~\ref{lem:sheet}(b) would
give $\dB(g^s,g^v)=30$, contrary to
Lemma~\ref{lem:gadgetlower} and Theorem~\ref{thm:corr}. So $Y$ contains a basis inside $A_e\cup\{\infty\}$.

The rest is the argument of Proposition~\ref{prop:lowerW}, with the bead triple $X_u$ replaced by the set
$\mathcal T_u$ of the four bases inside $X_u\cup\{\infty\}$. These sets are pairwise disjoint; for distinct edges
$e,f$, the bases inside both $A_e\cup\{\infty\}$ and $A_f\cup\{\infty\}$ form $\mathcal T_u$ if $e\cap f=\{u\}$, and
there are none if $e\cap f=\emptyset$; and for $e=uw$ only $\mathcal T_u$ and $\mathcal T_w$ contain bases inside
$A_e\cup\{\infty\}$. Hence the vertices $u$ with $Y\cap\mathcal T_u\ne\emptyset$, together with one endpoint of each
edge $uw$ with $Y\cap(\mathcal T_u\cup\mathcal T_w)=\emptyset$, form a vertex cover of $G$ with at most $|Y|$
vertices.
\end{proof}

\begin{theorem}\label{thm:rank3}
Let $G$ be a graph with $n\ge3$ vertices and $m=3n+|E|$. In the mutation graph $\G_{[m]\cup\{\infty\},3}$,
\[
d\bigl([\Pos_3(s_G)],[\Pos_3(v_G)]\bigr)=\dbr([W^s_G],[W^v_G])=|D|+2\VC(G),\qquad |D|=3n(m-3)+6|E|.
\]
\end{theorem}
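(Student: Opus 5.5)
The plan is to combine Observation~\ref{obs:cover} with the two bounds already available for walks of chirotopes. Write $\chi_s=\Pos_3(s_G)$ and $\chi_v=\Pos_3(v_G)$. By Observation~\ref{obs:cover}, $d([\chi_s],[\chi_v])=\min\{\dt(\chi_s,\chi_v),\dt(\chi_s,-\chi_v)\}$. So it suffices to show two things: $\dt(\chi_s,\chi_v)=|D|+2\VC(G)$, and $\dt(\chi_s,-\chi_v)$ is strictly larger than this. The equality $\dbr([W^s_G],[W^v_G])=|D|+2\VC(G)$, together with the formula for $|D|$, is Theorem~\ref{thm:wiring}.

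For the first part I would argue as follows. By Lemma~\ref{lem:sheet}(a), Theorem~\ref{thm:corr} and Theorem~\ref{thm:wiring},
\[
\dt(\chi_s,\chi_v)\le\dB(s_G,v_G)=|D|+2\VC(G).
\]
Proposition~\ref{prop:lower} gives the matching lower bound. This part is routine.

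For the second part, Observation~\ref{obs:cover} gives $\dt(\chi_s,-\chi_v)\ge\binom{m+1}3-|D|$, since the ground set $[m]\cup\{\infty\}$ has $m+1$ elements and $D(\chi_s,\chi_v)=D$. It therefore suffices to prove the purely numerical inequality
\[
\binom{m+1}3>2|D|+2\VC(G),
\]
and for this it is enough to bound $\VC(G)\le n$. Substituting $|E|=m-3n$ into $|D|=3n(m-3)+6|E|$ gives
\[
|D|=3nm-27n+6m.
\]
The inequality to prove then becomes
\[
\binom{m+1}3>6nm-52n+12m,
\]
with $n\ge3$ and $m\ge3n$. The left side is cubic in $m$, while the right side is at most about $2m^2$, using $n\le m/3$. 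So the inequality holds for all large $m$ (roughly $m\ge 14$ already suffices, since $(m+1)m(m-1)/6$ beats $2m^2-40m/3$ once $m^2-1>12m-80+72$).

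The remaining small cases are $n\ge3$ and $3n\le m$ below that threshold, so only $n\in\{3,4\}$ with few edges. These I would check by a short direct computation of both sides. For example, $n=3$, $|E|=0$ gives $120>114$.

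The main obstacle is this numerical step. The hypothesis $n\ge3$ is exactly what makes it work: for $n\le2$ the Hamming distance $|D|$ can exceed half of $\binom{m+1}3$, and then $-\chi_v$ might be closer than $\chi_v$. So I would take care to state the inequality in the sharpest form, using the exact expression for $|D|$ rather than crude bounds, and to verify the finitely many boundary cases $n=3,4$ explicitly.
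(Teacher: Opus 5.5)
Your proposal is correct and follows the paper's proof. You sandwich $\dt(\chi_s,\chi_v)=|D|+2\VC(G)$ between Proposition~\ref{prop:lower} and Lemma~\ref{lem:sheet}(a) with Theorem~\ref{thm:wiring}, then rule out $-\chi_v$ via Observation~\ref{obs:cover} and the numerical inequality $\binom{m+1}3>2|D|+2\VC(G)$.

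One small correction in that numerical step: with $n\le m/3$ and $6m-52>0$, the right side is at most $2m^2-\tfrac{16}3m$, not $2m^2-\tfrac{40}3m$. Since $\binom{m+1}3-2m^2+\tfrac{16}3m=\tfrac m6(m^2-12m+31)>0$ for every $m\ge9$, you do not need a separate check of small cases.
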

\begin{proof}
We first determine the distance $\dt$ of the chirotopes; this part holds for every graph $G$. By
Proposition~\ref{prop:lower}, Lemma~\ref{lem:sheet}(a) and Theorem~\ref{thm:wiring},
\[
|D|+2\VC(G)\le\dt\bigl(\Pos_3(s_G),\Pos_3(v_G)\bigr)\le\dB(s_G,v_G)=\dbr([W^s_G],[W^v_G])=|D|+2\VC(G).
\]
For the mutation distance, by
Observation~\ref{obs:cover} it suffices to show $\binom{m+1}3-|D|>|D|+2\VC(G)$. Using $\VC(G)\le n-1$ and $|E|=m-3n$,
\[
2|D|+2\VC(G)\le6n(m-3)+12(m-3n)+2n-2<n(6m-52)+12m .
\]
As $n\ge3$ we have $m\ge9$, so $6m-52>0$, and $n\le m/3$ gives $2|D|+2\VC(G)<2m^2-\tfrac{16}3m$. Finally
$\binom{m+1}3-2m^2+\tfrac{16}3m=\tfrac m6\,(m^2-12m+31)>0$ for $m\ge9$.
\end{proof}

The assumption $n\ge3$ cannot be dropped: for $G=K_2$ there is a walk of chirotopes of length $26<32$ from
$\Pos_3(s_G)$ to $-\Pos_3(v_G)$ (computer check, see \hyperref[sec:computations]{\emph{Computations}}); this attains the lower bound
$\binom83-|D|=56-30$ of Observation~\ref{obs:cover}, so the mutation distance in $\G_{8,3}$ is $26$.

\subsection{A polynomial diameter bound in rank three}\label{sec:diam}

Membership of $\DIST_3$ in NP with $K$ in binary needs a polynomial bound on mutation distances in rank $3$. By
Observation~\ref{obs:cover} it suffices to bound $\dt$. We join every chirotope to a realizable one
(Lemma~\ref{lem:toreal}), and join realizable chirotopes by moving their vectors (Fact~\ref{fact:bridge}).

The following normal form is standard: it is the description of $\B(n-1,2)$ by arrangements of $n$
pseudolines labelled cyclically along the line at infinity~\cite[Sect.~4]{Ziegler}, combined with the
correspondence between signotopes and marked arrangements~\cite[Thms.~2 and~7]{FW}. Geometrically, one takes the
pseudoline $e$ as the line at infinity, reorients the other pseudolines consistently along it, and labels them in
the order in which they meet it.

\begin{fact}[normal form]\label{fact:normal}
Let $\chi$ be a uniform chirotope of rank $3$ on $\Omega$, $|\Omega|=n\ge4$, and $e\in\Omega$. There is a
composition $\gamma$ of reorientations and a relabelling $\Omega\to[n-1]\cup\{\infty\}$ with $e\mapsto\infty$
such that $\gamma\chi=\Pos_3(s)$ for a signotope $s\in\B(n-1,2)$, that is, such that $\gamma\chi$ is positive at
$\infty$.
\end{fact}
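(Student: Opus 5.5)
The plan is to build $\gamma$ in two stages: first make every basis through $e$ positive by reorientations, then relabel so that $e$ becomes $\infty$. After that, Proposition~\ref{prop:fibre} identifies the resulting chirotope with the positive fibre of a signotope.

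\emph{The contraction at $e$.} Consider the contraction $\chi/e$, a uniform chirotope of rank $2$ on $\Omega\setminus e$ (Lemma~\ref{lem:minors}). For rank $2$, I will show that a suitable reorientation and relabelling makes every value $+$; this is the analogue of orienting the lines through a point consistently. The argument uses (GP) with $\sigma$ empty. For distinct $a,b,c,d$ the three products $\chi(a,b)\chi(c,d)$, $-\chi(a,c)\chi(b,d)$ and $\chi(a,d)\chi(b,c)$ are not all equal.

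\emph{Ordering.} Fix $x_0\in\Omega\setminus e$ and reorient so that $\chi(e,x_0,y)=+$ for every $y\ne x_0$; reorienting a single $y$ negates exactly the bases $\{e,x_0,y\}$ and others containing $y$. Then define a relation on the remaining elements: $y\prec z$ if $\chi(e,y,z)=+$. I will check with (GP) on $x_0,y,z,w$ that $\prec$ is transitive, so it is a total order; this is the rank-$2$ fact that uniform rank-$2$ chirotopes are reorientations of alternating ones. Relabelling $e\mapsto\infty$, $x_0\mapsto1$, and the rest in $\prec$-order by $2,\dots,n-1$ then gives $\chi(\infty,a,b)=+$ for $a<b$, with $\infty$ placed last.

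\emph{Sign bookkeeping.} The relabelling convention puts $\infty$ last, and moving $\infty$ from the first position to the last in a triple is an even permutation. So $(\gamma\chi)(\{a,b,\infty\})=(\gamma\chi)(\infty,a,b)=+$, and $\gamma\chi$ is positive at $\infty$. Proposition~\ref{prop:fibre} then says that $\gamma\chi$ is $\Pos_3(s)$ for a signotope $s\in\B(n-1,2)$, where $s$ is the restriction of $\gamma\chi$ to $[n-1]$.

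\emph{Main obstacle.} I expect the hardest step to be the transitivity check, which is a careful case analysis of signs in (GP). It needs $n\ge4$ to have enough elements. Alternatively, it can be cited as the classical normal form of~\cite[Sect.~4]{Ziegler}.
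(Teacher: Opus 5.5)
Your proof is correct and follows the paper's route: the contraction $\chi/e$ has rank $2$, so after reorientation and relabelling it is alternating, and Proposition~\ref{prop:fibre} then identifies the resulting chirotope, which is positive at $\infty$, with $\Pos_3(s)$. Your tournament argument fills in the rank-$2$ step that the paper only cites, and it is easier than you expect. Transitivity is a single application of (GP) with $\sigma=\{e\}$ to $x_0,y,z,w$, since a cyclic triple makes all three terms $+$. The hypothesis $n\ge4$ is not needed there; it enters only through $|S|=n-1\ge3$ in Proposition~\ref{prop:fibre}.
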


\begin{lemma}[every rank-$3$ chirotope is close to a realizable one]\label{lem:toreal}
Let $\chi$ be a uniform chirotope of rank $3$ on an $n$-element set, $n\ge4$. Then there is a realizable uniform
chirotope $\chi'$ with $\dt(\chi,\chi')\le\binom{n-1}3$.
\end{lemma}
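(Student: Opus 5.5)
Pick any element $e\in\Omega$ and apply Fact~\ref{fact:normal}: a composition $\gamma$ of reorientations and a relabelling $\Omega\to[n-1]\cup\{\infty\}$ with $e\mapsto\infty$ gives $\gamma\chi=\Pos_3(s)$ for a signotope $s\in\B(n-1,2)$. By Lemma~\ref{lem:minors}, reorientations and relabellings preserve $\dt$ and realizability. It therefore suffices to find a realizable chirotope $\chi''$ with $\dt(\Pos_3(s),\chi'')\le\binom{n-1}3$; then $\chi'=\gamma^{-1}\chi''$ is realizable and $\dt(\chi,\chi')=\dt(\gamma\chi,\chi'')\le\binom{n-1}3$.

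For $\chi''$ take the positive fibre of the constant signotope, or of the constant signotope $-$; whichever is closer to $s$. By the walk statement preceding~\eqref{eq:diamB}, $s$ is joined in $\B(n-1,2)$ to the constant signotope $+$ by a walk of length $|\{X:s(X)=-\}|$, and to $-$ by one of length $|\{X:s(X)=+\}|$. Both lengths are at most $\binom{n-1}3$, and even their minimum is at most $\frac12\binom{n-1}3$, but the cruder bound suffices. By Lemma~\ref{lem:sheet}(a), $\Pos_3$ maps such a walk to a walk of chirotopes of the same length. Hence $\dt(\Pos_3(s),\Pos_3(c))\le\binom{n-1}3$ for $c$ constant.

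It remains to check that $\Pos_3(c)$ is realizable for a constant sign $c$. This is the only step requiring thought.

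\emph{Case $c=+$.} $\Pos_3(+)$ is $+$ on all $3$-subsets of $[n-1]\cup\{\infty\}$, that is, the alternating chirotope. It is realized by points on the moment curve, $x_i=(1,t_i,t_i^2)$ with $t_1<\dots<t_{n-1}<t_\infty$, since every Vandermonde determinant with increasing nodes is positive.

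\emph{Case $c=-$.} Here $\Pos_3(-)$ is $-$ on triples inside $[n-1]$ and $+$ on triples through $\infty$. It is realized by $x_i=(1,t_i,-t_i^2)$ for $i\in[n-1]$ with increasing $t_i$, and $x_\infty=(0,0,1)$. Then $\det(x_a,x_b,x_c)=-V(t_a,t_b,t_c)<0$, and $\det(x_a,x_b,x_\infty)=t_b-t_a>0$.

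Alternatively, one could avoid computing: choosing $c=+$ always costs at most $\binom{n-1}3$, so only the alternating case is needed.
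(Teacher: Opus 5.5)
Your proof is correct and follows essentially the same route as the paper: normalize with Fact~\ref{fact:normal}, walk from $s$ to the constant signotope $+$ in $\B(n-1,2)$, transfer that walk to chirotopes by Lemma~\ref{lem:sheet}(a), realize $\Pos_3(+)$ by a Vandermonde configuration, and pull back by $\gamma^{-1}$. The differences are cosmetic. The paper puts $\infty$ at $(0,0,1)$ rather than as the last point on the moment curve, and it uses only the constant $+$, so your case $c=-$ (also correct) is not needed.
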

\begin{proof}
By Fact~\ref{fact:normal}, $\gamma\chi=\Pos_3(s)$ for a signotope $s\in\B(n-1,2)$ and a composition $\gamma$
of reorientations and a relabelling. By~\eqref{eq:diamB}, $s$ is joined to the constant signotope $+$ by a walk of
length at most $\binom{n-1}3$ in $\B(n-1,2)$, so $\dt(\Pos_3(s),\Pos_3(+))\le\binom{n-1}3$ by
Lemma~\ref{lem:sheet}(a). The chirotope $\Pos_3(+)$ is realized by the vectors $(1,a,a^2)$, $a\in[n-1]$, and
$(0,0,1)$ for $\infty$: for $a<b<c$,
\begin{gather*}
\det\bigl((1,a,a^2),(1,b,b^2),(1,c,c^2)\bigr)=(b-a)(c-a)(c-b)>0,\\
\det\bigl((1,a,a^2),(1,b,b^2),(0,0,1)\bigr)=b-a>0 .
\end{gather*}
Since $\gamma^{-1}$ preserves $\dt$ and realizability (Lemma~\ref{lem:minors}), $\chi'=\gamma^{-1}\Pos_3(+)$ has the
required properties.
\end{proof}

The argument does not extend to higher rank. It rests on Fact~\ref{fact:normal}, which uses that the
contraction $\chi/e$ has rank $2$: every uniform oriented matroid of rank $2$ is, up to reorientation and
relabelling, the alternating one, and this makes $\gamma\chi$ positive at $\infty$. In rank $r\ge4$ the contraction
$\chi/e$ is a uniform oriented matroid of rank $r-1$, which in general is not of this form, so $\chi$ is in general
not a positive fibre up to reorientation and relabelling. In fact, for $r\ge7$ and $n\ge r+17$ the mutation graph
$\G_{n,r}$ is disconnected~\cite{GuKnauer}, while the realizable uniform oriented matroids lie in one of its
components~\cite{RoudneffSturmfels}; so there are uniform oriented matroids that are joined to no realizable one by
mutations.

\begin{fact}[realizable chirotopes are close~\cite{RoudneffSturmfels}, {\cite[Sect.~6.3]{Handbook}}]\label{fact:bridge}
Two realizable uniform chirotopes $\chi,\psi$ of rank $r$ on an $n$-element set satisfy
$\dt(\chi,\psi)\le n\binom{n-1}{r-1}=r\binom nr$.
\end{fact}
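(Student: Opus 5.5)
The plan is to move the vectors of one configuration to those of the other, one vector at a time, along a generic straight-line path. Let $\chi$ be realized by $x_1,\dots,x_n$ and $\psi$ by $y_1,\dots,y_n$, both in general position, with the same labels. For $j=1,\dots,n$ in turn, I move $x_j$ to $y_j$ while the other vectors stay fixed: the vectors with index $<j$ are already at their $y$-positions and those with index $>j$ are still at their $x$-positions. The number of mutations in step $j$ will be bounded by the number of bases containing $j$, which is $\binom{n-1}{r-1}$. Summing over the $n$ steps gives $n\binom{n-1}{r-1}=r\binom nr$.

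For a single step, fix the other $n-1$ vectors, which are in general position. Moving $x_j$ along a path $z(t)$, the chirotope changes only when $z(t)$ crosses a hyperplane $H_Y=\operatorname{span}(Y)$ for some $(r-1)$-subset $Y$ of the other vectors. At such a crossing only the basis $Y\cup\{j\}$ changes sign, provided the crossing point avoids all other such hyperplanes. In between, the configuration stays in general position, so every intermediate sign vector is a realizable, hence genuine, chirotope.

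To make each basis flip at most once, I use a path that crosses each hyperplane $H_Y$ at most once. A straight segment from $x_j$ to $y_j$ does this, since a segment meets a hyperplane through the origin at most once unless it lies in it. I will perturb first, replacing $y_j$ by a nearby vector with the same chirotope role, or using a segment in general position relative to the finitely many codimension-$2$ intersections $H_Y\cap H_{Y'}$. Then the segment never hits two hyperplanes at once and never passes through $0$. The final configuration must still realize $\psi$, which holds if the endpoints are only perturbed within open cells. Genericity of the path (it avoids codimension-$2$ sets) is the main technical point. The crossings of the segment with the $\binom{n-1}{r-1}$ hyperplanes then occur at distinct times, each is one mutation, and step $j$ uses at most $\binom{n-1}{r-1}$ mutations.

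The obstacle I expect is bookkeeping, not geometry. Intermediate configurations mix $y_i$ for $i<j$ with $x_i$ for $i>j$, and I must check that the configuration is in general position at the start of each step. If it is not, I perturb the $y_i$ slightly beforehand, which does not change $\psi$ because it is realized on an open set. With that, concatenating the $n$ steps gives a walk of chirotopes of length at most $r\binom nr$.
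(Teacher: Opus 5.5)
Your proposal is correct and follows the paper's own argument. After a small generic perturbation of both realizations, you move the vectors one at a time along straight segments, so that the moving vector crosses each of the $\binom{n-1}{r-1}$ hyperplanes spanned by $r-1$ of the other vectors at most once and one at a time, and each crossing is one mutation. Your extra bookkeeping only spells out what the paper's ``small generic perturbation'' is meant to cover: general position of the mixed configurations $y_1,\dots,y_{j-1},x_j,\dots,x_n$, and avoiding the codimension-$2$ sets $H_Y\cap H_{Y'}$ and the origin.
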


Indeed, move the vectors of a realization of $\chi$ to those of a realization of $\psi$, one at a time, along
segments, after a small generic perturbation of both realizations. Then a moving vector crosses each hyperplane
spanned by $r-1$ of the other vectors at most once and one at a time, and each crossing is one mutation; so moving one
vector costs at most $\binom{n-1}{r-1}$ mutations.

\begin{proposition}[rank-three diameter]\label{prop:diam}
For uniform chirotopes $\chi,\psi$ of rank $3$ on an $n$-element set, $n\ge3$,
\[
\dt(\chi,\psi)\le2\binom{n-1}3+3\binom n3<5\binom n3 .
\]
\end{proposition}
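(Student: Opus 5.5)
The plan is to join $\chi$ and $\psi$ through realizable chirotopes: $\chi\to\chi'\to\psi'\to\psi$, bound each leg separately, and add the bounds using the triangle inequality for $\dt$. The walks concatenate because $\dt$ is the least length of a walk of chirotopes and walks compose.

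\emph{The small case $n=3$.} Here there is a single basis, so any two chirotopes differ in at most one basis. Hence $\dt(\chi,\psi)\le1$, while the right-hand side is $2\binom23+3\binom33=3$, so the bound holds. In what follows assume $n\ge4$, so that Lemma~\ref{lem:toreal} applies.

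\emph{Outer legs.} Apply Lemma~\ref{lem:toreal} to $\chi$ and to $\psi$. This gives realizable uniform chirotopes $\chi'$ and $\psi'$ on the same $n$-element set with $\dt(\chi,\chi')\le\binom{n-1}3$ and $\dt(\psi,\psi')\le\binom{n-1}3$. Since $\dt$ is symmetric (walks can be reversed), the leg from $\psi'$ to $\psi$ has the same bound.

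\emph{Middle leg.} Apply Fact~\ref{fact:bridge} with $r=3$. It gives $\dt(\chi',\psi')\le3\binom n3$. Summing the three legs yields the first inequality.

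\emph{Strict inequality.} It remains to check $2\binom{n-1}3<2\binom n3$. This holds because $\binom{n-1}3<\binom n3$ for $n\ge3$, as $\binom n3-\binom{n-1}3=\binom{n-1}2>0$.

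The only point needing care is the middle leg. Fact~\ref{fact:bridge} is quoted from the literature, and its sketch relies on a generic perturbation so that hyperplane crossings happen one at a time. Each such crossing must be a single basis flip whose result is again a chirotope. That holds automatically, since every intermediate configuration is in general position and hence realizable. No other obstacles arise, since the remaining steps invoke stated results directly.
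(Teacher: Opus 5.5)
Your proof is correct and follows the paper's argument exactly: the trivial case $n=3$, then Lemma~\ref{lem:toreal} for the two outer legs and Fact~\ref{fact:bridge} with $r=3$ for the middle leg, combined by concatenating walks. Your check of the strict inequality via $\binom n3-\binom{n-1}3=\binom{n-1}2>0$ fills in a step the paper leaves implicit.
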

\begin{proof}
For $n=3$ the claim is trivial. For $n\ge4$, Lemma~\ref{lem:toreal} gives realizable chirotopes $\chi'$ and $\psi'$
within $\binom{n-1}3$ flips of $\chi$ and $\psi$, and Fact~\ref{fact:bridge} joins $\chi'$ and $\psi'$ by a walk
of at most $3\binom n3$ flips.
\end{proof}

\begin{theorem}\label{thm:np3}
$\DIST_3$ is NP-complete, with $K$ given in binary. The same holds in corank $3$.
\end{theorem}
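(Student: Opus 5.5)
Membership and hardness are handled separately, and corank $3$ is then obtained by duality.

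\emph{Membership in NP.} An instance consists of two tables $\chi_1,\chi_2$ of length $\binom n3$ and a binary $K$. We first check (GP) for every $\sigma$ and every $4$-set, in polynomial time; if either table is not a chirotope the answer is no. By Observation~\ref{obs:cover}, $d([\chi_1],[\chi_2])=\min\{\dt(\chi_1,\chi_2),\dt(\chi_1,-\chi_2)\}$, and by Proposition~\ref{prop:diam} both terms are less than $5\binom n3$. If $K\ge5\binom n3$ the answer is therefore yes. Otherwise a certificate is a sign $\pm$ together with a walk $\chi_1=\chi^{(0)},\dots,\chi^{(\ell)}=\pm\chi_2$ with $\ell\le K<5\binom n3$, where each step negates one basis. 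It suffices to list the flipped bases, which takes polynomial length. The verifier recomputes each intermediate table and checks (GP) on it in $O(n^5)$ time per table (one check per $\sigma$ and per $4$-set), and it checks the endpoint.

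\emph{NP-hardness.} We reduce from \textsc{Vertex Cover}, which is NP-complete on graphs with at least three vertices; instances with $n\le2$ can be solved directly and mapped to fixed yes/no instances. Given $(G,k)$, we build $W^s_G,W^v_G$ as in Section~\ref{sec:construction} in polynomial time (Theorem~\ref{thm:braidNP}) and output the tables of $\Pos_3(s_G)$ and $\Pos_3(v_G)$ on $[m]\cup\{\infty\}$. These have $\binom{m+1}3$ entries, and we relabel $\infty$ as $m+1$, which is harmless because $\infty$ is already the largest element. We set $K=|D|+2k$ with $|D|=3n(m-3)+6|E|$. By Theorem~\ref{thm:rank3}, the mutation distance is $|D|+2\VC(G)$, so it is at most $K$ if and only if $\VC(G)\le k$.

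\emph{Corank $3$.} The dual map $\chi\mapsto\chi^*$ is polynomial-time computable and induces an isomorphism $\G_{\Omega,3}\cong\G_{\Omega,|\Omega|-3}$. Both the reduction and membership therefore transfer: we dualize the output instance, and for membership we dualize the input and use the rank-$3$ certificate. Dualization maps tables of length $\binom N3$ to tables of length $\binom N{N-3}=\binom N3$, so sizes stay polynomial.

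The only real obstacle is the polynomial distance bound needed to keep binary $K$ certificates short. This is exactly Proposition~\ref{prop:diam}, so here it reduces to quoting it together with Observation~\ref{obs:cover}. Everything else is bookkeeping: checking that uniform chirotope inputs can be verified, and that the $n\ge3$ restriction costs nothing.
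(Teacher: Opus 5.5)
Your proposal is correct and follows the paper's proof essentially step for step. Hardness comes from Theorem~\ref{thm:rank3} with $K=|D|+2k$ on graphs with at least three vertices. Membership uses Proposition~\ref{prop:diam} and Observation~\ref{obs:cover}: the answer is yes when $K\ge5\binom n3$, and otherwise a walk to $\chi_2$ or $-\chi_2$ serves as the certificate. Corank~$3$ follows by duality. Your extra bookkeeping (validating the input tables, relabelling $\infty$ as $m+1$, and the size of the dual tables) only makes explicit what the paper leaves implicit.
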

\begin{proof}
Given a \textup{\textsc{Vertex Cover}} instance $(G,k)$ with $n\ge3$ (smaller instances are decided directly), take
the chirotopes $\Pos_3(s_G)$ and $\Pos_3(v_G)$, where $s_G,v_G$ are the sign vectors constructed in
Section~\ref{sec:construction}; $\Pos_3$ adds $\binom m2$ entries equal to $+$. Put $K=|D|+2k$. By Theorem~\ref{thm:rank3}, $\VC(G)\le k$ holds if and only if the
mutation distance is at most $K$.

Membership: by Proposition~\ref{prop:diam} and Observation~\ref{obs:cover} the answer is yes if $K\ge5\binom{M}3$, with
$M$ the size of the ground set. Otherwise a walk of chirotopes of length at most $K$ from $\chi_1$ to $\chi_2$ or to
$-\chi_2$ is a certificate of polynomial size (Observation~\ref{obs:cover}), and every step can be checked in polynomial
time using Definition~\ref{def:chirotope}. Corank $3$ follows by duality
(Section~\ref{sec:chirotopes}).
\end{proof}

\section{Higher ranks}\label{sec:higher}

This section extends the hardness results to every rank $r\ge4$. We first outline the argument; the details
follow.
\begin{itemize}
\item \emph{The lift.} A rank-$3$ signotope $s$ on $S$ is lifted to a rank-$r$ signotope on $N\cup S$ by adding $c$
new elements below $S$ (Definition~\ref{def:lift}). The lift is built so that contracting any $r-3$ of the new
elements and deleting the others gives back $\Pos_3(s)$.
\item \emph{Lower bound.} A mutation of the lifted instance survives in at most one of these $\binom c{r-3}$ minors,
and each minor must pay the rank-$3$ excess on its own. So the excess is multiplied by at least $\binom c{r-3}$.
\item \emph{Upper bound.} A flip of $s$ lifts to at most $\binom{c+m-3}{r-3}$ flips (Lemma~\ref{lem:liftflip}), so
the excess is multiplied by at most this factor, while the Hamming distance of the lifts is known exactly
(Proposition~\ref{prop:excess}).
\item \emph{Hardness.} For large $c$ the two factors are close, so the distance of the lifts still separates
$\VC(G)\le k$ from $\VC(G)\ge k+1$. The bounds hold both for walks of chirotopes and for walks in the higher Bruhat
order $\B(N\cup S,r-1)$, which gives hardness for the mutation graphs (Theorem~\ref{thm:allranks}) and for the higher
Bruhat orders (Corollary~\ref{cor:hbo}).
\end{itemize}

Throughout this section $S$ is a finite totally ordered set with $|S|=m$, and $N$ is a set of $c$ new elements
placed below $S$. For a set $X\subseteq N\cup S$ of at least three elements, $\topp X$ denotes the set of its three largest
elements; since $N<S$, it lies in $S$ if and only if $|X\cap S|\ge3$.

\paragraph{The lift.} The lift adds the $c$ new elements of $N$ and raises the rank from $3$ to $r$.

\begin{definition}[lift]\label{def:lift}
Let $s\in\B(S,2)$, let $N$ be a finite set of new elements placed below $S$, and let $r\ge3$. The \emph{lift}
$\Lift s=\Lift^r_Ns\colon\binom{N\cup S}r\to\{\pm\}$ takes the value $s(\topp X)$ at $X$ if $|X\cap S|\ge3$, and $+$
otherwise.
\end{definition}

The lift is built from two known maps. Both are compatible with the \emph{higher Bruhat
order}~\cite{MS,Ziegler} on the rank-$k$ signotopes on a set: $u\le v$ if $v$ arises from $u$ by a walk of
signotopes in which every step changes the value of one set from $+$ to $-$.
\begin{itemize}
\item \emph{Padding.} For $r=3$ the lift is the \emph{padding} $\hat s$ of $s$, which is $+$ on every triple meeting
$N$. It is the embedding of $\B(m,2)$ as a lower interval of $\B(m+c,2)$, in mirror image~\cite[Prop.~4.3]{Ziegler}.
\item \emph{Expansion.} For a finite totally ordered set $\Omega_1$, a new element $\ell$ placed below $\Omega_1$ and
a map $v\colon\binom{\Omega_1}k\to\{\pm\}$, let $v'$ be the map on the $(k+1)$-subsets of $\{\ell\}\cup\Omega_1$ given
by $v'(Z)=v(Z\setminus\min Z)$. After reversing the order of the ground set, this is the weak extension of Felsner and
Weil, called expansion by Rambau; it maps signotopes of rank $k$ to signotopes of rank $k+1$ and is an
order-preserving embedding $\B(m,k-1)\to\B(m+1,k)$~\cite[Sect.~4.3]{FW}, \cite{Rambau}.
\end{itemize}
Reversing the order of the ground set reverses every packet and keeps the sets with sign $-$, so it maps signotopes
to signotopes and preserves the higher Bruhat order; the mirror images are therefore harmless.

The two maps combine as follows. Let $\ell$ be a further element placed below $N$. For an $(r+1)$-subset $X$ of
$\{\ell\}\cup N\cup S$, the set $X\setminus\min X$ has the same three largest elements as $X$, and $|X\cap S|\ge3$ if
and only if $|(X\setminus\min X)\cap S|\ge3$ (if $\min X\in S$, then $X\subseteq S$). Hence
$\Lift^{r+1}_{\{\ell\}\cup N}s=(\Lift^r_Ns)'$, and for $N=\{\ell_1<\dots<\ell_c\}$ with $c\ge r-3$,
\begin{equation}\label{eq:liftformula}
\Lift^r_Ns=\bigl(\cdots\bigl((\Lift^3_{N'}s)'\bigr)'\cdots\bigr)'\qquad(r-3\text{ primes}),\qquad
N'=N\setminus\{\ell_1,\dots,\ell_{r-3}\},
\end{equation}
where the steps add $\ell_{r-3},\dots,\ell_1$ in this order. So the lift is the padding followed by $r-3$ expansions.
What is new here is its use together with contractions by $(r-3)$-sets of new elements
(Proposition~\ref{prop:excess}).

\begin{lemma}\label{lem:lift}
Let $s\in\B(S,2)$, $r\ge3$ and $q=r-3$, and suppose that $|N|=c\ge q$.
\begin{enumerate}
\item[(a)] $\Lift^r_Ns$ is a signotope of rank $r$, and the lift preserves the higher Bruhat order: if $u\le v$ in
$\B(S,2)$, then $\Lift^r_Nu\le\Lift^r_Nv$.
\item[(b)] If $T\in\binom S3$ and $s^T$ is a signotope, then $\Lift s$ and $\Lift s^T$ differ exactly on the sets
$T\cup Y$ with $Y$ a $q$-subset of $N\cup\{x\in S:x<\min T\}$.
\end{enumerate}
\end{lemma}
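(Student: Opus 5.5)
For (a), I will use the factorization \eqref{eq:liftformula}: $\Lift^r_Ns$ is the padding $\Lift^3_{N'}s=\hat s$ followed by $q$ expansions, each adding one new element below the current ground set. Each ingredient maps signotopes to signotopes and preserves the higher Bruhat order: the padding by \cite[Prop.~4.3]{Ziegler}, and the expansion $v\mapsto v'$ by \cite[Sect.~4.3]{FW} and \cite{Rambau}. Both statements are cited in mirror image, which is harmless by the remark on reversing the ground set. Composing these statements gives (a).

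I would also check directly that $\hat s$ is a signotope, since this is quick and independent of the citation. Take a $4$-set $P$. If $P\subseteq S$, its packet is a packet of $s$. If $|P\cap N|\ge2$, every triple of $P$ meets $N$, so the packet is constant $+$. If $|P\cap N|=1$, then $\min P\in N$, and the packet is $(+,+,+,s(P\setminus\min P))$, which has at most one sign change. For the expansion I would verify directly that the packet of an $(k+2)$-set $Z$ under $v'$ is the packet of $Z\setminus\min Z$ under $v$, followed by the single value $v(Z\setminus\min Z)$ repeated. This holds because deleting any element other than $\min Z$ leaves $\min Z$ as the minimum, while deleting $\min Z$ itself gives one more entry. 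Such an extension adds no sign change. Order preservation then follows because each map commutes with single-set flips up to multiplicity: a $+\to-$ step of $u$ becomes several $+\to-$ steps, each through signotopes, by the argument for (b) below.

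For (b), I will compare the defining formula directly. $\Lift s$ and $\Lift s^T$ differ at $X$ exactly when $|X\cap S|\ge3$ and $\topp X=T$, because $s$ and $s^T$ differ only at $T$. The condition $\topp X=T$ means that $X=T\cup Y$ with every element of $Y$ below $\min T$, so $Y\subseteq N\cup\{x\in S:x<\min T\}$ with $|Y|=r-3=q$. Conversely, every such $Y$ gives $X=T\cup Y$ with $\topp X=T\subseteq S$, so $|X\cap S|\ge3$. This is a pure set-theoretic check, and the hypothesis that $s^T$ is a signotope is needed only so that $\Lift s^T$ is meaningful as a signotope.

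The main obstacle is the order-preservation part of (a), if one does not simply cite it. To prove it directly, I would go from $\Lift u$ to $\Lift u^T$ (for $u(T)=+$) by flipping the sets $T\cup Y$ from (b) one at a time, in a suitable order. Lexicographic order on $Y$ is the first candidate. Each intermediate map must be shown to be a signotope, and packets would be checked by the same case analysis on $|P\cap S|$ as above. I would try citing first and use the direct check only as backup.
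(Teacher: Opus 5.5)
Your proposal is correct and follows the paper's proof: (a) comes from the factorization \eqref{eq:liftformula} into the padding and $q$ expansions, each of which (by the cited results) maps signotopes to signotopes and preserves the higher Bruhat order, and (b) from the observation that the two lifts differ at $X$ exactly when $|X\cap S|\ge3$ and $\topp X=T$. Two small remarks on your optional direct checks: for $Z=\{p_1<p_2<\cdots\}$ the entry appended to $v[Z\setminus p_1]$ in $v'[Z]$ is $v'(Z\setminus p_1)=v(Z\setminus\{p_1,p_2\})$, the last entry of $v[Z\setminus p_1]$, not $v(Z\setminus\min Z)$, which is undefined since $v$ has rank $k$; and the set computation of (b) does not by itself show that the intermediate maps in your backup proof of order preservation are signotopes, so citing, as the paper does, is the right choice.
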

\begin{proof}
(a) By~\eqref{eq:liftformula} the lift is the padding followed by $r-3$ expansions, and each of these maps
signotopes to signotopes and preserves the higher Bruhat order. (b) The two lifts differ at $X$ if and only if
$|X\cap S|\ge3$ and $\topp X=T$.
\end{proof}

\begin{lemma}[lifting a flip]\label{lem:liftflip}
Let $s\in\B(S,2)$, $r\ge3$ and $q=r-3$, suppose that $|N|=c\ge q$, and let $T\in\binom S3$ be such that $s^T$ is a
signotope. Then there is a walk in $\B(N\cup S,r-1)$ from $\Lift s$ to $\Lift s^T$ that flips each set $T\cup Y$, with
$Y$ a $q$-subset of $N\cup\{x\in S:x<\min T\}$, exactly once and flips nothing else. So $\dB(\Lift s,\Lift
s^T)=\binom{c+a}q$, where $a=|\{x\in S:x<\min T\}|$.
\end{lemma}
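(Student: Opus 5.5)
By Lemma~\ref{lem:lift}(b), $\Lift s$ and $\Lift s^T$ differ exactly on the family $\mathcal F=\{T\cup Y\}$ with $Y$ ranging over the $q$-subsets of $N\cup\{x\in S:x<\min T\}$. This family has $\binom{c+a}q$ members. So Lemma~\ref{lem:parity} already gives $\dB(\Lift s,\Lift s^T)\ge\binom{c+a}q$, and it remains to exhibit a walk of exactly that length that flips each member of $\mathcal F$ once. I would build it by ordering $\mathcal F$ and checking that every intermediate map is a signotope, i.e.\ that every packet has at most one sign change. This is the rank-$r$ analogue of the packet criterion, proved exactly as its converse direction.

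Write $L=N\cup\{x\in S:x<\min T\}$; every element of $L$ lies below $T$. Identify $\mathcal F$ with $\binom Lq$ via $Y\mapsto T\cup Y$. For $r=3$ there is one flip and nothing to check. For the general case I would flip the sets $T\cup Y$ in colexicographic (or lexicographic) order of $Y$, choosing the direction according to the sign of $s(T)$. The intermediate maps are then $u_{\mathcal I}$: equal to $\Lift s^T$ on $\{T\cup Y:Y\in\mathcal I\}$ and to $\Lift s$ elsewhere, where $\mathcal I$ runs over initial segments of the order.

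The check concerns an $(r+1)$-set $P$. A packet of $P$ changes during the walk only if $P$ contains a member of $\mathcal F$. Then $P=T\cup Y\cup\{z\}$ with $|Y|=q$, $Y\subseteq L$, and either $z\in L$ or $z\notin L$.
\begin{itemize}
\item If $z\notin L$ and $z\notin T$, then $z>\min T$, and only one $r$-subset of $P$ contains $T$ and has all other elements below $\min T$. So only one coordinate of the packet changes. The packet then takes only the values of $(\Lift s)[P]$ and $(\Lift s^T)[P]$, both of which have at most one sign change.
\item If $z\in L$, then $P=T\cup Y'$ with $|Y'|=q+1$, and the $r$-subsets of $P$ lying in $\mathcal F$ are $T\cup(Y'\setminus y)$ for $y\in Y'$. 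These are exactly the last $q+1$ coordinates of the packet (those omitting an element below $T$). The first three coordinates omit an element of $T$ and are never flipped.
\end{itemize}
In the second case the walk flips the last $q+1$ coordinates one at a time, in an order determined by the chosen order on $\binom Lq$. Since colex order on the $q$-subsets of $Y'$ corresponds to removing elements of $Y'$ in a monotone order, these coordinates are flipped from one end of the block inward, consecutively. The packet of $\Lift s$ on these $q+1$ coordinates is constant: it equals $s(T)$ if $|P\cap S|\ge3$, which always holds. The first three coordinates take values $s(\topp(P\setminus t))$, which do not change.

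So each intermediate packet has the form (fixed prefix of length $3$, then a block of the old constant value, then a block of the new value), or with the two blocks in reversed positions. It has at most one sign change if the direction is chosen so that the flipped block is adjacent to the part of the prefix that already agrees with the new value. The main obstacle is getting this orientation right. The first and last packets have at most one sign change by Lemma~\ref{lem:lift}(a), which constrains the prefix relative to $s(T)$ and $-s(T)$. From that I must determine whether to flip in increasing or decreasing order, and confirm that one fixed order on $\binom Lq$ works simultaneously for all $Y'$. This holds because colex order restricts to a consistent monotone order on every $\binom{Y'}q$.
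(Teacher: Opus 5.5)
Your reduction to packets is correct, and so is your local analysis. For $P=T\cup Y'$ with $Y'\in\binom L{q+1}$ the three prefix entries are equal, to a sign $\pi(Y')$. The block must then be flipped starting from its left end (remove $\max Y'$ first, as colex and lex both do) when $\pi(Y')=-s(T)$, and from its right end (reverse colex) when $\pi(Y')=s(T)$.

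The gap is the claim that one fixed order, chosen from $s(T)$ alone, works for every $Y'$. With $T=t_1t_2t_3$, the prefix is $\pi(Y')=+$ when $Y'\subseteq N$, because then $|(P\setminus t)\cap S|=2$. Your formula $s(\topp(P\setminus t))$ hides this case. When $y=\max Y'\in S$, the prefix is $\pi(Y')=s(y\,t_1t_2)$, and this sign can differ for different $y<\min T$.

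Concretely, take $S=[5]$ and $s$ equal to $+$ except $s(234)=s(235)=s(245)=-$, with $T=345$; both $s$ and $s^T$ are signotopes. Take $r=4$ and $N=\{\ell\}$. The packets of $\ell1345$, $\ell2345$ and $12345$ go from $(+,+,+,+,+)$, $(-,-,-,+,+)$, $(-,-,-,+,+)$ to $(+,+,+,-,-)$, $(-,-,-,-,-)$, $(-,-,-,-,-)$. They force $1345$ before $\ell345$, and $\ell345$ before $2345$. So the only admissible order of the $Y$'s is $\{1\},\{\ell\},\{2\}$. This is neither colex nor its reverse (lex and colex coincide on singletons), so the proposed walk passes through a map that is not a signotope.

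To repair the direct construction, the direction for $Y'$ must follow $f(\max Y')$, where $f(y)=+$ for $y\in N$ and $f(y)=s(y\,t_1t_2)$ for $y\in S$. One order that works is the following.
\begin{enumerate}
\item Build a total order $\prec$ on $L$ by inserting the elements in increasing order: put $y$ at the end if $f(y)=-s(T)$ and at the front if $f(y)=s(T)$.
\item Sort the $q$-sets $Y$ by $\max Y$ under $\prec$.
\item Among sets with the same maximum $m$, use the colex order of $Y\setminus m$ if $f(m)=-s(T)$, and the reverse colex order otherwise.
\end{enumerate}
The paper avoids constructing an order at all. Since $s$ and $s^T$ differ in one triple they are comparable in the higher Bruhat order. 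The lift preserves that order (Lemma~\ref{lem:lift}(a), via padding and Rambau's expansion). A monotone walk between comparable signotopes flips each set of their difference exactly once.
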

\begin{proof}
The signotopes $s$ and $s^T$ differ only at $T$, so they are comparable in the higher Bruhat order. By
Lemma~\ref{lem:lift}(a), $\Lift s$ and $\Lift s^T$ are comparable, say $u\le v$. A walk from $u$ to $v$ as
in the definition of the order changes each set at most once, from $+$ to $-$, so it flips each set of $D(u,v)$
exactly once and flips nothing else. By Lemma~\ref{lem:lift}(b), these are the $\binom{c+a}q$ sets $T\cup Y$,
and the distance follows from Lemma~\ref{lem:parity}.
\end{proof}

The lower bound comes from the fact that the lift contains many copies of the original chirotope as minors, and a
single mutation changes at most one of them.

\begin{lemma}[copies]\label{lem:copies}
Let $r\ge4$, $q=r-3$ and $|N|=c\ge q$. For a $q$-set $Y=\{y_1<\dots<y_q\}\subseteq N$ and a chirotope $\chi$ of rank
$r$ on $N\cup S\cup\{\infty\}$, let $\chi_Y$ be the rank-$3$ chirotope on $S\cup\{\infty\}$ obtained by contracting
$y_1,\dots,y_q$ in turn and deleting $N\setminus Y$.
\begin{enumerate}
\item[(a)] $(\Pos_r\Lift s)_Y=\Pos_3s$ for every $s\in\B(S,2)$.
\item[(b)] If $\chi_0,\dots,\chi_\ell$ is a walk of chirotopes, then $\bigl((\chi_i)_Y\bigr)_i$, with consecutive
repetitions removed, is a walk of chirotopes with one step for each step that flips a basis $B$ with $B\cap N=Y$.
\end{enumerate}
In particular, for $s,t\in\B(S,2)$, every walk of chirotopes from $\Pos_r\Lift s$ to $\Pos_r\Lift t$ has, for each
of the $\binom cq$ sets $Y$, at least $\dt(\Pos_3s,\Pos_3t)$ steps that flip a basis $B$ with $B\cap N=Y$.
\end{lemma}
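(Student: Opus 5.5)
For (a) I will compute $(\Pos_r\Lift s)_Y$ directly. Let $\chi=\Pos_r\Lift s$. Contracting $y_1,\dots,y_q$ in turn gives, on an ordered tuple $(b_1,b_2,b_3)$ of distinct elements of $S\cup\{\infty\}$, the value $\chi(y_1,\dots,y_q,b_1,b_2,b_3)$; deleting $N\setminus Y$ then restricts to $S\cup\{\infty\}$. For a basis $b_1<b_2<b_3$ of $S\cup\{\infty\}$, the tuple $(y_1,\dots,y_q,b_1,b_2,b_3)$ is already increasing, because $N<S<\infty$. So no sign from sorting appears, and the value is $\chi(Y\cup\{b_1,b_2,b_3\})$.

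I then split into two cases.
\begin{itemize}
\item If $\infty\in\{b_1,b_2,b_3\}$, the value is $+$ by the definition of the positive fibre. This agrees with $\Pos_3s$.
\item If $\{b_1,b_2,b_3\}\subseteq S$, put $X=Y\cup\{b_1,b_2,b_3\}$. Then $X\cap S=\{b_1,b_2,b_3\}$ has three elements and $\topp X=\{b_1,b_2,b_3\}$, since $N<S$. So $\Lift s(X)=s(b_1b_2b_3)$, again in agreement with $\Pos_3 s$.
\end{itemize}
This handles (a). The only care needed is the convention that iterated contraction places the contracted elements first in the tuple, in the order $y_1,\dots,y_q$. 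I will check this by induction on the number of contractions, using $(\chi/e)(Y')=\chi(e,y'_1,\dots)$. After contracting $y_1$, the next element $y_2$ is the first entry of the $(r-1)$-tuples, so the nesting gives $\chi(y_1,y_2,\dots)$.

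For (b) I will apply Lemma~\ref{lem:split} repeatedly. Contracting $y_1$ keeps exactly the steps that flip a basis containing $y_1$. Contracting $y_2$ next keeps those whose basis also contains $y_2$, and so on through $y_q$. Deleting an element of $N\setminus Y$, or restricting to $S\cup\{\infty\}$ in a single step, keeps exactly the steps whose basis avoids it. The result is a walk of chirotopes with one step for each flipped basis $B$ that contains $Y$ and misses $N\setminus Y$, that is, with $B\cap N=Y$. Each intermediate object is a uniform chirotope of the right rank by Lemma~\ref{lem:minors}, and ranks stay at least $3$ as long as we contract at most $r-3$ elements. The deletions are legitimate because the ground set always has more elements than the rank.

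For the final claim, take a walk from $\Pos_r\Lift s$ to $\Pos_r\Lift t$ and fix one of the $\binom cq$ sets $Y$. By (b) and (a), the induced walk runs from $\Pos_3 s$ to $\Pos_3 t$. So its length, which is the number of steps flipping some $B$ with $B\cap N=Y$, is at least $\dt(\Pos_3 s,\Pos_3 t)$.

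I expect the main obstacle to be the sign bookkeeping in (a): checking that the iterated contraction introduces no extra sign. With $N$ placed below $S$ and $\infty$ on top, this should be immediate.
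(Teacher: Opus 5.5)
Your proposal is correct and follows essentially the same route as the paper. For (a), the iterated contraction evaluates $\chi$ on the increasing tuple $(y_1,\dots,y_q,z_1,z_2,z_3)$, and you split on whether $\infty\in Z$; for (b), you apply Lemma~\ref{lem:split} once per element of $N$; the final claim combines the two. Your induction for the iterated-contraction formula only makes explicit what the paper states in passing.
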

\begin{proof}
(a) We have $\chi_Y(Z)=\chi(y_1,\dots,y_q,z_1,z_2,z_3)$ for $Z=\{z_1<z_2<z_3\}\subseteq S\cup\{\infty\}$ (contract
$y_1$, then $y_2$, and so on), and this tuple is increasing because $N<S<\infty$. For $\chi=\Pos_r\Lift s$ the value
is $+$ if $\infty\in Z$, and $s(\topp(Y\cup Z))=s(Z)$ otherwise.

(b) Apply Lemma~\ref{lem:split} $c$ times: contracting $y_j$ keeps exactly the steps that flip a basis containing
$y_j$, and deleting $x\in N\setminus Y$ keeps exactly the steps that flip a basis avoiding $x$.

The last statement follows from (a) and (b).
\end{proof}

The next proposition is the heart of this section. In it, $h$ and $H$ are the Hamming distances before and after
the lift, and $\lambda$ and $\mu$ are the least and the largest number of flips into which the lift turns one flip of
$s$: by Lemma~\ref{lem:liftflip}, a flip of $T$ becomes $\binom{c+a(T)}q$ flips, where $0\le a(T)\le m-3$. The
Hamming distance $H$ of the lifts is paid exactly, and only the excess, the number of extra flips, is multiplied, by a
factor between $\lambda$ and $\mu$. This is why a ratio $\mu/\lambda$ close to $1$ suffices in Theorem~\ref{thm:allranks}
to separate $\VC(G)\le k$ from $\VC(G)\ge k+1$.

\begin{proposition}[excess amplification]\label{prop:excess}
Let $S$ be a totally ordered set with $|S|=m$, let $N$ be a set of $c$ new elements placed below $S$, let $r\ge4$
and $q=r-3$ with $c\ge q$, and let $s,t\in\B(S,2)$. Put
\[
h=|D(s,t)|,\qquad H=|D(\Lift s,\Lift t)|,\qquad \lambda=\binom cq,\qquad \mu=\binom{c+m-3}q .
\]
Then, with $\Pos_3$ on $S\cup\{\infty\}$ and $\Pos_r$ on $N\cup S\cup\{\infty\}$,
\[
\lambda\bigl(\dt(\Pos_3s,\Pos_3t)-h\bigr)\;\le\;\dt(\Pos_r\Lift s,\Pos_r\Lift t)-H\;\le\;\dB(\Lift s,\Lift t)-H\;\le\;
\mu\bigl(\dB(s,t)-h\bigr).
\]
\end{proposition}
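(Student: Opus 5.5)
The plan is to prove the three inequalities separately, from right to left. The middle one is immediate: by Lemma~\ref{lem:sheet}(a), applied to the rank-$r$ signotopes $\Lift s,\Lift t$ (Lemma~\ref{lem:lift}(a)), we have $\dt(\Pos_r\Lift s,\Pos_r\Lift t)\le\dB(\Lift s,\Lift t)$, and $H$ is subtracted on both sides.

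\emph{Right inequality.} Take a shortest walk $s=s_0,\dots,s_\ell=t$ in $\B(S,2)$ with $\ell=\dB(s,t)$. Replace each step flipping $T_i$ by the walk of Lemma~\ref{lem:liftflip}. This yields a walk in $\B(N\cup S,r-1)$ from $\Lift s$ to $\Lift t$ of length $\sum_i\binom{c+a(T_i)}q$, where each summand lies between $\lambda$ and $\mu$. I then split the steps by triple. For each triple $T$ let $c_T$ be the number of its flips. Then $c_T$ is odd exactly when $T\in D(s,t)$ (Lemma~\ref{lem:parity}), and $\sum_T c_T=\ell$. By Lemma~\ref{lem:lift}(b) the lifted sets of distinct $T$ are distinct, since $\topp(T\cup Y)=T$ when $Y\subseteq N\cup\{x<\min T\}$. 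Consequently
\[
D(\Lift s,\Lift t)=\bigcup_{T\in D(s,t)}\{T\cup Y\},
\]
a disjoint union, so $H=\sum_{T\in D(s,t)}\binom{c+a(T)}q$. The walk length minus $H$ is then
\[
\sum_T\bigl(c_T-[T\in D]\bigr)\binom{c+a(T)}q\le\mu\sum_T\bigl(c_T-[T\in D]\bigr)=\mu\bigl(\dB(s,t)-h\bigr),
\]
using that each $c_T-[T\in D]\ge0$.

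\emph{Left inequality.} This is the main step. Fix a walk of chirotopes from $\Pos_r\Lift s$ to $\Pos_r\Lift t$ and let $c_B$ count the flips of each basis $B$. I partition the bases by $B\cap N$. For each $q$-set $Y\subseteq N$, Lemma~\ref{lem:copies} gives at least $\dt(\Pos_3s,\Pos_3t)$ flips of bases with $B\cap N=Y$. I expect the key point to be comparing this with the part of $H$ that lives in the same class. The bases of $D(\Lift s,\Lift t)$ with $B\cap N=Y$ are exactly the sets $T\cup Y$ with $T\in D(s,t)$ (the ones with $|B\cap N|=q$ and three elements of $S$), so this class contains exactly $h$ bases of $D$. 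Hence each class $Y$ contributes at least $\dt(\Pos_3s,\Pos_3t)-h$ to $\sum_B c_B-H$.

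It remains to handle the other classes, namely bases with $|B\cap N|\ne q$ (including those through $\infty$, which lie outside $D$ since both chirotopes are $+$ there). On every basis, $c_B-[B\in D]\ge0$ by parity (Lemma~\ref{lem:parity}), so these classes contribute nonnegatively. Summing over the $\lambda=\binom cq$ classes gives
\[
\text{length}-H\ge\lambda\bigl(\dt(\Pos_3s,\Pos_3t)-h\bigr).
\]
The point to verify carefully is that every basis of $D$ with $|B\cap N|=q$ is counted in exactly one class, and that the per-class counting is correct: within a class, flips minus members of $D$ is at least $\dt-h$ because the projected walk length equals the number of flips in that class.
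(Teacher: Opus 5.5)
Your proposal is correct and is essentially the paper's proof. The middle inequality comes from Lemma~\ref{lem:sheet}(a). The upper bound comes from lifting each flip via Lemma~\ref{lem:liftflip} and using $H=\sum_{T\in D(s,t)}\binom{c+a(T)}q$, and the lower bound from splitting the bases by $B\cap N$ and applying Lemma~\ref{lem:copies} to each of the $\lambda$ classes. The only difference is bookkeeping: you subtract the $h$ members of $D$ class by class and use $c_B-[B\in D]\ge0$ elsewhere, while the paper sums the classes first and counts the $H-\lambda h$ remaining members of $D$ separately, which is the same computation.
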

\begin{proof}
The middle inequality is Lemma~\ref{lem:sheet}(a), and $\Lift s,\Lift t$ are signotopes by
Lemma~\ref{lem:lift}(a).

\emph{Upper bound.} Take a walk of length $\dB(s,t)$ from $s$ to $t$. For a triple $T\in\binom S3$ let $n_T$ be
the number of its steps flipping $T$, and let $\mu_T=\binom{c+a(T)}q$ with $a(T)=|\{x\in S:x<\min T\}|$.
Replacing each step flipping $T$ by the walk of Lemma~\ref{lem:liftflip}, of length $\mu_T$, gives a walk from
$\Lift s$ to $\Lift t$ of length $\sum_{T\in\binom S3}\mu_Tn_T$. Since $(\Lift s)(X)=\hat s(\topp X)$, the lifts
$\Lift s$ and $\Lift t$ differ exactly on the sets $T\cup Y$ with $T\in D(s,t)$ and $Y$ a $q$-subset of
$N\cup\{x\in S:x<\min T\}$ (as in Lemma~\ref{lem:lift}(b)), and different triples $T$ give different sets, since
$T=\topp(T\cup Y)$. Hence $H=\sum_{T\in D(s,t)}\mu_T$, and
\[
\dB(\Lift s,\Lift t)\;\le\;\sum_{T\in\binom S3}\mu_Tn_T\;=\;H+\sum_{T\in\binom S3}\mu_T\bigl(n_T-[T\in D(s,t)]\bigr)
\;\le\;H+\mu\bigl(\dB(s,t)-h\bigr),
\]
since every $n_T-[T\in D(s,t)]$ is nonnegative (Lemma~\ref{lem:parity}), these numbers sum to $\dB(s,t)-h$, and
$\mu_T\le\mu$.

\emph{Lower bound.} Take a walk of chirotopes from $\Pos_r\Lift s$ to $\Pos_r\Lift t$, and for an arbitrary basis
$B\in\binom{N\cup S\cup\{\infty\}}r$ denote by $n_B$ the number of its steps flipping $B$. The walk has length
$\sum_Bn_B$, and we split this sum according to whether $|B\cap N|=q$.

By Lemma~\ref{lem:copies}, $\sum_{B\cap N=Y}n_B\ge\dt(\Pos_3s,\Pos_3t)$ for each of the $\lambda$ sets
$Y\in\binom Nq$. Every basis $B$ with $|B\cap N|=q$ appears in exactly one of these sums, namely the one for
$Y=B\cap N$, so
\[
\sum_{B:\,|B\cap N|=q}n_B\;\ge\;\lambda\,\dt(\Pos_3s,\Pos_3t).
\]
For the other bases, $n_B\ge1$ whenever $B\in D(\Lift s,\Lift t)$ (Lemma~\ref{lem:parity}). By the description of
$D(\Lift s,\Lift t)$ in the upper bound, its elements with $|B\cap N|=q$ are exactly the $\lambda h$ sets $T\cup Y$
with $T\in D(s,t)$ and $Y\in\binom Nq$, so $H-\lambda h$ of its elements have $|B\cap N|\ne q$, and
\[
\sum_{B:\,|B\cap N|\ne q}n_B\;\ge\;H-\lambda h.
\]
Adding the two inequalities, the walk has length at least $\lambda\,\dt(\Pos_3s,\Pos_3t)+H-\lambda h$.
\end{proof}

\begin{theorem}\label{thm:allranks}
For every fixed $r\ge4$, the problem $\DIST_r$ of deciding whether two uniform oriented matroids of rank $r$ have
distance at most $K$ in the mutation graph is NP-hard, and it is NP-complete when $K$ is given in unary. The same
holds in every fixed corank $r\ge4$.
\end{theorem}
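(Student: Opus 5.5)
Apply Proposition~\ref{prop:excess} to $s=s_G$, $t=v_G$ with $S=[m]$, $m=3n+|E|$, $q=r-3$, and a number $c$ of new elements polynomial in $m$, to be fixed below. For every graph $G$ the proof of Theorem~\ref{thm:rank3} gives $\dt(\Pos_3 s_G,\Pos_3 v_G)=\dB(s_G,v_G)=h+2\VC(G)$ with $h=|D|$. The proposition then traps the excess of the lifted chirotopes:
\[
2\lambda\,\VC(G)\;\le\;\dt(\Pos_r\Lift s_G,\Pos_r\Lift v_G)-H\;\le\;2\mu\,\VC(G),
\]
where $H$ is computable from $s_G,v_G$. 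Given a \textsc{Vertex Cover} instance $(G,k)$, I would put $K=H+2\mu k$. If $\VC(G)\le k$, then $\dt\le K$. If $\VC(G)\ge k+1$, then $\dt\ge H+2\lambda(k+1)$, and this exceeds $K$ as soon as $\lambda(k+1)>\mu k$, that is, $\mu/\lambda<1+1/k$. Since $k\le n\le m$, it suffices that $\mu/\lambda<1+1/m$.

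The key estimate is therefore on $\mu/\lambda=\binom{c+m-3}q/\binom cq=\prod_{j=0}^{q-1}\frac{c-j+m-3}{c-j}\le\bigl(1+\frac{m-3}{c-q}\bigr)^q$. Choosing $c=q+2qm^2$ gives $(1+1/(2qm))^q\le e^{1/(2m)}<1+1/m$ for $m\ge1$. Since $q$ is fixed, $c=O(m^2)$, and the lifted chirotope has $\binom{c+m+1}{r}=\mathrm{poly}(m)$ entries. Each entry $\Lift s(X)=s(\topp X)$ is computed directly, so the reduction is polynomial; $K$ is polynomial, hence writable even in unary.

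It remains to pass from $\dt$ to the mutation distance $d$ via Observation~\ref{obs:cover}. This needs $\dt(\chi,-\psi)\ge\binom{c+m+1}r-H>K$. Here $H=\sum_{T\in D}\binom{c+a(T)}q\le|D|\binom{c+m-3}q$ is at most a vanishing fraction of $\binom{c+m+1}{r}$. Indeed, $|D|=O(m^2)$ while there are $\Theta(m^3)$ triples, and each triple $T$ is weighted by at most $\binom{c+m-3}{q}$ sets. More carefully, the sets $T\cup Y$ with $|Y\cap N|=q$ already number $\binom cq\binom m3$, and $H+2\mu k\le\mu(|D|+2m)=O(\mu m^2)$ while $\binom cq\binom m3=\Omega(\lambda m^3)$. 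With $\mu/\lambda<2$, this inequality holds for $m$ large; small instances are decided directly. This checking of constants is the step I expect to need the most care, together with confirming that nothing in Proposition~\ref{prop:excess} required $n\ge3$ beyond what was already shown. This gives NP-hardness.

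For membership with $K$ in unary: a certificate is a walk of chirotopes of length at most $K$ from $\chi_1$ to $\chi_2$ or to $-\chi_2$, which has size polynomial in the input because $K$ is written in unary. Each step is verified by checking (GP) over all $(r-2)$-sets and $4$-tuples, in time $O(n^{r+2})$, polynomial for fixed $r$. No diameter bound is needed, which is why binary $K$ is excluded. For corank $r\ge4$, apply duality $\chi\mapsto\chi^*$, which is polynomial and induces an isomorphism $\G_{\Omega,r}\cong\G_{\Omega,n-r}$; this transfers both hardness and unary membership from rank $r$ to corank $r$.
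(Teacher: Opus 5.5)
Your proposal is correct and follows the paper's proof essentially step for step. It uses the same instances $\Pos_r\Lift s_G,\Pos_r\Lift v_G$, the same $K=H+2\mu k$ with the sandwich from Proposition~\ref{prop:excess}, a polynomial $c$ making $\mu/\lambda<1+1/k$ (the paper takes $c=qm(k+1)+q$ and Bernoulli's inequality), Observation~\ref{obs:cover} for $-\psi$, the unary certificate, and duality.

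The only real difference is the last estimate. There you need $\binom{c+m+1}{r}>2H+2\mu k$ (note the $2H$), and you obtain it only for large $m$ from the $\lambda\binom m3$ bases $T\cup Y$. The paper's identity $\binom Mr=\mu\binom M3/\binom r3$ with $M=c+m\ge(r-1)m$ gives $\binom Mr>4\mu\binom m3\ge2H+2\mu k$ for every instance, so no small cases have to be set aside.
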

\begin{proof}
Let $(G,k)$ be an instance of \textup{\textsc{Vertex Cover}} with $1\le k\le|V(G)|$ and $|V(G)|\ge2$ (other
instances are decided directly), let $m=3|V(G)|+|E(G)|$, $s=s_G$, $t=v_G$, $q=r-3$, and
\[
c=qm(k+1)+q,\qquad K=H+2\mu k,
\]
with $h,H,\lambda,\mu$ as in Proposition~\ref{prop:excess}, $S=[m]$ and $|N|=c$. By the proof of
Theorem~\ref{thm:rank3},
$\dB(s,t)-h=\dt(\Pos_3s,\Pos_3t)-h=2\VC(G)$, so Proposition~\ref{prop:excess} gives
\[
H+2\lambda\,\VC(G)\;\le\;\dt(\Pos_r\Lift s,\Pos_r\Lift t)\;\le\;\dB(\Lift s,\Lift t)\;\le\;H+2\mu\,\VC(G).
\]
If $\VC(G)\le k$, the distance is at most $K$. If $\VC(G)\ge k+1$, it is at least $H+2\lambda(k+1)>K$, because
\[
\frac\mu\lambda=\prod_{j=0}^{q-1}\Bigl(1+\frac{m-3}{c-j}\Bigr)\le\Bigl(1+\frac{m-3}{c-q+1}\Bigr)^{q}
\le\frac1{1-q\,\frac{m-3}{c-q+1}}<\frac1{1-\frac1{k+1}}=\frac{k+1}k
\]
by Bernoulli's inequality $(1+x)^{-q}\ge1-qx$ and $c-q+1>qm(k+1)$. So $\dt(\Pos_r\Lift s,\Pos_r\Lift t)\le K$ if
and only if $\VC(G)\le k$.

As in rank $3$, the same holds for the mutation distance of $[\Pos_r\Lift s]$ and $[\Pos_r\Lift t]$, because
$-\Pos_r\Lift t$ is far from $\Pos_r\Lift s$: it differs from $\Pos_r\Lift s$ on all but $H$ of the
$\binom{c+m+1}r$ bases, and since $c$ is large compared with $m$ and $k$, this number exceeds $K$. Indeed, let
$M=c+m\ge(r-1)m$. Then $\binom Mr=\mu\binom M3/\binom r3$ and $\binom M3\ge(r-1)^3\binom m3$, and
$(r-1)^3/\binom r3=6(r-1)^2/(r(r-2))>4$. Hence $\binom Mr>4\mu\binom m3\ge2H+2\mu k$, as $H\le\mu\binom m3$ and
$k\le\binom m3$. So Observation~\ref{obs:cover} gives the claim.

\emph{Size.} The ground set has $c+m+1=O(rm^2)$ elements, and an entry of $\Pos_r\Lift s$ is $+$ if the basis
contains $\infty$ or one of its three largest elements lies in $N$, and otherwise the value of $s_G$ (or $v_G$)
at its three largest elements. So the $\binom{c+m+1}r$ entries are computed in polynomial time for fixed $r$,
and $K$ is polynomial in $m$; for $r=4$ the chirotopes have $O(m^2)$ elements and are given by $O(m^8)$ signs.
If $K$ is given in unary, the problem is in NP: as in the proof of Theorem~\ref{thm:np3}, a walk of chirotopes of
length at most $K$ from $\chi_1$ to $\chi_2$ or to $-\chi_2$ is a certificate of polynomial size, and every step can
be checked in polynomial time. The statement for corank $r$ follows by duality (Section~\ref{sec:chirotopes}).
\end{proof}

\begin{corollary}[higher Bruhat orders]\label{cor:hbo}
For every fixed $r\ge3$, deciding flip distance in the higher Bruhat order $\B(n,r-1)$, that is, the problem
$\FLIP_{\B(n,r-1)}$ of Section~\ref{sec:signotopes}, is NP-complete, with $K$ given in binary.
\end{corollary}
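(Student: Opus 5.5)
For $r=3$ the statement is Theorem~\ref{thm:braidNP}, since $\FLIP_{\B(n,2)}=\FLIP_{\B}$. For fixed $r\ge4$ I would reuse the instances of Theorem~\ref{thm:allranks}, working in $\B(N\cup S,r-1)$ instead of the mutation graph.

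\emph{Hardness.} Given $(G,k)$, take $s=s_G$, $t=v_G$, $q=r-3$, $c=qm(k+1)+q$ and $K=H+2\mu k$. The output is the pair of signotopes $\Lift s,\Lift t$ on the $c+m$ elements of $N\cup S$. Lemma~\ref{lem:lift}(a) shows they are signotopes, and they are computable in polynomial time for fixed $r$ via $(\Lift s)(X)=s(\topp X)$ or $+$. The two-sided bound
\[
H+2\lambda\VC(G)\le\dt(\Pos_r\Lift s,\Pos_r\Lift t)\le\dB(\Lift s,\Lift t)\le H+2\mu\VC(G)
\]
from the proof of Theorem~\ref{thm:allranks} is already a statement about $\dB$. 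The outer bounds come from Proposition~\ref{prop:excess} together with $\dB(s,t)-h=\dt(\Pos_3s,\Pos_3t)-h=2\VC(G)$, which holds for every $G$ by the first part of the proof of Theorem~\ref{thm:rank3}. So the estimate $\mu/\lambda<(k+1)/k$ gives $\dB(\Lift s,\Lift t)\le K$ if and only if $\VC(G)\le k$. No sign-negation issue arises here, because $\B(n,r-1)$ has no identification $\chi\sim-\chi$. The numbers $H$, $\mu$ and $K$ are polynomial in $m$, so $K$ has polynomial size in binary, even in unary.

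\emph{Membership in NP with $K$ in binary.} This is the only new point. By~\eqref{eq:diamB} the diameter of $\B(n,r-1)$ is at most $\binom nr$, which is polynomial for fixed $r$. So if $K\ge\binom nr$ the answer is yes. Otherwise a certificate is a walk of at most $K<\binom nr$ sign vectors, each differing from the previous one in one $r$-set. Checking that each term is a signotope means verifying that every packet of every $(r+1)$-set has at most one sign change, which takes $O(n^{r+1})$ time per term.

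The main obstacle would be the lower bound passing through chirotopes, but it is already contained in Proposition~\ref{prop:excess}. The only care needed is that the middle inequality $\dt\le\dB$ (Lemma~\ref{lem:sheet}(a)) runs in the right direction: it lets the chirotope lower bound transfer to $\dB$. This completes the plan.
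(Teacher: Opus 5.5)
Your proposal is correct and matches the paper's proof. You reuse the lifted instances $\Lift s_G,\Lift v_G$ with $K=H+2\mu k$ from Theorem~\ref{thm:allranks}, transfer the chirotope lower bound of Proposition~\ref{prop:excess} to $\dB$ via Lemma~\ref{lem:sheet}(a), and get membership from the diameter bound~\eqref{eq:diamB}, with a walk of signotopes checked packet by packet as the certificate. The paper also states the side conditions on $(G,k)$ from Theorem~\ref{thm:allranks}, such as $k\ge1$ for the $\mu/\lambda$ estimate, and relabels $N\cup[m]$ order-preservingly by $[c+m]$; both are routine.
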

\begin{proof}
For $r=3$ this is Theorem~\ref{thm:braidNP}. Let $r\ge4$, and take the instances of the proof of
Theorem~\ref{thm:allranks}, with the same side conditions on $(G,k)$: the signotopes $\Lift s_G$ and $\Lift v_G$ of
rank $r$ on $N\cup[m]$, relabelled by $[c+m]$ in increasing order, and $K=H+2\mu k$. They are computed in polynomial
time. Both bounds of Proposition~\ref{prop:excess} apply to $\dB(\Lift s_G,\Lift v_G)$: the upper bound is proved by
a walk of signotopes, and the lower bound holds for all walks of chirotopes, which include the walks of signotopes
(Lemma~\ref{lem:sheet}(a)). Since the rank-$3$ excess is $2\VC(G)$ for both distances (Theorem~\ref{thm:wiring} and the proof of
Theorem~\ref{thm:rank3}), this gives
\[
H+2\lambda\,\VC(G)\;\le\;\dB(\Lift s_G,\Lift v_G)\;\le\;H+2\mu\,\VC(G),
\]
and the estimate of $\mu/\lambda$ in the proof of Theorem~\ref{thm:allranks} shows that
$\dB(\Lift s_G,\Lift v_G)\le K$ if and only if $\VC(G)\le k$. Membership: by~\eqref{eq:diamB} the answer is yes if $K\ge\binom nr$, where $n$ is the size of the
ground set. Otherwise a certificate is a sequence of at most $K$ signotopes, each obtained from the previous one by
negating one $r$-set; that each term is a signotope is checked on its $\binom n{r+1}$ packets, in polynomial time
for fixed $r$.
\end{proof}

\phantomsection\addcontentsline{toc}{part}{Concluding remarks}
\part*{Concluding remarks}

\section{Remarks and open problems}\label{sec:remarks}

\begin{enumerate}
\item \emph{Pairs without excess.} Can one decide in polynomial time
whether a pair of wiring diagrams has excess $0$, i.e.\ whether $\dbr=|D|$? By the packet criterion
(Lemma~\ref{lem:packet}) this is an
ordering problem: choose one of the two arc orders for every $4$-set $P$ with $|D\cap\binom P3|=4$ so that the union
of all arc orders (unique when $|D\cap\binom P3|\in\{2,3\}$) is acyclic. It has the flavour of
\textup{\textsc{Betweenness}}, which is NP-complete~\cite{Opatrny}, but the instances are very structured.
\item\label{it:isometry} \emph{Isometry of the positive fibre.} Is $\dt(\Pos_3s,\Pos_3v)=\dB(s,v)$ for all
$s,v\in\B(m,2)$? In other words, can a walk of chirotopes between two chirotopes positive at $\infty$ be shortened by
leaving the chirotopes positive at $\infty$? We prove the equality for our instances (proof of
Theorem~\ref{thm:rank3}).
\item \emph{Binary $K$ in rank $\ge4$.} Is $\DIST_r$ in NP for $r\ge4$ when $K$ is given in binary? For the higher
Bruhat orders this holds by the diameter bound~\eqref{eq:diamB} (Corollary~\ref{cor:hbo}); for mutation graphs it would
follow from a polynomial bound on the diameters of the components of $\G_{n,r}$, as in
Proposition~\ref{prop:diam} for rank $3$.
\item \emph{Reachability.} For $r\ge7$ and $n\ge r+17$ the mutation graphs $\G_{n,r}$ are disconnected~\cite{GuKnauer}. What is the
complexity of deciding whether two uniform oriented matroids are connected by mutations?
\end{enumerate}

\section*{Computations}\phantomsection\label{sec:computations}\addcontentsline{toc}{section}{Computations}

No proof in this paper depends on a computer. The finite statements used in the proofs were additionally checked
by computer, among them Proposition~\ref{prop:gadget} (also by breadth-first search in $\B(7,2)$), the construction
and the sequence of Proposition~\ref{prop:upper} for $27$ graphs, and the lemmas of Section~\ref{sec:higher} in small
cases. A few side remarks, which are not used in any proof, were found by computer and are not proved by hand: the
facts about six wires and about the $60$ cells at the end of Section~\ref{sec:gadget}, and the walk of length $26$ for $G=K_2$ after
Theorem~\ref{thm:rank3}. The programs are \texttt{check.py} (sign vectors, signotopes and chirotopes) and \texttt{check\_words.py} (the
same at the level of words). They are written from the
definitions of this paper and need only the Python standard library.

In addition, the main results are formally verified in Lean~4: the distance formulas of Theorems~\ref{thm:wiring}
and~\ref{thm:rank3} (Theorem~A and the first statement of Theorem~C), the excess amplification of
Proposition~\ref{prop:excess}, and the equivalences behind Theorem~\ref{thm:allranks} and Corollary~\ref{cor:hbo},
that is, the distance is at most $K$ if and only if $\VC(G)\le k$, without the complexity statements themselves.
The formalization uses only core Lean and contains no unproved steps.

The programs, with their outputs and a description of each check, and the Lean formalization are available at
\url{https://github.com/tilenmarc/braid-bruhat-mutation-distance}.

\section*{Use of AI tools}\phantomsection\addcontentsline{toc}{section}{Use of AI tools}

The author used two large language model agents, Claude Fable~5.1 and Claude Opus~5.5 (Anthropic), in
the development of this work. They assisted with drafting and polishing proofs and played an important role in the
discovery of the gadget in Section~\ref{sec:gadget}, through computer search and analysis of the resulting
candidates. They were also used to explore and evaluate several approaches for extending the construction to oriented
matroids of higher rank; this investigation led to the lift developed in Section~\ref{sec:higher}. In addition,
ChatGPT~6.0 (OpenAI) was used to review drafts of the manuscript. The author
retained full responsibility for the mathematical development, selection of arguments, and final presentation of the
work, as well as for all the results and claims in the paper.

\section*{Acknowledgements}\phantomsection\addcontentsline{toc}{section}{Acknowledgements}

The author was supported by the Slovenian Research and Innovation Agency (ARIS), research program P1-0297.

\phantomsection

\end{document}

%% file: fig_w_a.tex
\begin{tikzpicture}[x=0.42cm,y=0.42cm]
\fill[black!7] (0,0.55) rectangle (3,4+0.45); \node[font=\scriptsize] at (1.5,4+0.9) {$\sigma_1\sigma_2\sigma_1$};
\draw[blue!70!black,thick] (0,4) -- (1,3) -- (2,2) -- (3,2) -- (4,1) -- (5,1) -- (6,1);
\node[left,font=\scriptsize] at (0,4) {1};
\node[right,font=\scriptsize] at (6,1) {1};
\draw[red!70!black,thick] (0,3) -- (1,4) -- (2,4) -- (3,3) -- (4,3) -- (5,2) -- (6,2);
\node[left,font=\scriptsize] at (0,3) {2};
\node[right,font=\scriptsize] at (6,2) {2};
\draw[green!50!black,thick] (0,2) -- (1,2) -- (2,3) -- (3,4) -- (4,4) -- (5,4) -- (6,3);
\node[left,font=\scriptsize] at (0,2) {3};
\node[right,font=\scriptsize] at (6,3) {3};
\draw[black,thick] (0,1) -- (1,1) -- (2,1) -- (3,1) -- (4,2) -- (5,3) -- (6,4);
\node[left,font=\scriptsize] at (0,1) {4};
\node[right,font=\scriptsize] at (6,4) {4};
\end{tikzpicture}

%% file: fig_w_b.tex
\begin{tikzpicture}[x=0.42cm,y=0.42cm]
\fill[black!7] (0,0.55) rectangle (3,4+0.45); \node[font=\scriptsize] at (1.5,4+0.9) {$\sigma_2\sigma_1\sigma_2$};
\draw[blue!70!black,thick] (0,4) -- (1,4) -- (2,3) -- (3,2) -- (4,1) -- (5,1) -- (6,1);
\node[left,font=\scriptsize] at (0,4) {1};
\node[right,font=\scriptsize] at (6,1) {1};
\draw[red!70!black,thick] (0,3) -- (1,2) -- (2,2) -- (3,3) -- (4,3) -- (5,2) -- (6,2);
\node[left,font=\scriptsize] at (0,3) {2};
\node[right,font=\scriptsize] at (6,2) {2};
\draw[green!50!black,thick] (0,2) -- (1,3) -- (2,4) -- (3,4) -- (4,4) -- (5,4) -- (6,3);
\node[left,font=\scriptsize] at (0,2) {3};
\node[right,font=\scriptsize] at (6,3) {3};
\draw[black,thick] (0,1) -- (1,1) -- (2,1) -- (3,1) -- (4,2) -- (5,3) -- (6,4);
\node[left,font=\scriptsize] at (0,1) {4};
\node[right,font=\scriptsize] at (6,4) {4};
\end{tikzpicture}

%% file: fig_w_c.tex
\begin{tikzpicture}[x=0.42cm,y=0.42cm]
\draw[blue!70!black,thick] (0,3) -- (1,2) -- (2,1) -- (3,1);
\node[left,font=\scriptsize] at (0,3) {1};
\node[right,font=\scriptsize] at (3,1) {1};
\draw[red!70!black,thick] (0,2) -- (1,3) -- (2,3) -- (3,2);
\node[left,font=\scriptsize] at (0,2) {2};
\node[right,font=\scriptsize] at (3,2) {2};
\draw[green!50!black,thick] (0,1) -- (1,1) -- (2,2) -- (3,3);
\node[left,font=\scriptsize] at (0,1) {3};
\node[right,font=\scriptsize] at (3,3) {3};
\end{tikzpicture}

%% file: fig_c8.tex
\begin{tikzpicture}
\begin{scope}[shift={(0.0cm,0.0cm)},x=0.3cm,y=0.3cm]
\draw[blue!70!black,thick] (0,4) -- (1,3) -- (2,2) -- (3,2) -- (4,1) -- (5,1) -- (6,1);
\node[left,font=\scriptsize] at (0,4) {1};
\node[right,font=\scriptsize] at (6,1) {1};
\draw[red!70!black,thick] (0,3) -- (1,4) -- (2,4) -- (3,3) -- (4,3) -- (5,2) -- (6,2);
\node[left,font=\scriptsize] at (0,3) {2};
\node[right,font=\scriptsize] at (6,2) {2};
\draw[green!50!black,thick] (0,2) -- (1,2) -- (2,3) -- (3,4) -- (4,4) -- (5,4) -- (6,3);
\node[left,font=\scriptsize] at (0,2) {3};
\node[right,font=\scriptsize] at (6,3) {3};
\draw[black,thick] (0,1) -- (1,1) -- (2,1) -- (3,1) -- (4,2) -- (5,3) -- (6,4);
\node[left,font=\scriptsize] at (0,1) {4};
\node[right,font=\scriptsize] at (6,4) {4};
\node[font=\scriptsize] at (3,-0.3) {$v_0=\pl{++++}$};
\end{scope}
\begin{scope}[shift={(3.7cm,0.0cm)},x=0.3cm,y=0.3cm]
\draw[blue!70!black,thick] (0,4) -- (1,3) -- (2,2) -- (3,1) -- (4,1) -- (5,1) -- (6,1);
\node[left,font=\scriptsize] at (0,4) {1};
\node[right,font=\scriptsize] at (6,1) {1};
\draw[red!70!black,thick] (0,3) -- (1,4) -- (2,4) -- (3,4) -- (4,4) -- (5,3) -- (6,2);
\node[left,font=\scriptsize] at (0,3) {2};
\node[right,font=\scriptsize] at (6,2) {2};
\draw[green!50!black,thick] (0,2) -- (1,2) -- (2,3) -- (3,3) -- (4,2) -- (5,2) -- (6,3);
\node[left,font=\scriptsize] at (0,2) {3};
\node[right,font=\scriptsize] at (6,3) {3};
\draw[black,thick] (0,1) -- (1,1) -- (2,1) -- (3,2) -- (4,3) -- (5,4) -- (6,4);
\node[left,font=\scriptsize] at (0,1) {4};
\node[right,font=\scriptsize] at (6,4) {4};
\node[font=\scriptsize] at (3,-0.3) {$v_1=\pl{+++-{}}$};
\end{scope}
\begin{scope}[shift={(7.4cm,0.0cm)},x=0.3cm,y=0.3cm]
\draw[blue!70!black,thick] (0,4) -- (1,3) -- (2,3) -- (3,2) -- (4,2) -- (5,1) -- (6,1);
\node[left,font=\scriptsize] at (0,4) {1};
\node[right,font=\scriptsize] at (6,1) {1};
\draw[red!70!black,thick] (0,3) -- (1,4) -- (2,4) -- (3,4) -- (4,3) -- (5,3) -- (6,2);
\node[left,font=\scriptsize] at (0,3) {2};
\node[right,font=\scriptsize] at (6,2) {2};
\draw[green!50!black,thick] (0,2) -- (1,2) -- (2,1) -- (3,1) -- (4,1) -- (5,2) -- (6,3);
\node[left,font=\scriptsize] at (0,2) {3};
\node[right,font=\scriptsize] at (6,3) {3};
\draw[black,thick] (0,1) -- (1,1) -- (2,2) -- (3,3) -- (4,4) -- (5,4) -- (6,4);
\node[left,font=\scriptsize] at (0,1) {4};
\node[right,font=\scriptsize] at (6,4) {4};
\node[font=\scriptsize] at (3,-0.3) {$v_2=\pl{++-{}-{}}$};
\end{scope}
\begin{scope}[shift={(7.4cm,-2.7cm)},x=0.3cm,y=0.3cm]
\draw[blue!70!black,thick] (0,4) -- (1,4) -- (2,4) -- (3,3) -- (4,2) -- (5,1) -- (6,1);
\node[left,font=\scriptsize] at (0,4) {1};
\node[right,font=\scriptsize] at (6,1) {1};
\draw[red!70!black,thick] (0,3) -- (1,3) -- (2,2) -- (3,2) -- (4,3) -- (5,3) -- (6,2);
\node[left,font=\scriptsize] at (0,3) {2};
\node[right,font=\scriptsize] at (6,2) {2};
\draw[green!50!black,thick] (0,2) -- (1,1) -- (2,1) -- (3,1) -- (4,1) -- (5,2) -- (6,3);
\node[left,font=\scriptsize] at (0,2) {3};
\node[right,font=\scriptsize] at (6,3) {3};
\draw[black,thick] (0,1) -- (1,2) -- (2,3) -- (3,4) -- (4,4) -- (5,4) -- (6,4);
\node[left,font=\scriptsize] at (0,1) {4};
\node[right,font=\scriptsize] at (6,4) {4};
\node[font=\scriptsize] at (3,-0.3) {$v_3=\pl{+-{}-{}-{}}$};
\end{scope}
\begin{scope}[shift={(7.4cm,-5.4cm)},x=0.3cm,y=0.3cm]
\draw[blue!70!black,thick] (0,4) -- (1,4) -- (2,4) -- (3,3) -- (4,3) -- (5,2) -- (6,1);
\node[left,font=\scriptsize] at (0,4) {1};
\node[right,font=\scriptsize] at (6,1) {1};
\draw[red!70!black,thick] (0,3) -- (1,3) -- (2,2) -- (3,2) -- (4,1) -- (5,1) -- (6,2);
\node[left,font=\scriptsize] at (0,3) {2};
\node[right,font=\scriptsize] at (6,2) {2};
\draw[green!50!black,thick] (0,2) -- (1,1) -- (2,1) -- (3,1) -- (4,2) -- (5,3) -- (6,3);
\node[left,font=\scriptsize] at (0,2) {3};
\node[right,font=\scriptsize] at (6,3) {3};
\draw[black,thick] (0,1) -- (1,2) -- (2,3) -- (3,4) -- (4,4) -- (5,4) -- (6,4);
\node[left,font=\scriptsize] at (0,1) {4};
\node[right,font=\scriptsize] at (6,4) {4};
\node[font=\scriptsize] at (3,-0.3) {$v_4=\pl{-{}-{}-{}-{}}$};
\end{scope}
\begin{scope}[shift={(3.7cm,-5.4cm)},x=0.3cm,y=0.3cm]
\draw[blue!70!black,thick] (0,4) -- (1,4) -- (2,4) -- (3,4) -- (4,3) -- (5,2) -- (6,1);
\node[left,font=\scriptsize] at (0,4) {1};
\node[right,font=\scriptsize] at (6,1) {1};
\draw[red!70!black,thick] (0,3) -- (1,2) -- (2,1) -- (3,1) -- (4,1) -- (5,1) -- (6,2);
\node[left,font=\scriptsize] at (0,3) {2};
\node[right,font=\scriptsize] at (6,2) {2};
\draw[green!50!black,thick] (0,2) -- (1,3) -- (2,3) -- (3,2) -- (4,2) -- (5,3) -- (6,3);
\node[left,font=\scriptsize] at (0,2) {3};
\node[right,font=\scriptsize] at (6,3) {3};
\draw[black,thick] (0,1) -- (1,1) -- (2,2) -- (3,3) -- (4,4) -- (5,4) -- (6,4);
\node[left,font=\scriptsize] at (0,1) {4};
\node[right,font=\scriptsize] at (6,4) {4};
\node[font=\scriptsize] at (3,-0.3) {$v_5=\pl{-{}-{}-{}+}$};
\end{scope}
\begin{scope}[shift={(0.0cm,-5.4cm)},x=0.3cm,y=0.3cm]
\draw[blue!70!black,thick] (0,4) -- (1,4) -- (2,3) -- (3,3) -- (4,2) -- (5,2) -- (6,1);
\node[left,font=\scriptsize] at (0,4) {1};
\node[right,font=\scriptsize] at (6,1) {1};
\draw[red!70!black,thick] (0,3) -- (1,2) -- (2,2) -- (3,1) -- (4,1) -- (5,1) -- (6,2);
\node[left,font=\scriptsize] at (0,3) {2};
\node[right,font=\scriptsize] at (6,2) {2};
\draw[green!50!black,thick] (0,2) -- (1,3) -- (2,4) -- (3,4) -- (4,4) -- (5,3) -- (6,3);
\node[left,font=\scriptsize] at (0,2) {3};
\node[right,font=\scriptsize] at (6,3) {3};
\draw[black,thick] (0,1) -- (1,1) -- (2,1) -- (3,2) -- (4,3) -- (5,4) -- (6,4);
\node[left,font=\scriptsize] at (0,1) {4};
\node[right,font=\scriptsize] at (6,4) {4};
\node[font=\scriptsize] at (3,-0.3) {$v_6=\pl{-{}-{}++}$};
\end{scope}
\begin{scope}[shift={(0.0cm,-2.7cm)},x=0.3cm,y=0.3cm]
\draw[blue!70!black,thick] (0,4) -- (1,4) -- (2,3) -- (3,2) -- (4,1) -- (5,1) -- (6,1);
\node[left,font=\scriptsize] at (0,4) {1};
\node[right,font=\scriptsize] at (6,1) {1};
\draw[red!70!black,thick] (0,3) -- (1,2) -- (2,2) -- (3,3) -- (4,3) -- (5,2) -- (6,2);
\node[left,font=\scriptsize] at (0,3) {2};
\node[right,font=\scriptsize] at (6,2) {2};
\draw[green!50!black,thick] (0,2) -- (1,3) -- (2,4) -- (3,4) -- (4,4) -- (5,4) -- (6,3);
\node[left,font=\scriptsize] at (0,2) {3};
\node[right,font=\scriptsize] at (6,3) {3};
\draw[black,thick] (0,1) -- (1,1) -- (2,1) -- (3,1) -- (4,2) -- (5,3) -- (6,4);
\node[left,font=\scriptsize] at (0,1) {4};
\node[right,font=\scriptsize] at (6,4) {4};
\node[font=\scriptsize] at (3,-0.3) {$v_7=\pl{-{}+++}$};
\end{scope}
\draw[black!45,thick] (2.25cm,0.75cm) -- (3.25cm,0.75cm) node[midway,above,font=\scriptsize,text=black] {$234$};
\draw[black!45,thick] (5.95cm,0.75cm) -- (6.95cm,0.75cm) node[midway,above,font=\scriptsize,text=black] {$134$};
\draw[black!45,thick] (8.3cm,-0.45cm) -- (8.3cm,-1.2500000000000002cm) node[midway,right,font=\scriptsize,text=black] {$124$};
\draw[black!45,thick] (8.3cm,-3.1500000000000004cm) -- (8.3cm,-3.95cm) node[midway,right,font=\scriptsize,text=black] {$123$};
\draw[black!45,thick] (5.95cm,-4.65cm) -- (6.95cm,-4.65cm) node[midway,above,font=\scriptsize,text=black] {$234$};
\draw[black!45,thick] (2.25cm,-4.65cm) -- (3.25cm,-4.65cm) node[midway,above,font=\scriptsize,text=black] {$134$};
\draw[black!45,thick] (0.8999999999999999cm,-3.1500000000000004cm) -- (0.8999999999999999cm,-3.95cm) node[midway,right,font=\scriptsize,text=black] {$124$};
\draw[black!45,thick] (0.8999999999999999cm,-0.45cm) -- (0.8999999999999999cm,-1.2500000000000002cm) node[midway,right,font=\scriptsize,text=black] {$123$};
\node at (4.6cm,-1.9500000000000002cm) {$\B(4,2)$};
\end{tikzpicture}

%% file: fig_kappa.tex
\begin{tikzpicture}[x=0.62cm,y=0.5cm]
\fill[black!6] (4,-0.45) rectangle (6,6+0.45);
\fill[black!6] (10,-0.45) rectangle (14,6+0.45);
\draw[blue!70!black,thick] (0,6) -- (1,6) -- (2,6) -- (3,6) -- (4,5) -- (5,5) -- (6,5) -- (7,5) -- (8,5) -- (9,5) -- (10,4) -- (11,4) -- (12,4) -- (13,4) -- (14,3) -- (15,2);
\node[left,font=\scriptsize] at (0,6) {$u_1$};
\node[right,font=\scriptsize] at (15,2) {$u_1$};
\draw[blue!70!black,thick] (0,5) -- (1,5) -- (2,5) -- (3,4) -- (4,4) -- (5,4) -- (6,3) -- (7,3) -- (8,2) -- (9,2) -- (10,2) -- (11,2) -- (12,1) -- (13,1) -- (14,1) -- (15,1);
\node[left,font=\scriptsize] at (0,5) {$u_2$};
\node[right,font=\scriptsize] at (15,1) {$u_2$};
\draw[blue!70!black,thick] (0,4) -- (1,4) -- (2,3) -- (3,3) -- (4,3) -- (5,2) -- (6,2) -- (7,1) -- (8,1) -- (9,1) -- (10,1) -- (11,0) -- (12,0) -- (13,0) -- (14,0) -- (15,0);
\node[left,font=\scriptsize] at (0,4) {$u_3$};
\node[right,font=\scriptsize] at (15,0) {$u_3$};
\draw[black,thick] (0,3) -- (1,2) -- (2,2) -- (3,2) -- (4,2) -- (5,3) -- (6,4) -- (7,4) -- (8,4) -- (9,3) -- (10,3) -- (11,3) -- (12,3) -- (13,2) -- (14,2) -- (15,3);
\node[left,font=\scriptsize] at (0,3) {$p$};
\node[right,font=\scriptsize] at (15,3) {$p$};
\draw[red!70!black,thick] (0,2) -- (1,3) -- (2,4) -- (3,5) -- (4,6) -- (5,6) -- (6,6) -- (7,6) -- (8,6) -- (9,6) -- (10,6) -- (11,6) -- (12,6) -- (13,6) -- (14,6) -- (15,6);
\node[left,font=\scriptsize] at (0,2) {$w_1$};
\node[right,font=\scriptsize] at (15,6) {$w_1$};
\draw[red!70!black,thick] (0,1) -- (1,1) -- (2,1) -- (3,1) -- (4,1) -- (5,1) -- (6,1) -- (7,2) -- (8,3) -- (9,4) -- (10,5) -- (11,5) -- (12,5) -- (13,5) -- (14,5) -- (15,5);
\node[left,font=\scriptsize] at (0,1) {$w_2$};
\node[right,font=\scriptsize] at (15,5) {$w_2$};
\draw[red!70!black,thick] (0,0) -- (1,0) -- (2,0) -- (3,0) -- (4,0) -- (5,0) -- (6,0) -- (7,0) -- (8,0) -- (9,0) -- (10,0) -- (11,1) -- (12,2) -- (13,3) -- (14,4) -- (15,4);
\node[left,font=\scriptsize] at (0,0) {$w_3$};
\node[right,font=\scriptsize] at (15,4) {$w_3$};
\node[circle,draw,fill=white,inner sep=0.6pt,font=\tiny] at (0.5,2.5) {1};
\node[circle,draw,fill=white,inner sep=0.6pt,font=\tiny] at (4.5,2.5) {2};
\node[circle,draw,fill=white,inner sep=0.6pt,font=\tiny] at (5.5,3.5) {3};
\node[circle,draw,fill=white,inner sep=0.6pt,font=\tiny] at (8.5,3.5) {4};
\node[circle,draw,fill=white,inner sep=0.6pt,font=\tiny] at (12.5,2.5) {5};
\node[circle,draw,fill=white,inner sep=0.6pt,font=\tiny] at (14.5,2.5) {6};
\node[font=\scriptsize] at (2.0,6+1.0) {$\sigma_4\sigma_3\sigma_2\sigma_1$};
\node[font=\scriptsize] at (5.0,6+1.0) {$\sigma_4\sigma_3$};
\node[font=\scriptsize] at (8.0,6+1.0) {$\sigma_5\sigma_4\sigma_3\sigma_2$};
\node[font=\scriptsize] at (12.0,6+1.0) {$\sigma_6\sigma_5\sigma_4\sigma_3$};
\node[font=\scriptsize] at (14.5,6+1.0) {$\sigma_4$};
\end{tikzpicture}

%% file: fig_gs.tex
\begin{tikzpicture}[x=0.42cm,y=0.42cm]
\fill[black!7] (0,0.55) rectangle (3,7+0.45); \node[font=\scriptsize] at (1.5,7+0.9) {$\beta_1$};
\fill[black!7] (3,0.55) rectangle (6,7+0.45); \node[font=\scriptsize] at (4.5,7+0.9) {$\beta_5$};
\draw[blue!70!black,thick] (0,7) -- (1,6) -- (2,5) -- (3,5) -- (4,5) -- (5,5) -- (6,5) -- (7,5) -- (8,4) -- (9,4) -- (10,4) -- (11,3) -- (12,3) -- (13,2) -- (14,2) -- (15,2) -- (16,2) -- (17,1) -- (18,1) -- (19,1) -- (20,1) -- (21,1);
\node[left,font=\scriptsize] at (0,7) {1};
\node[right,font=\scriptsize] at (21,1) {1};
\draw[blue!70!black,thick] (0,6) -- (1,7) -- (2,7) -- (3,6) -- (4,6) -- (5,6) -- (6,6) -- (7,6) -- (8,6) -- (9,5) -- (10,5) -- (11,5) -- (12,4) -- (13,4) -- (14,3) -- (15,3) -- (16,3) -- (17,3) -- (18,2) -- (19,2) -- (20,2) -- (21,2);
\node[left,font=\scriptsize] at (0,6) {2};
\node[right,font=\scriptsize] at (21,2) {2};
\draw[blue!70!black,thick] (0,5) -- (1,5) -- (2,6) -- (3,7) -- (4,7) -- (5,7) -- (6,7) -- (7,7) -- (8,7) -- (9,7) -- (10,6) -- (11,6) -- (12,6) -- (13,6) -- (14,6) -- (15,6) -- (16,5) -- (17,5) -- (18,5) -- (19,5) -- (20,4) -- (21,3);
\node[left,font=\scriptsize] at (0,5) {3};
\node[right,font=\scriptsize] at (21,3) {3};
\draw[black,thick] (0,4) -- (1,4) -- (2,4) -- (3,4) -- (4,4) -- (5,4) -- (6,4) -- (7,3) -- (8,3) -- (9,3) -- (10,3) -- (11,4) -- (12,5) -- (13,5) -- (14,5) -- (15,4) -- (16,4) -- (17,4) -- (18,4) -- (19,3) -- (20,3) -- (21,4);
\node[left,font=\scriptsize] at (0,4) {4};
\node[right,font=\scriptsize] at (21,4) {4};
\draw[red!70!black,thick] (0,3) -- (1,3) -- (2,3) -- (3,3) -- (4,2) -- (5,1) -- (6,1) -- (7,1) -- (8,1) -- (9,1) -- (10,1) -- (11,1) -- (12,1) -- (13,1) -- (14,1) -- (15,1) -- (16,1) -- (17,2) -- (18,3) -- (19,4) -- (20,5) -- (21,5);
\node[left,font=\scriptsize] at (0,3) {5};
\node[right,font=\scriptsize] at (21,5) {5};
\draw[red!70!black,thick] (0,2) -- (1,2) -- (2,2) -- (3,2) -- (4,3) -- (5,3) -- (6,2) -- (7,2) -- (8,2) -- (9,2) -- (10,2) -- (11,2) -- (12,2) -- (13,3) -- (14,4) -- (15,5) -- (16,6) -- (17,6) -- (18,6) -- (19,6) -- (20,6) -- (21,6);
\node[left,font=\scriptsize] at (0,2) {6};
\node[right,font=\scriptsize] at (21,6) {6};
\draw[red!70!black,thick] (0,1) -- (1,1) -- (2,1) -- (3,1) -- (4,1) -- (5,2) -- (6,3) -- (7,4) -- (8,5) -- (9,6) -- (10,7) -- (11,7) -- (12,7) -- (13,7) -- (14,7) -- (15,7) -- (16,7) -- (17,7) -- (18,7) -- (19,7) -- (20,7) -- (21,7);
\node[left,font=\scriptsize] at (0,1) {7};
\node[right,font=\scriptsize] at (21,7) {7};
\end{tikzpicture}

%% file: fig_gv.tex
\begin{tikzpicture}[x=0.42cm,y=0.42cm]
\fill[black!7] (15,0.55) rectangle (18,7+0.45); \node[font=\scriptsize] at (16.5,7+0.9) {$\beta_1$};
\fill[black!7] (18,0.55) rectangle (21,7+0.45); \node[font=\scriptsize] at (19.5,7+0.9) {$\beta_5$};
\draw[blue!70!black,thick] (0,7) -- (1,7) -- (2,7) -- (3,7) -- (4,6) -- (5,6) -- (6,6) -- (7,6) -- (8,6) -- (9,6) -- (10,5) -- (11,5) -- (12,5) -- (13,5) -- (14,4) -- (15,3) -- (16,3) -- (17,3) -- (18,3) -- (19,2) -- (20,1) -- (21,1);
\node[left,font=\scriptsize] at (0,7) {1};
\node[right,font=\scriptsize] at (21,1) {1};
\draw[blue!70!black,thick] (0,6) -- (1,6) -- (2,6) -- (3,5) -- (4,5) -- (5,5) -- (6,4) -- (7,4) -- (8,3) -- (9,3) -- (10,3) -- (11,3) -- (12,2) -- (13,2) -- (14,2) -- (15,2) -- (16,2) -- (17,2) -- (18,2) -- (19,3) -- (20,3) -- (21,2);
\node[left,font=\scriptsize] at (0,6) {2};
\node[right,font=\scriptsize] at (21,2) {2};
\draw[blue!70!black,thick] (0,5) -- (1,5) -- (2,4) -- (3,4) -- (4,4) -- (5,3) -- (6,3) -- (7,2) -- (8,2) -- (9,2) -- (10,2) -- (11,1) -- (12,1) -- (13,1) -- (14,1) -- (15,1) -- (16,1) -- (17,1) -- (18,1) -- (19,1) -- (20,2) -- (21,3);
\node[left,font=\scriptsize] at (0,5) {3};
\node[right,font=\scriptsize] at (21,3) {3};
\draw[black,thick] (0,4) -- (1,3) -- (2,3) -- (3,3) -- (4,3) -- (5,4) -- (6,5) -- (7,5) -- (8,5) -- (9,4) -- (10,4) -- (11,4) -- (12,4) -- (13,3) -- (14,3) -- (15,4) -- (16,4) -- (17,4) -- (18,4) -- (19,4) -- (20,4) -- (21,4);
\node[left,font=\scriptsize] at (0,4) {4};
\node[right,font=\scriptsize] at (21,4) {4};
\draw[red!70!black,thick] (0,3) -- (1,4) -- (2,5) -- (3,6) -- (4,7) -- (5,7) -- (6,7) -- (7,7) -- (8,7) -- (9,7) -- (10,7) -- (11,7) -- (12,7) -- (13,7) -- (14,7) -- (15,7) -- (16,6) -- (17,5) -- (18,5) -- (19,5) -- (20,5) -- (21,5);
\node[left,font=\scriptsize] at (0,3) {5};
\node[right,font=\scriptsize] at (21,5) {5};
\draw[red!70!black,thick] (0,2) -- (1,2) -- (2,2) -- (3,2) -- (4,2) -- (5,2) -- (6,2) -- (7,3) -- (8,4) -- (9,5) -- (10,6) -- (11,6) -- (12,6) -- (13,6) -- (14,6) -- (15,6) -- (16,7) -- (17,7) -- (18,6) -- (19,6) -- (20,6) -- (21,6);
\node[left,font=\scriptsize] at (0,2) {6};
\node[right,font=\scriptsize] at (21,6) {6};
\draw[red!70!black,thick] (0,1) -- (1,1) -- (2,1) -- (3,1) -- (4,1) -- (5,1) -- (6,1) -- (7,1) -- (8,1) -- (9,1) -- (10,1) -- (11,2) -- (12,3) -- (13,4) -- (14,5) -- (15,5) -- (16,5) -- (17,6) -- (18,7) -- (19,7) -- (20,7) -- (21,7);
\node[left,font=\scriptsize] at (0,1) {7};
\node[right,font=\scriptsize] at (21,7) {7};
\end{tikzpicture}

%% file: fig_p3_phi.tex
\begin{tikzpicture}[x=0.9cm,y=0.5cm]
\fill[black!8] (0,-0.45) rectangle (3,4+0.45); \node[font=\scriptsize] at (1.5,4+0.95) {cell $e=12$};
\fill[black!8] (7,-0.45) rectangle (10,4+0.45); \node[font=\scriptsize] at (8.5,4+0.95) {cell $e=23$};
\draw[blue!70!black,thick] (0,4) -- (1,4) -- (2,3) -- (3,2) -- (4,1) -- (5,1) -- (6,0) -- (7,0) -- (8,0) -- (9,0) -- (10,0);
\node[left,font=\scriptsize] at (0,4) {$X_1$};
\node[right,font=\scriptsize] at (10,0) {$X_1$};
\draw[black,thick] (0,3) -- (1,2) -- (2,2) -- (3,3) -- (4,3) -- (5,2) -- (6,2) -- (7,1) -- (8,1) -- (9,1) -- (10,1);
\node[left,font=\scriptsize] at (0,3) {$p_{12}$};
\node[right,font=\scriptsize] at (10,1) {$p_{12}$};
\draw[red!70!black,thick] (0,2) -- (1,3) -- (2,4) -- (3,4) -- (4,4) -- (5,4) -- (6,4) -- (7,4) -- (8,4) -- (9,3) -- (10,2);
\node[left,font=\scriptsize] at (0,2) {$X_2$};
\node[right,font=\scriptsize] at (10,2) {$X_2$};
\draw[black!55,thick] (0,1) -- (1,1) -- (2,1) -- (3,1) -- (4,2) -- (5,3) -- (6,3) -- (7,3) -- (8,2) -- (9,2) -- (10,3);
\node[left,font=\scriptsize] at (0,1) {$p_{23}$};
\node[right,font=\scriptsize] at (10,3) {$p_{23}$};
\draw[green!50!black,thick] (0,0) -- (1,0) -- (2,0) -- (3,0) -- (4,0) -- (5,0) -- (6,1) -- (7,2) -- (8,3) -- (9,4) -- (10,4);
\node[left,font=\scriptsize] at (0,0) {$X_3$};
\node[right,font=\scriptsize] at (10,4) {$X_3$};
\end{tikzpicture}

%% file: fig_p3_blowup.tex
\begin{tikzpicture}[x=0.27cm,y=0.36cm]
\fill[black!8] (0,-0.45) rectangle (9,10+0.45); \node[font=\scriptsize] at (4.5,10+0.95) {beads};
\fill[black!8] (9,-0.45) rectangle (24,10+0.45); \node[font=\scriptsize] at (16.5,10+0.95) {$\kappa$ of $12$};
\fill[black!8] (40,-0.45) rectangle (55,10+0.45); \node[font=\scriptsize] at (47.5,10+0.95) {$\kappa$ of $23$};
\draw[blue!70!black,thick] (0,10) -- (1,9) -- (2,8) -- (3,8) -- (4,8) -- (5,8) -- (6,8) -- (7,8) -- (8,8) -- (9,8) -- (10,8) -- (11,7) -- (12,7) -- (13,7) -- (14,6) -- (15,6) -- (16,5) -- (17,5) -- (18,5) -- (19,5) -- (20,4) -- (21,4) -- (22,4) -- (23,4) -- (24,4) -- (25,3) -- (26,3) -- (27,3) -- (28,3) -- (29,2) -- (30,2) -- (31,2) -- (32,1) -- (33,1) -- (34,1) -- (35,0) -- (36,0) -- (37,0) -- (38,0) -- (39,0) -- (40,0) -- (41,0) -- (42,0) -- (43,0) -- (44,0) -- (45,0) -- (46,0) -- (47,0) -- (48,0) -- (49,0) -- (50,0) -- (51,0) -- (52,0) -- (53,0) -- (54,0) -- (55,0);
\node[left,font=\scriptsize] at (0,10) {1};
\node[right,font=\scriptsize] at (55,0) {1};
\draw[blue!70!black,thick] (0,9) -- (1,10) -- (2,10) -- (3,9) -- (4,9) -- (5,9) -- (6,9) -- (7,9) -- (8,9) -- (9,9) -- (10,9) -- (11,9) -- (12,8) -- (13,8) -- (14,8) -- (15,7) -- (16,7) -- (17,6) -- (18,6) -- (19,6) -- (20,6) -- (21,5) -- (22,5) -- (23,5) -- (24,5) -- (25,5) -- (26,4) -- (27,4) -- (28,4) -- (29,4) -- (30,3) -- (31,3) -- (32,3) -- (33,2) -- (34,2) -- (35,2) -- (36,1) -- (37,1) -- (38,1) -- (39,1) -- (40,1) -- (41,1) -- (42,1) -- (43,1) -- (44,1) -- (45,1) -- (46,1) -- (47,1) -- (48,1) -- (49,1) -- (50,1) -- (51,1) -- (52,1) -- (53,1) -- (54,1) -- (55,1);
\node[left,font=\scriptsize] at (0,9) {2};
\node[right,font=\scriptsize] at (55,1) {2};
\draw[blue!70!black,thick] (0,8) -- (1,8) -- (2,9) -- (3,10) -- (4,10) -- (5,10) -- (6,10) -- (7,10) -- (8,10) -- (9,10) -- (10,10) -- (11,10) -- (12,10) -- (13,9) -- (14,9) -- (15,9) -- (16,9) -- (17,9) -- (18,9) -- (19,8) -- (20,8) -- (21,8) -- (22,8) -- (23,7) -- (24,6) -- (25,6) -- (26,6) -- (27,5) -- (28,5) -- (29,5) -- (30,5) -- (31,4) -- (32,4) -- (33,4) -- (34,3) -- (35,3) -- (36,3) -- (37,2) -- (38,2) -- (39,2) -- (40,2) -- (41,2) -- (42,2) -- (43,2) -- (44,2) -- (45,2) -- (46,2) -- (47,2) -- (48,2) -- (49,2) -- (50,2) -- (51,2) -- (52,2) -- (53,2) -- (54,2) -- (55,2);
\node[left,font=\scriptsize] at (0,8) {3};
\node[right,font=\scriptsize] at (55,2) {3};
\draw[black,thick] (0,7) -- (1,7) -- (2,7) -- (3,7) -- (4,7) -- (5,7) -- (6,7) -- (7,7) -- (8,7) -- (9,7) -- (10,6) -- (11,6) -- (12,6) -- (13,6) -- (14,7) -- (15,8) -- (16,8) -- (17,8) -- (18,7) -- (19,7) -- (20,7) -- (21,7) -- (22,6) -- (23,6) -- (24,7) -- (25,7) -- (26,7) -- (27,7) -- (28,6) -- (29,6) -- (30,6) -- (31,6) -- (32,6) -- (33,6) -- (34,6) -- (35,6) -- (36,6) -- (37,6) -- (38,5) -- (39,4) -- (40,3) -- (41,3) -- (42,3) -- (43,3) -- (44,3) -- (45,3) -- (46,3) -- (47,3) -- (48,3) -- (49,3) -- (50,3) -- (51,3) -- (52,3) -- (53,3) -- (54,3) -- (55,3);
\node[left,font=\scriptsize] at (0,7) {4};
\node[right,font=\scriptsize] at (55,3) {4};
\draw[red!70!black,thick] (0,6) -- (1,6) -- (2,6) -- (3,6) -- (4,5) -- (5,4) -- (6,4) -- (7,4) -- (8,4) -- (9,4) -- (10,4) -- (11,4) -- (12,4) -- (13,4) -- (14,4) -- (15,4) -- (16,4) -- (17,4) -- (18,4) -- (19,4) -- (20,5) -- (21,6) -- (22,7) -- (23,8) -- (24,8) -- (25,8) -- (26,8) -- (27,8) -- (28,8) -- (29,8) -- (30,8) -- (31,8) -- (32,8) -- (33,8) -- (34,8) -- (35,8) -- (36,8) -- (37,8) -- (38,8) -- (39,8) -- (40,8) -- (41,8) -- (42,7) -- (43,7) -- (44,7) -- (45,6) -- (46,6) -- (47,5) -- (48,5) -- (49,5) -- (50,5) -- (51,4) -- (52,4) -- (53,4) -- (54,4) -- (55,4);
\node[left,font=\scriptsize] at (0,6) {5};
\node[right,font=\scriptsize] at (55,4) {5};
\draw[red!70!black,thick] (0,5) -- (1,5) -- (2,5) -- (3,5) -- (4,6) -- (5,6) -- (6,5) -- (7,5) -- (8,5) -- (9,5) -- (10,5) -- (11,5) -- (12,5) -- (13,5) -- (14,5) -- (15,5) -- (16,6) -- (17,7) -- (18,8) -- (19,9) -- (20,9) -- (21,9) -- (22,9) -- (23,9) -- (24,9) -- (25,9) -- (26,9) -- (27,9) -- (28,9) -- (29,9) -- (30,9) -- (31,9) -- (32,9) -- (33,9) -- (34,9) -- (35,9) -- (36,9) -- (37,9) -- (38,9) -- (39,9) -- (40,9) -- (41,9) -- (42,9) -- (43,8) -- (44,8) -- (45,8) -- (46,7) -- (47,7) -- (48,6) -- (49,6) -- (50,6) -- (51,6) -- (52,5) -- (53,5) -- (54,5) -- (55,5);
\node[left,font=\scriptsize] at (0,5) {6};
\node[right,font=\scriptsize] at (55,5) {6};
\draw[red!70!black,thick] (0,4) -- (1,4) -- (2,4) -- (3,4) -- (4,4) -- (5,5) -- (6,6) -- (7,6) -- (8,6) -- (9,6) -- (10,7) -- (11,8) -- (12,9) -- (13,10) -- (14,10) -- (15,10) -- (16,10) -- (17,10) -- (18,10) -- (19,10) -- (20,10) -- (21,10) -- (22,10) -- (23,10) -- (24,10) -- (25,10) -- (26,10) -- (27,10) -- (28,10) -- (29,10) -- (30,10) -- (31,10) -- (32,10) -- (33,10) -- (34,10) -- (35,10) -- (36,10) -- (37,10) -- (38,10) -- (39,10) -- (40,10) -- (41,10) -- (42,10) -- (43,10) -- (44,9) -- (45,9) -- (46,9) -- (47,9) -- (48,9) -- (49,9) -- (50,8) -- (51,8) -- (52,8) -- (53,8) -- (54,7) -- (55,6);
\node[left,font=\scriptsize] at (0,4) {7};
\node[right,font=\scriptsize] at (55,6) {7};
\draw[black!55,thick] (0,3) -- (1,3) -- (2,3) -- (3,3) -- (4,3) -- (5,3) -- (6,3) -- (7,3) -- (8,3) -- (9,3) -- (10,3) -- (11,3) -- (12,3) -- (13,3) -- (14,3) -- (15,3) -- (16,3) -- (17,3) -- (18,3) -- (19,3) -- (20,3) -- (21,3) -- (22,3) -- (23,3) -- (24,3) -- (25,4) -- (26,5) -- (27,6) -- (28,7) -- (29,7) -- (30,7) -- (31,7) -- (32,7) -- (33,7) -- (34,7) -- (35,7) -- (36,7) -- (37,7) -- (38,7) -- (39,7) -- (40,7) -- (41,6) -- (42,6) -- (43,6) -- (44,6) -- (45,7) -- (46,8) -- (47,8) -- (48,8) -- (49,7) -- (50,7) -- (51,7) -- (52,7) -- (53,6) -- (54,6) -- (55,7);
\node[left,font=\scriptsize] at (0,3) {8};
\node[right,font=\scriptsize] at (55,7) {8};
\draw[green!50!black,thick] (0,2) -- (1,2) -- (2,2) -- (3,2) -- (4,2) -- (5,2) -- (6,2) -- (7,1) -- (8,0) -- (9,0) -- (10,0) -- (11,0) -- (12,0) -- (13,0) -- (14,0) -- (15,0) -- (16,0) -- (17,0) -- (18,0) -- (19,0) -- (20,0) -- (21,0) -- (22,0) -- (23,0) -- (24,0) -- (25,0) -- (26,0) -- (27,0) -- (28,0) -- (29,0) -- (30,0) -- (31,0) -- (32,0) -- (33,0) -- (34,0) -- (35,1) -- (36,2) -- (37,3) -- (38,3) -- (39,3) -- (40,4) -- (41,4) -- (42,4) -- (43,4) -- (44,4) -- (45,4) -- (46,4) -- (47,4) -- (48,4) -- (49,4) -- (50,4) -- (51,5) -- (52,6) -- (53,7) -- (54,8) -- (55,8);
\node[left,font=\scriptsize] at (0,2) {9};
\node[right,font=\scriptsize] at (55,8) {9};
\draw[green!50!black,thick] (0,1) -- (1,1) -- (2,1) -- (3,1) -- (4,1) -- (5,1) -- (6,1) -- (7,2) -- (8,2) -- (9,1) -- (10,1) -- (11,1) -- (12,1) -- (13,1) -- (14,1) -- (15,1) -- (16,1) -- (17,1) -- (18,1) -- (19,1) -- (20,1) -- (21,1) -- (22,1) -- (23,1) -- (24,1) -- (25,1) -- (26,1) -- (27,1) -- (28,1) -- (29,1) -- (30,1) -- (31,1) -- (32,2) -- (33,3) -- (34,4) -- (35,4) -- (36,4) -- (37,4) -- (38,4) -- (39,5) -- (40,5) -- (41,5) -- (42,5) -- (43,5) -- (44,5) -- (45,5) -- (46,5) -- (47,6) -- (48,7) -- (49,8) -- (50,9) -- (51,9) -- (52,9) -- (53,9) -- (54,9) -- (55,9);
\node[left,font=\scriptsize] at (0,1) {10};
\node[right,font=\scriptsize] at (55,9) {10};
\draw[green!50!black,thick] (0,0) -- (1,0) -- (2,0) -- (3,0) -- (4,0) -- (5,0) -- (6,0) -- (7,0) -- (8,1) -- (9,2) -- (10,2) -- (11,2) -- (12,2) -- (13,2) -- (14,2) -- (15,2) -- (16,2) -- (17,2) -- (18,2) -- (19,2) -- (20,2) -- (21,2) -- (22,2) -- (23,2) -- (24,2) -- (25,2) -- (26,2) -- (27,2) -- (28,2) -- (29,3) -- (30,4) -- (31,5) -- (32,5) -- (33,5) -- (34,5) -- (35,5) -- (36,5) -- (37,5) -- (38,6) -- (39,6) -- (40,6) -- (41,7) -- (42,8) -- (43,9) -- (44,10) -- (45,10) -- (46,10) -- (47,10) -- (48,10) -- (49,10) -- (50,10) -- (51,10) -- (52,10) -- (53,10) -- (54,10) -- (55,10);
\node[left,font=\scriptsize] at (0,0) {11};
\node[right,font=\scriptsize] at (55,10) {11};
\end{tikzpicture}

%% file: fig_p3_blowup_v.tex
\begin{tikzpicture}[x=0.27cm,y=0.36cm]
\fill[black!8] (0,-0.45) rectangle (15,10+0.45); \node[font=\scriptsize] at (7.5,10+0.95) {$\kappa$ of $12$};
\fill[black!8] (31,-0.45) rectangle (46,10+0.45); \node[font=\scriptsize] at (38.5,10+0.95) {$\kappa$ of $23$};
\fill[black!8] (46,-0.45) rectangle (55,10+0.45); \node[font=\scriptsize] at (50.5,10+0.95) {beads};
\draw[blue!70!black,thick] (0,10) -- (1,10) -- (2,10) -- (3,10) -- (4,9) -- (5,9) -- (6,9) -- (7,9) -- (8,9) -- (9,9) -- (10,8) -- (11,8) -- (12,8) -- (13,8) -- (14,7) -- (15,6) -- (16,6) -- (17,6) -- (18,5) -- (19,5) -- (20,5) -- (21,5) -- (22,4) -- (23,4) -- (24,4) -- (25,3) -- (26,3) -- (27,3) -- (28,2) -- (29,2) -- (30,2) -- (31,2) -- (32,2) -- (33,2) -- (34,2) -- (35,2) -- (36,2) -- (37,2) -- (38,2) -- (39,2) -- (40,2) -- (41,2) -- (42,2) -- (43,2) -- (44,2) -- (45,2) -- (46,2) -- (47,2) -- (48,2) -- (49,2) -- (50,2) -- (51,2) -- (52,2) -- (53,1) -- (54,0) -- (55,0);
\node[left,font=\scriptsize] at (0,10) {1};
\node[right,font=\scriptsize] at (55,0) {1};
\draw[blue!70!black,thick] (0,9) -- (1,9) -- (2,9) -- (3,8) -- (4,8) -- (5,8) -- (6,7) -- (7,7) -- (8,6) -- (9,6) -- (10,6) -- (11,6) -- (12,5) -- (13,5) -- (14,5) -- (15,5) -- (16,5) -- (17,4) -- (18,4) -- (19,4) -- (20,4) -- (21,3) -- (22,3) -- (23,3) -- (24,2) -- (25,2) -- (26,2) -- (27,1) -- (28,1) -- (29,1) -- (30,1) -- (31,1) -- (32,1) -- (33,1) -- (34,1) -- (35,1) -- (36,1) -- (37,1) -- (38,1) -- (39,1) -- (40,1) -- (41,1) -- (42,1) -- (43,1) -- (44,1) -- (45,1) -- (46,1) -- (47,1) -- (48,1) -- (49,1) -- (50,1) -- (51,1) -- (52,1) -- (53,2) -- (54,2) -- (55,1);
\node[left,font=\scriptsize] at (0,9) {2};
\node[right,font=\scriptsize] at (55,1) {2};
\draw[blue!70!black,thick] (0,8) -- (1,8) -- (2,7) -- (3,7) -- (4,7) -- (5,6) -- (6,6) -- (7,5) -- (8,5) -- (9,5) -- (10,5) -- (11,4) -- (12,4) -- (13,4) -- (14,4) -- (15,4) -- (16,3) -- (17,3) -- (18,3) -- (19,3) -- (20,2) -- (21,2) -- (22,2) -- (23,1) -- (24,1) -- (25,1) -- (26,0) -- (27,0) -- (28,0) -- (29,0) -- (30,0) -- (31,0) -- (32,0) -- (33,0) -- (34,0) -- (35,0) -- (36,0) -- (37,0) -- (38,0) -- (39,0) -- (40,0) -- (41,0) -- (42,0) -- (43,0) -- (44,0) -- (45,0) -- (46,0) -- (47,0) -- (48,0) -- (49,0) -- (50,0) -- (51,0) -- (52,0) -- (53,0) -- (54,1) -- (55,2);
\node[left,font=\scriptsize] at (0,8) {3};
\node[right,font=\scriptsize] at (55,2) {3};
\draw[black,thick] (0,7) -- (1,6) -- (2,6) -- (3,6) -- (4,6) -- (5,7) -- (6,8) -- (7,8) -- (8,8) -- (9,7) -- (10,7) -- (11,7) -- (12,7) -- (13,6) -- (14,6) -- (15,7) -- (16,7) -- (17,7) -- (18,7) -- (19,6) -- (20,6) -- (21,6) -- (22,6) -- (23,6) -- (24,6) -- (25,6) -- (26,6) -- (27,6) -- (28,6) -- (29,5) -- (30,4) -- (31,3) -- (32,3) -- (33,3) -- (34,3) -- (35,3) -- (36,3) -- (37,3) -- (38,3) -- (39,3) -- (40,3) -- (41,3) -- (42,3) -- (43,3) -- (44,3) -- (45,3) -- (46,3) -- (47,3) -- (48,3) -- (49,3) -- (50,3) -- (51,3) -- (52,3) -- (53,3) -- (54,3) -- (55,3);
\node[left,font=\scriptsize] at (0,7) {4};
\node[right,font=\scriptsize] at (55,3) {4};
\draw[red!70!black,thick] (0,6) -- (1,7) -- (2,8) -- (3,9) -- (4,10) -- (5,10) -- (6,10) -- (7,10) -- (8,10) -- (9,10) -- (10,10) -- (11,10) -- (12,10) -- (13,10) -- (14,10) -- (15,10) -- (16,10) -- (17,10) -- (18,10) -- (19,10) -- (20,10) -- (21,10) -- (22,10) -- (23,10) -- (24,10) -- (25,10) -- (26,10) -- (27,10) -- (28,10) -- (29,10) -- (30,10) -- (31,10) -- (32,10) -- (33,10) -- (34,10) -- (35,9) -- (36,9) -- (37,9) -- (38,9) -- (39,9) -- (40,9) -- (41,8) -- (42,8) -- (43,8) -- (44,8) -- (45,7) -- (46,6) -- (47,6) -- (48,6) -- (49,6) -- (50,5) -- (51,4) -- (52,4) -- (53,4) -- (54,4) -- (55,4);
\node[left,font=\scriptsize] at (0,6) {5};
\node[right,font=\scriptsize] at (55,4) {5};
\draw[red!70!black,thick] (0,5) -- (1,5) -- (2,5) -- (3,5) -- (4,5) -- (5,5) -- (6,5) -- (7,6) -- (8,7) -- (9,8) -- (10,9) -- (11,9) -- (12,9) -- (13,9) -- (14,9) -- (15,9) -- (16,9) -- (17,9) -- (18,9) -- (19,9) -- (20,9) -- (21,9) -- (22,9) -- (23,9) -- (24,9) -- (25,9) -- (26,9) -- (27,9) -- (28,9) -- (29,9) -- (30,9) -- (31,9) -- (32,9) -- (33,9) -- (34,8) -- (35,8) -- (36,8) -- (37,7) -- (38,7) -- (39,6) -- (40,6) -- (41,6) -- (42,6) -- (43,5) -- (44,5) -- (45,5) -- (46,5) -- (47,5) -- (48,5) -- (49,5) -- (50,6) -- (51,6) -- (52,5) -- (53,5) -- (54,5) -- (55,5);
\node[left,font=\scriptsize] at (0,5) {6};
\node[right,font=\scriptsize] at (55,5) {6};
\draw[red!70!black,thick] (0,4) -- (1,4) -- (2,4) -- (3,4) -- (4,4) -- (5,4) -- (6,4) -- (7,4) -- (8,4) -- (9,4) -- (10,4) -- (11,5) -- (12,6) -- (13,7) -- (14,8) -- (15,8) -- (16,8) -- (17,8) -- (18,8) -- (19,8) -- (20,8) -- (21,8) -- (22,8) -- (23,8) -- (24,8) -- (25,8) -- (26,8) -- (27,8) -- (28,8) -- (29,8) -- (30,8) -- (31,8) -- (32,8) -- (33,7) -- (34,7) -- (35,7) -- (36,6) -- (37,6) -- (38,5) -- (39,5) -- (40,5) -- (41,5) -- (42,4) -- (43,4) -- (44,4) -- (45,4) -- (46,4) -- (47,4) -- (48,4) -- (49,4) -- (50,4) -- (51,5) -- (52,6) -- (53,6) -- (54,6) -- (55,6);
\node[left,font=\scriptsize] at (0,4) {7};
\node[right,font=\scriptsize] at (55,6) {7};
\draw[black!55,thick] (0,3) -- (1,3) -- (2,3) -- (3,3) -- (4,3) -- (5,3) -- (6,3) -- (7,3) -- (8,3) -- (9,3) -- (10,3) -- (11,3) -- (12,3) -- (13,3) -- (14,3) -- (15,3) -- (16,4) -- (17,5) -- (18,6) -- (19,7) -- (20,7) -- (21,7) -- (22,7) -- (23,7) -- (24,7) -- (25,7) -- (26,7) -- (27,7) -- (28,7) -- (29,7) -- (30,7) -- (31,7) -- (32,6) -- (33,6) -- (34,6) -- (35,6) -- (36,7) -- (37,8) -- (38,8) -- (39,8) -- (40,7) -- (41,7) -- (42,7) -- (43,7) -- (44,6) -- (45,6) -- (46,7) -- (47,7) -- (48,7) -- (49,7) -- (50,7) -- (51,7) -- (52,7) -- (53,7) -- (54,7) -- (55,7);
\node[left,font=\scriptsize] at (0,3) {8};
\node[right,font=\scriptsize] at (55,7) {8};
\draw[green!50!black,thick] (0,2) -- (1,2) -- (2,2) -- (3,2) -- (4,2) -- (5,2) -- (6,2) -- (7,2) -- (8,2) -- (9,2) -- (10,2) -- (11,2) -- (12,2) -- (13,2) -- (14,2) -- (15,2) -- (16,2) -- (17,2) -- (18,2) -- (19,2) -- (20,3) -- (21,4) -- (22,5) -- (23,5) -- (24,5) -- (25,5) -- (26,5) -- (27,5) -- (28,5) -- (29,6) -- (30,6) -- (31,6) -- (32,7) -- (33,8) -- (34,9) -- (35,10) -- (36,10) -- (37,10) -- (38,10) -- (39,10) -- (40,10) -- (41,10) -- (42,10) -- (43,10) -- (44,10) -- (45,10) -- (46,10) -- (47,9) -- (48,8) -- (49,8) -- (50,8) -- (51,8) -- (52,8) -- (53,8) -- (54,8) -- (55,8);
\node[left,font=\scriptsize] at (0,2) {9};
\node[right,font=\scriptsize] at (55,8) {9};
\draw[green!50!black,thick] (0,1) -- (1,1) -- (2,1) -- (3,1) -- (4,1) -- (5,1) -- (6,1) -- (7,1) -- (8,1) -- (9,1) -- (10,1) -- (11,1) -- (12,1) -- (13,1) -- (14,1) -- (15,1) -- (16,1) -- (17,1) -- (18,1) -- (19,1) -- (20,1) -- (21,1) -- (22,1) -- (23,2) -- (24,3) -- (25,4) -- (26,4) -- (27,4) -- (28,4) -- (29,4) -- (30,5) -- (31,5) -- (32,5) -- (33,5) -- (34,5) -- (35,5) -- (36,5) -- (37,5) -- (38,6) -- (39,7) -- (40,8) -- (41,9) -- (42,9) -- (43,9) -- (44,9) -- (45,9) -- (46,9) -- (47,10) -- (48,10) -- (49,9) -- (50,9) -- (51,9) -- (52,9) -- (53,9) -- (54,9) -- (55,9);
\node[left,font=\scriptsize] at (0,1) {10};
\node[right,font=\scriptsize] at (55,9) {10};
\draw[green!50!black,thick] (0,0) -- (1,0) -- (2,0) -- (3,0) -- (4,0) -- (5,0) -- (6,0) -- (7,0) -- (8,0) -- (9,0) -- (10,0) -- (11,0) -- (12,0) -- (13,0) -- (14,0) -- (15,0) -- (16,0) -- (17,0) -- (18,0) -- (19,0) -- (20,0) -- (21,0) -- (22,0) -- (23,0) -- (24,0) -- (25,0) -- (26,1) -- (27,2) -- (28,3) -- (29,3) -- (30,3) -- (31,4) -- (32,4) -- (33,4) -- (34,4) -- (35,4) -- (36,4) -- (37,4) -- (38,4) -- (39,4) -- (40,4) -- (41,4) -- (42,5) -- (43,6) -- (44,7) -- (45,8) -- (46,8) -- (47,8) -- (48,9) -- (49,10) -- (50,10) -- (51,10) -- (52,10) -- (53,10) -- (54,10) -- (55,10);
\node[left,font=\scriptsize] at (0,0) {11};
\node[right,font=\scriptsize] at (55,10) {11};
\end{tikzpicture}

%% file: main.bbl
\addcontentsline{toc}{part}{References}
\begin{thebibliography}{99}
\bibitem{AMP} O.~Aichholzer, W.~Mulzer, A.~Pilz, Flip distance between triangulations of a simple polygon is
NP-complete, \emph{Discrete Comput.\ Geom.} 54 (2015) 368--389, \doi{10.1007/s00454-015-9709-7}.
\bibitem{AlimontiKann} P.~Alimonti, V.~Kann, Some APX-completeness results for cubic graphs,
\emph{Theoret.\ Comput.\ Sci.} 237 (2000) 123--134, \doi{10.1016/S0304-3975(98)00158-3}.
\bibitem{Bergold} H.~Bergold, \emph{Signotopes and Convex Drawings}, PhD thesis, Freie Universit\"at Berlin, 2023,
\doi{10.17169/refubium-42367}.
\bibitem{BFS} H.~Bergold, S.~Felsner, M.~Scheucher, An extension theorem for signotopes, \emph{Combin.\ Theory}
5(4) (2025) \#2, \doi{10.5070/C65465668}.
\bibitem{BLSWZ} A.~Bj\"orner, M.~Las Vergnas, B.~Sturmfels, N.~White, G.~M.~Ziegler, \emph{Oriented Matroids},
2nd ed., Encyclopedia Math.\ Appl.\ 46, Cambridge University Press, Cambridge, 1999,
\doi{10.1017/CBO9780511586507}.
\bibitem{BFR} O.~Bodini, T.~Fernique, \'E.~R\'emila, Distances on lozenge tilings, in: \emph{Discrete Geometry for
Computer Imagery (DGCI 2009)}, Lecture Notes in Comput.\ Sci.\ 5810, Springer, Berlin, 2009, 240--251,
\doi{10.1007/978-3-642-04397-0_21}.
\bibitem{BFRR} O.~Bodini, T.~Fernique, M.~Rao, \'E.~R\'emila, Distances on rhombus tilings, \emph{Theoret.\
Comput.\ Sci.} 412 (2011) 4787--4794, \doi{10.1016/j.tcs.2011.04.015}.
\bibitem{BokowskiRohlfs} J.~Bokowski, H.~Rohlfs, On a mutation problem for oriented matroids, \emph{European J.\
Combin.} 22 (2001) 617--626, \doi{10.1006/eujc.2000.0483}.
\bibitem{Dorfer} J.~Dorfer, Flip distance between triangulations of convex polygons is NP-complete,
arXiv:2602.22874 (2026).
\bibitem{DumitrescuMandal} A.~Dumitrescu, R.~Mandal, New lower bounds for the number of pseudoline arrangements,
in: \emph{Proceedings of the Thirtieth Annual ACM-SIAM Symposium on Discrete Algorithms (SODA)}, SIAM, Philadelphia,
2019, 410--425, \doi{10.1137/1.9781611975482.26}.
\bibitem{Elnitsky} S.~Elnitsky, Rhombic tilings of polygons and classes of reduced words in Coxeter groups,
\emph{J.\ Combin.\ Theory Ser.\ A} 77 (1997) 193--221, \doi{10.1006/jcta.1997.2723}.
\bibitem{FelsnerValtr} S.~Felsner, P.~Valtr, Coding and counting arrangements of pseudolines, \emph{Discrete
Comput.\ Geom.} 46 (2011) 405--416, \doi{10.1007/s00454-011-9366-4}.
\bibitem{FW00} S.~Felsner, H.~Weil, A theorem on higher Bruhat orders, \emph{Discrete Comput.\ Geom.} 23 (2000)
121--127, \doi{10.1007/PL00009485}.
\bibitem{FW} S.~Felsner, H.~Weil, Sweeps, arrangements and signotopes, \emph{Discrete Appl.\ Math.} 109 (2001)
67--94, \doi{10.1016/S0166-218X(00)00232-8}.
\bibitem{GMS} G.~Gutierres, R.~Mamede, J.~L.~Santos, Commutation classes of the reduced words for the longest
element of $\mathfrak S_n$, \emph{Electron.\ J.\ Combin.} 27(2) (2020) P2.21, \doi{10.37236/9481}.
\bibitem{GMS22} G.~Gutierres, R.~Mamede, J.~L.~Santos, Diameter of the commutation classes graph of a permutation,
\emph{European J.\ Combin.} 103 (2022) 103525, \doi{10.1016/j.ejc.2022.103525}.
\bibitem{GuKnauer} Q.~Gu, K.~Knauer, Neither simpliciality nor mutation connectivity: conjectures of Las
Vergnas and Cordovil--Las Vergnas fail, arXiv:2609.15850 (2026).
\bibitem{KV} M.~M.~Kapranov, V.~A.~Voevodsky, Combinatorial-geometric aspects of polycategory theory: pasting
schemes and higher Bruhat orders (list of results), \emph{Cahiers Topologie G\'eom.\ Diff\'erentielle Cat\'eg.} 32
(1991) 11--27.
\bibitem{Karp} R.~M.~Karp, Reducibility among combinatorial problems, in: R.~E.~Miller, J.~W.~Thatcher (eds.),
\emph{Complexity of Computer Computations}, Plenum Press, New York, 1972, 85--103,
\doi{10.1007/978-1-4684-2001-2_9}.
\bibitem{KnauerMarc} K.~Knauer, T.~Marc, Corners and simpliciality in oriented matroids and partial cubes,
\emph{European J.\ Combin.} 112 (2023) 103714, \doi{10.1016/j.ejc.2023.103714}.
\bibitem{LubiwPathak} A.~Lubiw, V.~Pathak, Flip distance between two triangulations of a point set is
NP-complete, \emph{Comput.\ Geom.} 49 (2015) 17--23, \doi{10.1016/j.comgeo.2014.11.001}.
\bibitem{MS} Yu.~I.~Manin, V.~V.~Schechtman, Arrangements of hyperplanes, higher braid groups and higher Bruhat
orders, in: J.~Coates, R.~Greenberg, B.~Mazur, I.~Satake (eds.), \emph{Algebraic Number Theory -- in honor of
K.~Iwasawa}, Adv.\ Stud.\ Pure Math.\ 17, Academic Press, Boston, 1989, 289--308, \doi{10.2969/aspm/01710289}.
\bibitem{Matsumoto} H.~Matsumoto, G\'en\'erateurs et relations des groupes de Weyl g\'en\'eralis\'es,
\emph{C.~R. Acad.\ Sci.\ Paris} 258 (1964) 3419--3422.
\bibitem{Miyata} H.~Miyata, On combinatorial properties of points and polynomial curves, arXiv:1703.04963
(2017).
\bibitem{Opatrny} J.~Opatrny, Total ordering problem, \emph{SIAM J.\ Comput.} 8 (1979) 111--114,
\doi{10.1137/0208008}.
\bibitem{Pilz} A.~Pilz, Flip distance between triangulations of a planar point set is APX-hard,
\emph{Comput.\ Geom.} 47 (2014) 589--604, \doi{10.1016/j.comgeo.2014.01.001}.
\bibitem{Radtke} Y.~Alves Radtke, S.~Felsner, J.~Obenaus, S.~Roch, M.~Scheucher, B.~Vogtenhuber, Flip graph
connectivity for arrangements of pseudolines and pseudocircles, in: \emph{Proceedings of the 2024 Annual ACM-SIAM
Symposium on Discrete Algorithms (SODA)}, SIAM, Philadelphia, 2024, 4849--4871,
\doi{10.1137/1.9781611977912.172}.
\bibitem{Rambau} J.~Rambau, Triangulations of cyclic polytopes and higher Bruhat orders, \emph{Mathematika} 44
(1997) 162--194, \doi{10.1112/S0025579300012055}.
\bibitem{ReinerRoichman} V.~Reiner, Y.~Roichman, Diameter of graphs of reduced words and galleries, \emph{Trans.\
Amer.\ Math.\ Soc.} 365 (2013) 2779--2802, \doi{10.1090/S0002-9947-2012-05719-9}.
\bibitem{Handbook} J.~Richter-Gebert, G.~M.~Ziegler, Oriented matroids, in: J.~E.~Goodman, J.~O'Rourke,
C.~D.~T\'oth (eds.), \emph{Handbook of Discrete and Computational Geometry}, 3rd ed., Chapman \& Hall/CRC, Boca
Raton, 2017, 159--184, \doi{10.1201/9781315119601}.
\bibitem{Ringel} G.~Ringel, Teilungen der Ebene durch Geraden oder topologische Geraden, \emph{Math.\ Z.} 64
(1956) 79--102, \doi{10.1007/BF01166556}.
\bibitem{RoudneffSturmfels} J.-P.~Roudneff, B.~Sturmfels, Simplicial cells in arrangements and mutations of
oriented matroids, \emph{Geom.\ Dedicata} 27 (1988) 153--170, \doi{10.1007/BF00151346}.
\bibitem{Tits} J.~Tits, Le probl\`eme des mots dans les groupes de Coxeter, in: \emph{Symposia Mathematica},
Vol.~1 (INDAM, Rome, 1967/68), Academic Press, London, 1969, 175--185.
\bibitem{Wilhelmi} M.~Wilhelmi, Mutations and (Non-)Euclideaness in oriented matroids, arXiv:2501.12951 (2025).
\bibitem{Ziegler} G.~M.~Ziegler, Higher Bruhat orders and cyclic hyperplane arrangements, \emph{Topology} 32
(1993) 259--279, \doi{10.1016/0040-9383(93)90019-R}.
\end{thebibliography}
